\pgfplotsset{compat=1.3}
\newcommand{\linewidthmod}{0.49\linewidth}
\DeclarePairedDelimiter\ceil{\lceil}{\rceil}
\DeclarePairedDelimiter\floor{\lfloor}{\rfloor}
\DeclarePairedDelimiter\ceilfloor{\lfloor}{\rceil}
\newcommand\round[1]{\left[#1\right]}
\theoremstyle{definition}
\newtheorem{theorem}{Theorem}
\newtheorem{remark}{Remark}
\newtheorem{proposition}{Proposition}
\newtheorem{lemma}{Lemma}
\newtheorem{corollary}{Corollary}
\begin{document}

\title{Information-Theoretic Study of Time-Domain Energy-Saving Techniques in Radio Access}

\author{Fran\c{c}ois Rottenberg\vspace{-4em}
	\thanks{Fran\c{c}ois Rottenberg is with ESAT-DRAMCO, Ghent Technology Campus, KU Leuven, 9000 Ghent, Belgium (e-mail: francois.rottenberg@kuleuven.be).}
}

\markboth{DRAFT}%
{}
%



\maketitle

\begin{abstract}
	\vspace{-1em}
    Reduction of wireless network energy consumption is becoming increasingly important to reduce environmental footprint and operational costs. A key concept to achieve it is the use of lean transmission techniques that dynamically (de)activate hardware resources as a function of the load. In this paper, we propose a pioneering information-theoretic study of time-domain energy-saving techniques, relying on a practical hardware power consumption model of sleep and active modes. By minimizing the power consumption under a quality of service constraint (rate, latency), we propose simple yet powerful techniques to allocate power and choose which resources to activate or to put in sleep mode. Power consumption scaling regimes are identified. We show that a ``rush-to-sleep" approach (maximal power in fewest symbols followed by sleep) is only optimal in a high noise regime. It is shown how consumption can be made linear with the load and achieve massive energy reduction (factor of 10) at low-to-medium load. The trade-off between \gls{ee} and \gls{se} is also characterized, followed by a multi-user study based on \gls{tdma}.
\end{abstract}

\begin{IEEEkeywords}
	\vspace{-1em}
    Energy consumption, radio access technologies, physical layer, channel capacity. \vspace{-1em}
\end{IEEEkeywords}

%
\IEEEpeerreviewmaketitle


\section{Introduction}

\subsection{Motivation}

As of 2022, yearly data volume has gone up to more than 3 Zettabytes ($10^{21}$ bytes) and the traffic continues to rise at a rate of about 25\%/year~\cite{malmodin_power_2020}. Energy efficiency has improved over the years but not fast enough, which results in an annual energy consumption growth of 2.5\% for the ICT sector~\cite{malmodin_power_2020}. It is becoming increasingly important to reduce energy consumption of wireless communication networks to reach climate ambitions and reduce operational expenses, in other words, ``break the energy curve"~\cite{ekudden_breaking_2020,Andersson2022}. Most energy of wireless networks is consumed by the \gls{ran} and more specifically at base stations~\cite{gruber_earth_2009,3gpp.38.864}. Moreover, the traffic load at a base station is highly varying across the day and most often lightly loaded, with traffic at night being about 10 times lower than during the day~\cite{huawei_green_5G_white_paper}. This opens a big potential for energy reduction through the use of lean transmission techniques that dynamically activate or deactivate resources as a function of the load, letting the system dynamically switch from fully active to deep sleep mode.

\subsection{State of the Art}

Energy-saving techniques are a popular topic. A large effort has been made to integrate these techniques into industrial products and standards. The 5G standard was for instance designed with a lean paradigm in mind which resulted in, \textit{e.g.}, less reference signaling to increase sleep duration~\cite{dahlman20205g,Lopez2022,zhang_fundamental_2017}. At low-to-medium load, a popular scheduling technique is a ``rush-to-sleep" approach which compacts transmission in as few symbols as possible. These symbols are transmitted at maximal power, leaving the remaining symbols in the frame free, so that the sleep duration is maximized. Many studies have been performed to evaluate the gains of such techniques based on standardized power models~\cite{3gpp.38.864}, system-level evaluations~\cite{tombaz_energy_2015,lahdekorpi_energy_2017,matalatala_simulations_2018,frenger_energy_2019,frenger_more_2019} and aided by actual measurements~\cite{golard_evaluation_2022}. The use of machine learning was also identified as an interesting tool to predict traffic and/or to optimize energy-saving features~{\cite{salem_traffic-aware_2019,piovesan_forecasting_2021}. We refer to~\cite{Lopez2022} for a review of these techniques.


Despite much work in the domain, there remains a fundamental gap to be filled by establishing an information-theoretic study of time-domain energy-saving techniques, even for basic systems such as \gls{siso} transceivers. Going back to the underlying physics of energy consumption of base stations and with a proper mathematical formulation of the communication link, a lot of additional understandings and improvements can be obtained: optimization of algorithms, finding optimal power scaling regimes as a function of load, guarantees of optimality for energy-saving features and/or finding the gap from it with existing techniques... As an example, it is not clear if or when a rush-to-sleep approach is optimal or not. We should mention that many recent works have performed this kind of energy-saving studies but they have focused on the spatial domain and more specifically the optimal operation of massive MIMO systems (number of active antennas, served users, power allocation) as a function of the load~\cite{cheng_massive_2015,senel_joint_2019,bjornson_optimal_2015,marinello_antenna_2020,peschiera_linear_2022}. On the other hand, information-theoretic time-domain studies are lacking. The fundamental studies on energy-efficient communications have mainly focused on a stationary transmission of symbols in time at a constant rate and average transmit power $P_{\mathrm{T}}$~\cite{budzisz_dynamic_2014,chen_fundamental_2011,jingon_joung_spectral_2014,wu_green_2016,wu_overview_2017}. Considering an ideal consumption model and a quasi-static channel, this choice seems intuitive. To clarify it, let us formalize the problem. The channel capacity of a complex discrete memoryless \gls{awgn} channel, under an average transmit power constraint $P_{\mathrm{T}}$, is
\begin{align*}
	R&=\log_2\left(1+\frac{P_{\mathrm{T}}}{L\sigma_n^2}\right)=\log_2\left(1+\frac{P_{\mathrm{T}}}{{\sigma}^2}\right)\ \text{[bits/channel use]},
\end{align*}
where $\sigma^2=\sigma_n^2 L$ is the noise power at the receiver $\sigma_n^2$ normalized by the path loss $L$. If the transmission is divided in frames of $N$ symbols, the average rate and transmit power are
\begin{align*}
	R&=\frac{1}{N}\sum_{n=0}^{N-1}\log_2\left(1+\frac{p_n}{\sigma^2}\right),\ P_{\mathrm{T}}=\frac{1}{N}\sum_{n=0}^{N-1}p_n
\end{align*}
where $p_n$ is the transmit power of the $n$-th symbol. Considering a rate constraint $R$, let us find the power allocation that minimizes the consumed power $P_{\mathrm{cons}}$. Under an ideal consumption model, we have $P_{\mathrm{cons}}^{\mathrm{ideal}}=P_{\mathrm{T}}$ and the problem can be written as\footnote{This problem can be seen as a conventional waterfilling problem where water/noise levels are the same at each time slot.}
\begin{align*}
	\min_{p_0,...,p_{N-1}} \frac{1}{N}\sum_{n=0}^{N-1}p_n \ \text{s.t.}\ \frac{1}{N}\sum_{n=0}^{N-1}\log_2\left(1+\frac{p_n}{\sigma^2}\right)=R.
\end{align*}
Given the concavity of the $\log(.)$ function, we can write using the Jensen's inequality
\begin{align*}
	R&\leq \log_2\left(1+\frac{1}{\sigma^2}\frac{1}{N}\sum_{n=0}^{N-1}p_n\right) \leftrightarrow
	(2^{R}-1)\sigma^2\leq \frac{1}{N}\sum_{n=0}^{N-1}p_n=P_{\mathrm{T}}
\end{align*}
and the bound is tight if uniform power allocation is used, \textit{i.e.}, $p_n=P_{\mathrm{T}}=(2^{R}-1)\sigma^2$, for $n=0,...,N-1$. 
Intuitively, the log dependence of the rate implies diminishing returns. Starting from a non-uniform allocation, it can always be improved by reallocating some power from the time interval with the highest allocated power to the one with the lowest power. 

In practice however, the consumed power~$P_{\mathrm{cons}}$ is far from being equal or even linearly proportional to the transmit power~$P_{\mathrm{T}}$. This is due to two main reasons, namely: i) as soon as a given time slot is active, a static load-independent power consumption is present due to activation of hardware components such as radio-frequency chains and baseband processing units; ii) the load-dependent power consumption, \textit{i.e.}, the dependence of $P_{\mathrm{cons}}$ in $p_n$, is typically concave as \glspl{pa} are more energy-efficient close to their saturation. Intuitively, this implies that the ``cost" of using more power decreases when a large output power is transmitted. These two effects will counterbalance the log penalty and push towards using a reduced number of active time slots, especially in low-to-medium load scenarios. 




\subsection{Contributions}

This paper presents a pioneering information-theoretic study of time-domain energy-saving techniques, using a realistic power consumption model. The transmission model considers single-antenna base stations and users. Even for such a basic system, a comprehensive study of energy-saving features is lacking, which is the gap this paper is aiming to fill. The investigated techniques provide drastic energy reduction by dictating how to dynamically (de)activate hardware resources as a function of the load. The optimization problems are formalized as the minimization of~$P_{\mathrm{cons}}$ for a given rate. More specifically, the structure of our paper and our contributions are structured as follows. Section~\ref{section:system_model} presents the hardware power consumption model used in this work, with two distinct contributions: active and sleep energy consumption. The active power consumption model is shown to address a large variety of \gls{pa} classes. Section~\ref{section:opt_allocation} considers the optimal allocation of time resources in a single-user scenario. The solution is approached step by step through lemmas to get more insight on its nature. Linear and exponential scaling regimes of~$P_{\mathrm{cons}}$ as a function of the load $R$ are identified. Asymptotic results for large $N$ are provided that greatly simplify the analysis while having negligible performance penalty. We prove that a rush-to-sleep approach is optimal in a noise limited regime but not otherwise. The optimal trade-off \gls{ee}-\gls{se} is also derived from previous results and we show that a maximal \gls{se} does not always provide a maximal \gls{ee}. Section~\ref{section:successive_sleep} then extends previous results by considering successive sleep modes, resulting in drastic energy reductions. Section~\ref{section:opt_allocation_MU} considers the extension to a multi-user scenario where users are multiplexed using~\gls{tdma}. The optimal allocation is provided for the most promising regime in terms of energy-savings, \textit{i.e.}, the low-to-medium-load scenario where~$P_{\mathrm{cons}}$ linearly scales with the rate of each user and the system is not fully active. Finally, Section~\ref{section:conclusion} concludes the paper.

{\textbf{Notations}}: 
The operators $\ceil{.}$, $\floor{.}$ and $\round{.}$ are the ceil, floor and round operators, respectively. The operator $\ceilfloor{x}$ which we refer to as the ceil-floor operator selects among the upper and lower bounding integers of $x$ the one that optimizes the cost function. The function $W(z)$ is the Lambert W function, \textit{i.e.}, the solution of $z=W(z)e^{W(z)}$.  We use the notation $f(x)=O(g(x))$, as $x\rightarrow a$, if there exist positive numbers $\delta$ and $\lambda$ such that $|f(x)|\leq \lambda g(x)$ when $0<|x-a|<\delta$.

\section{Power Consumption Model}
\label{section:system_model}

As described in the introduction, we consider the transmission of $N$ symbols, each of duration $T$~[s]. The full frame has thus a duration $NT$~[s]. Out of the $N$ intervals, a number $N_\mathrm{a}$ are active and actually transmitting information while the remaining $N-N_{\mathrm{a}}$ are inactive and in sleep mode. This implies that $0\leq N_{\mathrm{a}}\leq N$. For minimizing latency and maximizing the sleep duration which allows entering a deeper sleep mode~\cite{debaillie_flexible_2015}, the $N_\mathrm{a}$ active intervals are grouped together at the beginning of the transmission. We define as $E_{\mathrm{active}}$ the energy consumed during active time slots, which is assumed to depend on the transmit power at each time interval, \textit{i.e.}, $p_0,...,p_{N_{\mathrm{a}}-1}$. On the other hand, $E_{\mathrm{sleep}}$ represents the energy consumed during sleep modes, which is non zero as all hardware components cannot be switched-off. It is assumed to depend on $(N-N_\mathrm{a})T$ as a longer sleep duration allows to enter a deeper sleep mode~\cite{debaillie_flexible_2015}. The average consumed power over the frame duration is thus given by
\begin{align}
	P_{\mathrm{cons}}&=\frac{E_{\mathrm{active}}(p_0,...,p_{N_\mathrm{a}-1})+E_{\mathrm{sleep}}((N-N_\mathrm{a})T)}{NT}\label{eq:P_cons}.
\end{align}

In the following, we detail the models of the sleep and active energy consumption. As a benchmark, we also introduce the ideal consumption model
\begin{align}
	P_{\mathrm{cons}}^{\mathrm{ideal}}&=\frac{1}{N}\sum_{n=0}^{N_{\mathrm{a}}-1}p_n,\label{eq:ideal_cons_model}
\end{align}
implying that the consumed power is equal to the transmit power. In other words, no losses are present.


\subsection{Active Energy Consumption}

Using the well-established model from~\cite{auer_how_2011}, the active power consumption can be modelled as
\begin{align*}
	P_{\mathrm{active}}&=\frac{P_{\mathrm{PA}}+P_{\mathrm{RF}}+P_{\mathrm{BB}}}{(1-\sigma_{\mathrm{DC}})(1-\sigma_{\mathrm{MS}})(1-\sigma_{\mathrm{cool}})}
\end{align*}
where $P_{\mathrm{PA}}$, $P_{\mathrm{RF}}$ and $P_{\mathrm{BB}}$ are the powers consumed by the \glspl{pa}, the radio-frequency chains and the baseband unit respectively. The coefficients $\sigma_{\mathrm{DC}}$, $\sigma_{\mathrm{MS}}$ and $\sigma_{\mathrm{cool}}$ are the loss factors related to DC-DC power supply, mains supply and active cooling respectively.

\begin{table}[t!]
	\caption{Values of loss factors and efficiency used in evaluations~\cite{auer_how_2011}.}
		\centering{ \begin{tabular}{ |c|c|c|  }
				\hline
				DC-DC &$\sigma_{\mathrm{DC}}$ & 7.5\%\\
				\hline
				Mains supply &$\sigma_{\mathrm{MS}}$ & 9.0\%\\
				\hline
				Cooling &$\sigma_{\mathrm{cool}}$ & 10.0\%\\
				\hline\hline
				Efficiency &$\eta=(1-\sigma_{\mathrm{DC}})(1-\sigma_{\mathrm{MS}})(1-\sigma_{\mathrm{cool}})$ & 75.8\%\\
				\hline
			\end{tabular}
		}
	\label{table:loss_efficiency} 
     \vspace{-1em}
\end{table}

We are interested in modelling the dependence of the active consumed power $P_{\mathrm{active}}$ in the output power at each active time slot $p_0,...,p_{N_\mathrm{a}-1}$. The \gls{bs} power consumption analysis of~\cite{auer_how_2011} showed that mainly the \gls{pa} consumed power $P_{\mathrm{PA}}$ scales with the output power. The other terms are thus considered load-independent. 
The active energy consumed across the frame duration can thus be written as
\begin{align*}
	E_{\mathrm{active}}(p_0,...,p_{N_\mathrm{a}-1})&=\frac{T}{\eta}\left(N_{\mathrm{a}}\tilde{P}_{0} + \sum_{n=0}^{N_{\mathrm{a}-1}} P_{\mathrm{PA}}(p_n)\right)
\end{align*}
where $\tilde{P}_0=P_{\mathrm{BB}}+P_{\mathrm{RF}}$ and $\eta=(1-\sigma_{\mathrm{DC}})(1-\sigma_{\mathrm{MS}})(1-\sigma_{\mathrm{cool}})$. Values of loss factors and efficiencies used in evaluations are shown in Table~\ref{table:loss_efficiency}. To model the PA consumption, we use the following model
\begin{align}
	P_{\mathrm{PA}}(p)&=P_{\mathrm{PA},0}+\beta p^{\alpha},\ 0 \leq p\leq P_{\mathrm{max}} \label{eq:PA_model}
\end{align}
with $\alpha \in ]0,1]$ and $\beta\geq 0
$. The first term $P_{\mathrm{PA},0}$ represents the load-independent consumption while the second is load-dependent. This load dependency does not typically scale linearly with $p$. The fact that $\alpha \in ]0,1]$ implies concavity of $P_{\mathrm{PA}}(p)$. This concavity comes from the fact that a typical \gls{pa} efficiency is improved when moving closer to saturation~\cite{cripps_rf_2006}. The constant $P_{\mathrm{max}}$ denotes the maximal transmit power.In practice, $P_{\mathrm{max}}$ is (much) lower than the \gls{pa} saturation power, that we denote by $P_{\mathrm{sat}}$. The use of the so-called back-off $P_{\mathrm{max}}/P_{\mathrm{sat}}$ is required as recent technologies, \textit{e.g.}, \gls{ofdm}, have high \gls{papr}. A back-off (typically from -12~dB to -6~dB) prevents the \gls{pa} to enter the saturation region, which would otherwise create nonlinear distortion impacting the signal quality and creating out-of-band emissions. The authors of~\cite{persson_amplifier-aware_2013} have justified in details the use of a similar model as~(\ref{eq:PA_model}) through their own measurements and a literature review~\cite{grebennikov_rf_2015,mikami_efficiency_2007}. Model~(\ref{eq:PA_model}) is more general as it also includes a load-independent component, which is useful for particular \gls{pa} architectures.

\subsubsection{Ideal Power Amplifier}

\gls{pa} consumed power is linearly proportional to the output power giving $P_{\mathrm{PA},0}=0$, $\beta=\alpha=1$ and
\begin{align*}
	P_{\mathrm{PA}}^{\mathrm{ideal}}(p)=p.
\end{align*}

\subsubsection{Class A Power Amplifier}

\gls{pa} consumed power is independent of the load and has a maximal efficiency of $1/2$ giving $P_{\mathrm{PA},0}=2P_{\mathrm{sat}}$, $\beta=0$ and
\begin{align*}
	P_{\mathrm{PA}}^{\mathrm{A}}(p)=2P_{\mathrm{sat}}
\end{align*}
where $P_{\mathrm{sat}}$ is the saturation power of the \gls{pa}. 

\subsubsection{Class B Power Amplifier}

The \gls{pa} consumed power has a load dependence that scales with the square root of the output power giving $P_{\mathrm{PA},0}=0$, $\beta=\frac{4}{\pi} \sqrt{P_{\mathrm{sat}}}$ and $\alpha=1/2$
\begin{align*}
	P_{\mathrm{PA}}^{\mathrm{B}}(p)=\frac{4}{\pi} \sqrt{P_{\mathrm{sat}}} \sqrt{p}
\end{align*}
This model has much relevance for typical base stations working with a significant back-off from saturation~\cite{persson_amplifier-aware_2013}. Some authors sometimes call it the ``traditional" \gls{pa} model~\cite{hossain_energy_2018}. Therefore, by default, we will use it in following evaluations, with a 8 dB back-off.

\subsubsection{Envelope Tracking Power Amplifier}

According to the curve fitted model proposed in~\cite{hossain_impact_2011}, \gls{pa} consumption was shown to be modelled as
\begin{align*}
	P_{\mathrm{PA}}^{\mathrm{ET}}(p)\approx\frac{aP_{\mathrm{sat}}}{(1+a)\eta_{\mathrm{max}}}+\frac{1}{(1+a)\eta_{\mathrm{max}}}p,
\end{align*}
where $a=0.0082$. This model can again be seen as a special case of~(\ref{eq:PA_model}).

\subsubsection{Doherty Power Amplifier}

The $\ell$-way Doherty \gls{pa} consumed power is given by~\cite{jingon_joung_spectral_2014}
\begin{align*}
	P_{\mathrm{PA}}^{\mathrm{Doherty}}(p)=\frac{4 P_{\mathrm{sat}}}{\ell \pi}  \begin{cases}\sqrt{\xi} & 0<\xi \leq \frac{1}{\ell^2} \\ (\ell+1) \sqrt{\xi}-1 & \frac{1}{\ell^2}<\xi \leq 1\end{cases}
\end{align*}
where $\xi=p/P_{\mathrm{sat}}$. The class B model is obtained as a special case when $\ell=1$. Except in such particular cases, the model proposed in~(\ref{eq:PA_model}) cannot exactly represent such \glspl{pa} but can provide an approximation depending on the operating range of the Doherty amplifier.

\begin{remark}
	We previously described how model~(\ref{eq:PA_model}) can address typical theoretical \gls{pa} models. However, it can also be fitted for practical~\glspl{pa} based on measurements and/or datasheets. The PA was identified as the main load-dependent contribution. However, more generally, the model proposed in~(\ref{eq:PA_model}) can take into account other load-dependent terms.
\end{remark}

In the light of this remark, we formalize the underlying assumption about the load-dependent active energy consumption model used throughout this work.

$\mathbf{(As1)}$: The active energy consumed across the frame is
\begin{align}
	E_{\mathrm{active}}(p_0,...,p_{N_\mathrm{a}-1})&=T\left(N_{\mathrm{a}}P_0 + \gamma \sum_{n=0}^{N_{\mathrm{a}-1}} p_n^{\alpha}\right) \label{eq:E_active_model}
\end{align}
where $\gamma=\frac{\beta}{\eta}\geq 0$ and  $P_0
=\frac{P_{\mathrm{BB}}+P_{\mathrm{RF}}+P_{\mathrm{PA},0}}{\eta}\geq 0$, $\alpha \in ]0,1]$, $0 \leq p_n\leq P_{\mathrm{max}}$ for $n=0,\ldots,N_{\mathrm{a}}-1$, $0 \leq N_{\mathrm{a}}\leq N$ and $p_n=0$ for $n=N_{\mathrm{a}},\ldots,N-1$. 
The averaged consumed power is then
\begin{align}
	P_{\mathrm{cons}}&=\frac{N_{\mathrm{a}}}{N}P_0 + \frac{\gamma}{N} \sum_{n=0}^{N_{\mathrm{a}-1}}  p_n^{\alpha}+\frac{E_{\mathrm{sleep}}((N-N_\mathrm{a})T)}{NT}. \label{eq:P_cons_As1}
\end{align}

\subsection{Sleep Energy Consumption}
\label{subsection:sleep_energy_consumption}

In the ideal consumption model~(\ref{eq:ideal_cons_model}), the sleep energy consumption is exactly null. In practice, not all hardware can be switched off as always-on reference signals are required to allow users to access the network. Moreover, different hardware components have different activation/reactivation latencies. Hence, depending on the sleep duration, more or less components can be switched off. Therefore, power models have been proposed that consider successive sleep modes as a function of the sleep depth. We define the power consumption in sleep mode as $P_{\mathrm{sleep}}(t)$, where $t=0$ is used a reference for the system entering sleep and $t$ is the sleep duration so that $E_{\mathrm{sleep}}(0)=0$. Given the fact that increasing sleep duration allows to switch off more hardware components, we introduce the following assumption.

$\mathbf{(As2)}$: $P_{\mathrm{sleep}}(t)$ is monotonically non-increasing.

\begin{proposition}\label{prop:concavity_E_sleep}
	Under $\mathbf{(As2)}$, the sleep energy consumption $E_{\mathrm{sleep}}(t)$ is a concave function of the sleep duration $t$.
\end{proposition}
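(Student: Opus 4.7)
The plan is to recognize that $P_{\mathrm{sleep}}(t)$ represents the instantaneous power draw at time $t$ after the system has entered sleep mode, and that the energy consumed up to sleep duration $t$ is therefore the accumulation of this instantaneous power. Together with the boundary condition $E_{\mathrm{sleep}}(0)=0$ stated in the paragraph preceding the proposition, this gives the representation
\begin{align*}
E_{\mathrm{sleep}}(t)=\int_0^{t} P_{\mathrm{sleep}}(\tau)\,d\tau.
\end{align*}
Once this identity is in place, concavity is essentially immediate: a function on $[0,\infty)$ whose derivative is non-increasing is concave, and by the fundamental theorem of calculus the derivative of $E_{\mathrm{sleep}}$ is exactly $P_{\mathrm{sleep}}$, which is non-increasing by $\mathbf{(As2)}$.

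To make the argument robust to the fact that realistic sleep-mode models are piecewise constant (each deeper mode being activated after a latency threshold), I would avoid invoking differentiability and instead show concavity directly from the integral representation. Fix $0\leq t_1<t_2$ and $\lambda\in[0,1]$, and set $t_\lambda=\lambda t_1+(1-\lambda)t_2$. Since $P_{\mathrm{sleep}}$ is non-increasing, on $[t_1,t_\lambda]$ it is pointwise at least $P_{\mathrm{sleep}}(t_\lambda)$, while on $[t_\lambda,t_2]$ it is pointwise at most $P_{\mathrm{sleep}}(t_\lambda)$. Integrating these bounds yields
\begin{align*}
E_{\mathrm{sleep}}(t_\lambda)-E_{\mathrm{sleep}}(t_1)&\geq (1-\lambda)(t_2-t_1)\,P_{\mathrm{sleep}}(t_\lambda),\\
E_{\mathrm{sleep}}(t_2)-E_{\mathrm{sleep}}(t_\lambda)&\leq \lambda(t_2-t_1)\,P_{\mathrm{sleep}}(t_\lambda).
\end{align*}
Eliminating $P_{\mathrm{sleep}}(t_\lambda)$ between these two inequalities (multiply the first by $\lambda$ and the second by $1-\lambda$ and combine) gives $\lambda E_{\mathrm{sleep}}(t_1)+(1-\lambda)E_{\mathrm{sleep}}(t_2)\leq E_{\mathrm{sleep}}(t_\lambda)$, which is precisely concavity.

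There is no real obstacle to carrying out this plan; the only subtle point worth emphasizing in the write-up is the interpretation of $P_{\mathrm{sleep}}(t)$ as the \emph{instantaneous} power consumed after time $t$ in sleep (rather than the average power over a sleep of duration $t$). The latter interpretation would not in general yield concavity from $\mathbf{(As2)}$ alone, as can be seen by choosing $P_{\mathrm{sleep}}$ decreasing but convex, so it is important to fix the meaning of $P_{\mathrm{sleep}}$ at the outset so that the identity $E_{\mathrm{sleep}}(t)=\int_0^t P_{\mathrm{sleep}}(\tau)\,d\tau$ is justified. Once that is clarified, the proof is a two-line monotone-integrand argument.
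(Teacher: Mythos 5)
Your proof is correct and is essentially the detailed version of the argument the paper compresses into ``directly follows from $\mathbf{(As2)}$'': the energy is the integral of the instantaneous sleep power, and the integral of a non-increasing function is concave. Your chord-inequality derivation rightly avoids differentiability (which matters since $P_{\mathrm{sleep}}$ may have jump discontinuities, as the paper's own remark notes), and your caveat about interpreting $P_{\mathrm{sleep}}(t)$ as instantaneous rather than average power is consistent with the integral representation the paper states explicitly under $\mathbf{(As3)}$.
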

\begin{proof}
	Directly follows from $\mathbf{(As2)}$.
\end{proof}

\begin{remark}
	$\mathbf{(As2)}$ does not imply continuity of $P_{\mathrm{sleep}}(t)$, which can have jump discontinuities. Switching-off components might lead to a non-continuous drop of $P_{\mathrm{sleep}}(t)$, as shown in Fig.~\ref{fig:Fig_1_P_sleep}. 
\end{remark}

\begin{figure}[t!]
	\centering 
	\resizebox{\linewidthmod}{!}{%
		{
\begin{tikzpicture}

\definecolor{darkgray176}{RGB}{176,176,176}
\definecolor{lightgray204}{RGB}{204,204,204}

\begin{axis}[
legend cell align={left},
legend style={fill opacity=0.8, draw opacity=1, text opacity=1, draw=lightgray204},
log basis x={10},
tick align=outside,
tick pos=left,
x grid style={darkgray176},
xlabel={Time [ms]},
xmin=0.5, xmax=3569.99286,
xmode=log,
xtick style={color=black},
y grid style={darkgray176},
ylabel={\(\displaystyle P_{\mathrm{sleep}}(t)\) [W]},
ymin=0, ymax=70,
ytick style={color=black}
]
\addplot [semithick, black]
table {%
1e-06 110
0 110
};
\addlegendentry{Piecewise constant sleep $\mathbf{(As3)}$}
\addplot [semithick, black, forget plot]
table {%
0 50
6 50
};
\addplot [semithick, black, forget plot]
table {%
6 25
50 25
};
\addplot [semithick, black, forget plot]
table {%
50 1
1000 1
};
\addplot [semithick, black, forget plot]
table {%
1000 0.1
10000 0.1
};
\addplot [semithick, black, forget plot]
table {%
0 110
0 50
};
\addplot [semithick, black, forget plot]
table {%
6 50
6 25
};
\addplot [semithick, black, forget plot]
table {%
50 25
50 1
};
\addplot [semithick, black, forget plot]
table {%
1000 1
1000 0.1
};
\addplot [semithick, black, dotted, forget plot]
table {%
0 0
0 50
};
\addplot [semithick, black, dotted, forget plot]
table {%
6 0
6 25
};
\addplot [semithick, black, dotted, forget plot]
table {%
50 0
50 1
};
\addplot [semithick, black, dotted, forget plot]
table {%
1000 0
1000 0.1
};
\addplot [semithick, black, dashed]
table {%
1e-06 50
10000 50
};
\addlegendentry{Constant sleep  $\mathbf{(As4)}$}
\addplot [semithick, black, forget plot]
table {%
1e-06 50
6 50
};
\end{axis}

\end{tikzpicture}}
	} 
     \vspace{-1em}
	\caption{Two classical sleep power models are shown: constant or successive sleep modes based on values in Table~\ref{table:P_sleep}.}
	\label{fig:Fig_1_P_sleep} 
		\vspace{-1em}
\end{figure}
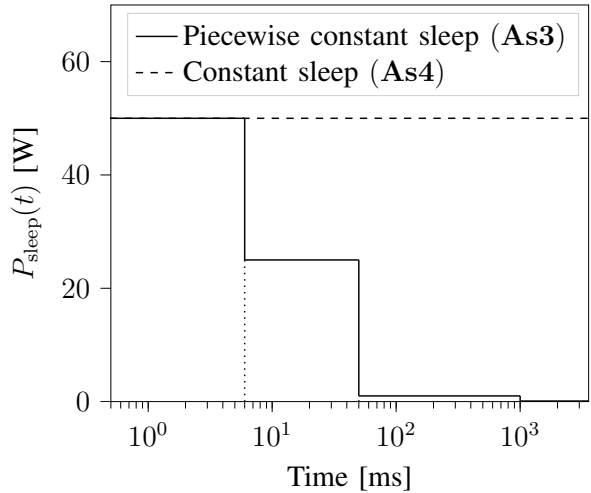
\begin{table}[t!]
		
		\caption{Numerical values of power and power models used in evaluations.}
		\centering{\resizebox{0.8\textwidth}{!}{%
			{\begin{tabular}{ |c|c|c|c|c|  }
					\hline
					Maximal transmit power& \multicolumn{4}{|c|}{$P_{\mathrm{max}}=20$ W}\\
					\hline\hline
					Deep sleep power & \multicolumn{4}{|c|}{$P_3=1 $ W}\\
					\hline \hline
					Load-independent active power & \multicolumn{4}{|c|}{$P_0=110 P_3$}\\
					\hline\hline
					\multirow{4}{14em}{Successive sleep power model $\mathbf{(As3)}$~\cite{3gpp.38.864}} & \multicolumn{4}{|c|}{$P_{\mathrm{sleep}}(t)$}\\ \cline{2-5}
					& Micro sleep    & Light sleep & Deep sleep & Hibernating sleep\\ \cline{2-5}
					&   $T_1=0$ & $T_2=6$ ms   & $T_3=50$ ms & $T_4=1$ s\\ \cline{2-5}
					&   $P_1 = 50 P_3$ & $P_2=25 P_3$  & $P_3$  & $P_4=0.1P_3$ \\
					\hline\hline
					Constant sleep power model $\mathbf{(As4)}$  & \multicolumn{4}{|c|}{$P_{\mathrm{sleep}}(t)=P_{\mathrm{sleep}}=P_1$}\\
					\hline
				\end{tabular}
			}
		} }

	\label{table:P_sleep} 
     \vspace{-1em}
\end{table}

Popular models for $P_{\mathrm{sleep}}(t)$ include the use of different sleep modes. The model in~\cite{debaillie_flexible_2015} has been used as a qualitative and quantitative reference for several years by companies~\cite{Ericsson_3GPP_document_for_discussion}. More recently, 3GPP has introduced an improved model, expressed in relative units with respect to the deep sleep mode, that better reflects current trends~\cite{3gpp.38.864,islam_enabling_2023}. The model contains four sleep modes and numerical values are given in Table~\ref{table:P_sleep} for one configuration described in~\cite{3gpp.38.864}. These values will be used as an example in evaluations.

We can formalize this model mathematically. Let us define as $S$ the number of sleep modes, starting from mode 0 (no sleep) to mode $S$ (deepest sleep mode). The start and end of each sleep mode are denoted by $T_s$ and $T_{s+1}$, with $T_0=0$ and $T_{S+1}=+\infty$. The sleep power consumption during mode $s$ is denoted by $P_s$. This notation is consistent with the definition of $P_0$ which denotes the load-independent active power consumption. This corresponds to ``sleep mode zero", taking place until $T_1$, where no hardware components are actually switched off.

$\mathbf{(As3)}$: the sleep power consumption is piecewise constant implying that $P_{\mathrm{sleep}}(t)=P_{s}$ and 
\begin{align*}
    E_{\mathrm{sleep}}(t)=\int_0^{t} P_{\mathrm{sleep}}(t') dt'=\sum_{s'=0}^{s-1}P_{s'}(T_{s'+1}-T_{s'})+(t-T_s)P_s
\end{align*}
where $s$ is the index such that $T_s < t \leq T_{s+1}$ and $P_{s+1} \leq P_{s}$ according to $\mathbf{(As2)}$.

We also introduce another popular sleep power model widely used in the literature and shown to be accurate to characterize 4G \gls{lte} macro base stations~\cite{auer_how_2011}.

$\mathbf{(As4)}$: the sleep power consumption is constant implying that $P_{\mathrm{sleep}}(t)=P_{\mathrm{sleep}}$ and $E_{\mathrm{sleep}}(t)=P_{\mathrm{sleep}}t$ with $P_0\geq P_{\mathrm{sleep}}$. 
The averaged consumed power is then
\begin{align}
	P_{\mathrm{cons}}&=\frac{N_{\mathrm{a}}}{N}P_0 + \frac{\gamma}{N} \sum_{n=0}^{N_{\mathrm{a}-1}}  p_n^{\alpha}+\frac{N-N_{\mathrm{a}}}{N}P_{\mathrm{sleep}}. \label{eq:P_cons_As4}
\end{align}

\begin{remark}
	$\mathbf{(As4)}$ is a particular case of $\mathbf{(As3)}$ with a single sleep mode: $S=1$, $P_{\mathrm{sleep}}=P_1$ and $T_1=0$. In the following, we use the term ``$\mathbf{(As3)}-\mathbf{(As4)}$", when both assumptions hold.
\end{remark}


\section{Optimal Allocation of Time Resources}
\label{section:opt_allocation}

This section considers the solution of minimizing the average consumed power under $(\mathbf{As1})$ and a rate constraint
\begin{align}
	\min_{N_{\mathrm{a}},p_0,...,p_{N_{\mathrm{a}}-1}} P_{\mathrm{cons}} \quad \text{s.t. } \frac{1}{N}\sum_{n=0}^{N_{\mathrm{a}}-1} \log_2\left(1+\frac{p_n}{\sigma^2}\right)=R. \label{eq:basic_problem}
\end{align}
We define a constant that will be useful throughout this section. Under $(\mathbf{As1})-(\mathbf{As4})$, $R_{\mathrm{a}}$ is the constant that minimizes the convex problem
\begin{align*}
	R_{\mathrm{a}}&= \arg \min_{x \geq 0} \frac{P_0-P_{\mathrm{sleep}} + \gamma\sigma^{2\alpha} \left(2^{x}-1\right)^{\alpha}}{x},
\end{align*}
where $P_{\mathrm{sleep}}$ is the constant sleep power consumption defined in $(\mathbf{As4})$. When $P_0-P_{\mathrm{sleep}}=0$, it is given by
\begin{align}
	R_{\mathrm{a}}=(W(-\alpha^{-1}e^{-\alpha^{-1}})+\alpha^{-1})/\log(2)\label{eq:R_a_delta_0}
\end{align}
where $W(z)$ is the Lambert W function, \textit{i.e.}, the solution of $z=W(z)e^{W(z)}$. In the following, the general solution of (\ref{eq:basic_problem}) is approached step by step, introducing several lemmas, which provide insight on its form and scaling as a function of $R$. 
Finally, the trade-off \gls{se} versus \gls{ee} will be characterized. 

\subsection{Optimal Allocation for Load-Dependent Consumed Power}

The following lemma provides the power allocation that minimizes the load-dependent part of the average consumed power under a rate constraint and without a maximal per-time slot power constraint. Under $\mathbf{(As1)}$, the load-dependent part of $P_{\mathrm{cons}}$ can be identified as
\begin{align}
	P_{\mathrm{ld}}(p_0,...,p_{N-1})&=\frac{\gamma}{N}  \sum_{n=0}^{N-1} p_n^{\alpha}, \label{eq:def_P_ld}
\end{align}
which equals $P_{\mathrm{cons}}$ for $P_0=0$ and $E_{\mathrm{sleep}}(t)=0$.

\begin{lemma} \label{lemma:PA_var_problem}
	Under $\mathbf{(As1)}$, for $P_0=E_{\mathrm{sleep}}(t)=0$, $\gamma>0$ and $P_{\mathrm{max}} \to +\infty$, the minimum of~(\ref{eq:basic_problem}), is achieved by uniformly allocating power among $N_{\mathrm{a}}$ time slots
	\begin{align*}
		N_{\mathrm{a}}&=\ceilfloor{\min(NR/R_{\mathrm{a}},N)},\\
  p_n&=\begin{cases}
			\left(2^{R\frac{N}{N_{\mathrm{a}}}}-1\right)\sigma^2 & \text{if } n=0,...,N_{\mathrm{a}}-1\\
			0 & \text{otherwise}
		\end{cases}.
	\end{align*}
\end{lemma}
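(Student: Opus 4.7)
The plan is to eliminate $p_n$ via the per-slot rate $r_n := \log_2(1+p_n/\sigma^2)$, reducing the problem to minimizing $\sum_{n=0}^{N-1}\phi(r_n)$ with $\phi(r) := (2^r-1)^\alpha$ and $\phi(0) = 0$, subject to $\sum_n r_n = NR$ and $r_n \geq 0$. Because $P_0 = 0$ and $E_{\mathrm{sleep}}(t)=0$, silent slots contribute nothing to the cost, so the number of active slots $N_a = |\{n : r_n > 0\}|$ is a free integer optimized jointly with the rate profile.

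I then construct a sharp affine lower bound for $\phi$. Setting $\psi(x) := \phi(x)/x$ and differentiating, I get the first-order condition $1 - 2^{-x} = \alpha x \ln 2$; a short monotonicity check (the LHS starts above the RHS near $0$ and, for $\alpha \in (0,1]$, they cross exactly once on $(0,\infty)$) shows this has a unique positive solution, which I identify with $R_a$; rearrangement in terms of the Lambert $W$ function then yields the closed form~(\ref{eq:R_a_delta_0}). Writing $c := \psi(R_a)$, I obtain the pointwise bound $\phi(r) \geq cr$ for all $r \geq 0$, with equality only at $r \in \{0, R_a\}$. Moreover, the function $\tilde\phi$ equal to $cr$ on $[0, R_a]$ and to $\phi$ on $[R_a, \infty)$ is a convex minorant of $\phi$ that coincides with $\phi$ on $\{0\} \cup [R_a, \infty)$; convexity of $\tilde\phi$ relies on verifying that $R_a$ lies beyond the inflection point $r = -\log_2\alpha$ where $\phi$ itself becomes convex.

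Jensen's inequality applied to the convex $\tilde\phi$ then gives $\sum_n \phi(r_n) \geq \sum_n \tilde\phi(r_n) \geq N_a \tilde\phi(NR/N_a)$, so $P_{\mathrm{cons}} \geq (\gamma \sigma^{2\alpha} R / x)\, \tilde\phi(x)$ with $x := NR/N_a \in [R, \infty)$. If $R \leq R_a$, the right-hand side is minimized by choosing $x = R_a$, i.e., $N_a = NR/R_a$ slots each at rate $R_a$; if $R > R_a$, the feasibility constraint $x \geq R$ binds and the minimum is at $x = R$, i.e., $N_a = N$ slots each at rate $R$. Both bounds are attained by the uniform allocation $p_n = (2^x - 1)\sigma^2$, since Jensen is tight when the active $r_n$ coincide and $\tilde\phi = \phi$ at those common values. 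The integer constraint on $N_a$ is finally handled by the operator $\ceilfloor{\cdot}$, which selects whichever of the two bracketing integers produces the smaller cost.

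The main technical obstacle is the construction of the affine minorant in the second step: because $\phi$ is concave near the origin and only convex beyond $r = -\log_2\alpha$, a direct Jensen bound on $\phi$ fails, and one must verify that the supporting tangent of $\phi$ at $R_a$ passes through the origin, that $R_a$ lies in the convex region of $\phi$, and that the resulting envelope $\tilde\phi$ is globally convex. Once this geometry is in place, the reduction to a one-dimensional problem in $x$ and the identification of the two regimes $R \lessgtr R_a$ are essentially automatic.
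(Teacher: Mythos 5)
Your proof is correct, but it follows a genuinely different route from the paper's. The paper first proves uniformity of the active-slot powers as a standalone exchange argument (Lemma~\ref{lemma:uniform_should_be}): it takes two unequal active slots, reduces to a one-variable function $\tilde{f}(\rho_0)$, and classifies its critical points to show the optimum is either fully lopsided or symmetric. It then substitutes the uniform allocation, relaxes $x=NR/N_{\mathrm{a}}$ to a continuum, and invokes convexity of $\psi(x)=(2^x-1)^\alpha/x$ (Lemma~\ref{lemma:convexity}, proved by a Taylor-series positivity argument on the second derivative) to justify the ceil-floor rounding and the Lambert-$W$ formula for $R_{\mathrm{a}}$. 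You instead build the convex envelope $\tilde\phi$ of $\phi(r)=(2^r-1)^\alpha$ on $[0,\infty)$ -- affine through the origin up to $R_{\mathrm{a}}$, equal to $\phi$ beyond -- and apply Jensen once, which delivers the uniformity, the optimal $N_{\mathrm{a}}$, and the two regimes $R\lessgtr R_{\mathrm{a}}$ simultaneously. This is the standard time-sharing/convexification argument and is arguably cleaner; it also needs less than the paper proves, since unimodality of $\psi$ (your single-crossing argument for $1-2^{-x}=\alpha x\ln 2$) suffices where the paper establishes full convexity. What the paper's route buys is reusability: Lemma~\ref{lemma:uniform_should_be} and the convexity of the shifted $f(x)$ with $P_0-P_{\mathrm{sleep}}>0$ are needed verbatim for Lemma~\ref{lemma:no_max_problem} and Theorem~\ref{theor:main_SU_result}, whereas your envelope construction would have to be redone with the constant offset (the tangent-through-origin property no longer pins down $R_{\mathrm{a}}$ in the same way).

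Two deferred verifications in your write-up are genuinely needed and do hold: the first-order condition $\phi'(R_{\mathrm{a}})R_{\mathrm{a}}=\phi(R_{\mathrm{a}})$ guarantees the two pieces of $\tilde\phi$ join with matching slope $c$, and $R_{\mathrm{a}}\geq-\log_2\alpha$ follows from $\alpha e^{y_{\mathrm{a}}}=(e^{y_{\mathrm{a}}}-1)/y_{\mathrm{a}}\geq 1$ at the critical point, so $\phi$ is indeed convex on $[R_{\mathrm{a}},\infty)$ and $\tilde\phi$ is globally convex. You should also note the boundary case $\alpha=1$, where the infimum of $\psi$ is approached only as $x\to 0^+$ and $R_{\mathrm{a}}=0$; your argument still yields $N_{\mathrm{a}}=N$ there, consistent with the introduction, but the statement ``equality only at $r\in\{0,R_{\mathrm{a}}\}$'' needs this degenerate reading.
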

\begin{proof}
	See Appendix~\ref{Appendix:proof_lemma_no_max_problem}.
\end{proof}

\begin{figure*} 
	\centering
	\subfloat[\label{fig:Fig_1_contour}]{%
		\resizebox{!}{0.45\linewidth}{\footnotesize
			{\input{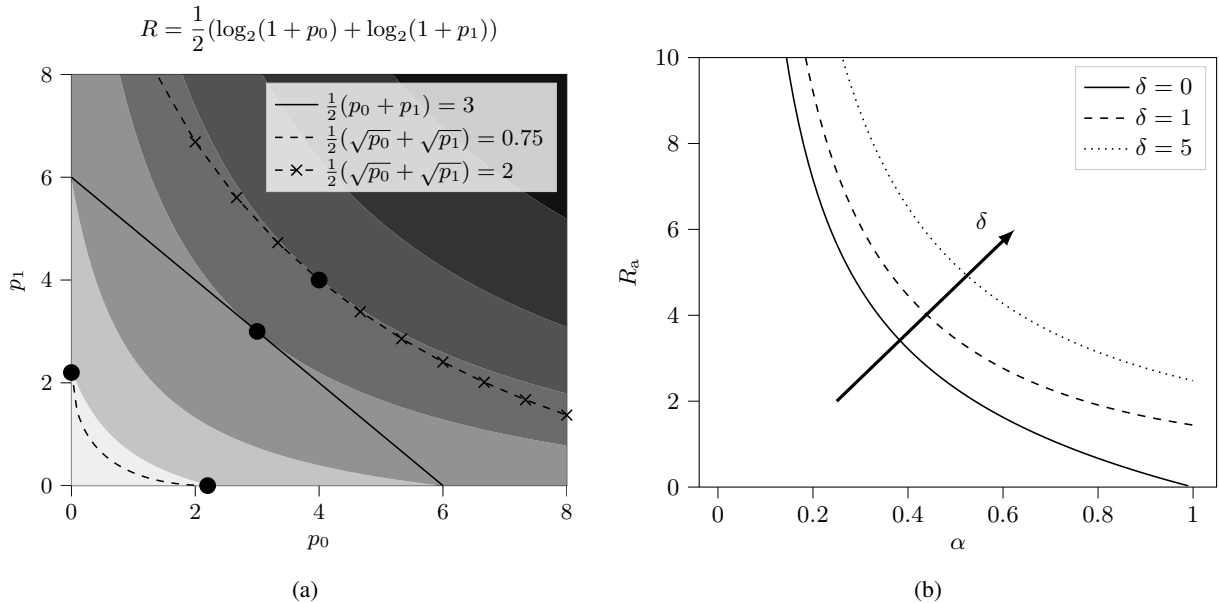}}
		}
	}
	\hfill
	\subfloat[\label{fig:Fig_2_R_a}]{%
		\resizebox{!}{0.415\linewidth}{\footnotesize
			{
\begin{tikzpicture}

\definecolor{darkgray176}{RGB}{176,176,176}
\definecolor{lightgray204}{RGB}{204,204,204}

\begin{axis}[
legend cell align={left},
legend style={fill opacity=0.8, draw opacity=1, text opacity=1, draw=lightgray204},
tick align=outside,
tick pos=left,
unbounded coords=jump,
x grid style={darkgray176},
xlabel={\(\displaystyle  \alpha \)},
xmin=-0.0395, xmax=1.0495,
xtick style={color=black},
y grid style={darkgray176},
ylabel={\(\displaystyle R_{\mathrm{a}}\)},
ymin=0, ymax=10,
ytick style={color=black}
]
\addplot [semithick, black]
table {%
0.01 144.269504088896
0.02 72.1347520444482
0.03 48.0898346962986
0.04 36.0673760217232
0.05 28.8539007583069
0.06 24.0449159588932
0.07 20.6099162768132
0.08 18.0336208026806
0.09 16.02970528602
0.1 14.4262951287901
0.11 13.1139300287674
0.12 12.0195630359614
0.13 11.0925721656575
0.14 10.2967722143865
0.15 9.60562159936604
0.16 8.99922350080642
0.17 8.46238145342835
0.18 7.98329943190178
0.19 7.55268958939662
0.2 7.16314567958468
0.21 6.80869422713798
0.22 6.48446737262088
0.23 6.1864607004691
0.24 5.91135148711072
0.25 5.6563605895441
0.26 5.41914630373463
0.27 5.19772194429389
0.28 4.99039123087928
0.29 4.7956971846189
0.3 4.61238137586201
0.31 4.43935117569567
0.32 4.27565324886861
0.33 4.12045195267297
0.34 3.97301162097873
0.35 3.8326819467359
0.36 3.6988858520174
0.37 3.57110936772329
0.38 3.44889314657489
0.39 3.33182531104651
0.4 3.21953539826889
0.41 3.11168921098887
0.42 3.00798442056054
0.43 2.90814679704303
0.44 2.8119269645658
0.45 2.71909759853772
0.46 2.62945099604338
0.47 2.5427969626704
0.48 2.4589609686492
0.49 2.37778253502641
0.5 2.29911381700011
0.51 2.22281835680208
0.52 2.1487699828446
0.53 2.07685183543294
0.54 2.00695550232053
0.55 1.93898024986398
0.56 1.87283233760946
0.57 1.80842440588279
0.58 1.74567492742124
0.59 1.68450771532328
0.6 1.62485148064108
0.61 1.56663943383231
0.62 1.50980892504694
0.63 1.45430111887409
0.64 1.40006069973015
0.65 1.34703560454687
0.66 1.29517677983017
0.67 1.24443796051508
0.68 1.1947754683501
0.69 1.1461480278107
0.7 1.09851659777366
0.71 1.05184421738569
0.72 1.00609586473634
0.73 0.961238327099285
0.74 0.917240081641267
0.75 0.874071185616838
0.76 0.831703175171401
0.77 0.790108971967198
0.78 0.749262796928234
0.79 0.709140090472007
0.8 0.669717438659725
0.81 0.630972504753241
0.82 0.592883965717218
0.83 0.555431453249829
0.84 0.518595498965239
0.85 0.482357483386739
0.86 0.446699588441338
0.87 0.411604753175128
0.88 0.377056632434383
0.89 0.343039558280362
0.9 0.309538503926402
0.91 0.276539050004611
0.92 0.244027352986165
0.93 0.211990115594398
0.94 0.180414559063598
0.95 0.149288397108718
0.96 0.118599811482595
0.97 0.0883374290071597
0.98 0.0584902999747739
0.99 0.0290478778237019
1 nan
};
\addlegendentry{$\delta=0$}
\addplot [semithick, black, dashed]
table {%
0.01 181.411164997928
0.02 91.7126690318153
0.03 61.3894465038366
0.04 46.049802464391
0.05 36.8867145831839
0.06 30.7332811286386
0.07 26.3486108789819
0.08 23.0510077678551
0.09 20.493186161812
0.1 18.4370625968087
0.11 16.7659767299658
0.12 15.3697308466314
0.13 14.1873311917073
0.14 13.1716280796744
0.15 12.2938683864302
0.16 11.5246288283914
0.17 10.8451397309396
0.18 10.2387568031152
0.19 9.6981377756534
0.2 9.2102909848893
0.21 8.76789906000501
0.22 8.36482139833238
0.23 7.99592375724577
0.24 7.65690084730727
0.25 7.34376721973047
0.26 7.05441224227373
0.27 6.78554985388435
0.28 6.53501932032367
0.29 6.30093760717065
0.3 6.0816704316624
0.31 5.87578939024099
0.32 5.6817132654894
0.33 5.49917415697093
0.34 5.32664395963781
0.35 5.16330530821773
0.36 5.00842154338956
0.37 4.86133082394476
0.38 4.72144098507957
0.39 4.58821745275965
0.4 4.46118006166242
0.41 4.3398955296986
0.42 4.22397303616381
0.43 4.1129575926745
0.44 4.00676601479673
0.45 3.90495105722692
0.46 3.80725059193011
0.47 3.71342401145688
0.48 3.62325013699537
0.49 3.53652451397952
0.5 3.45305968524924
0.51 3.37268172927654
0.52 3.29523007268909
0.53 3.22055631272944
0.54 3.14852311398059
0.55 3.07900245079518
0.56 3.01182958400206
0.57 2.94700269976379
0.58 2.88435033702639
0.59 2.8237779518591
0.6 2.7651966783515
0.61 2.70852293254805
0.62 2.65367806688544
0.63 2.60058815909864
0.64 2.54918329894957
0.65 2.49939767782182
0.66 2.45116888083275
0.67 2.40443855709278
0.68 2.3591506481492
0.69 2.31525267222049
0.7 2.27266505646193
0.71 2.23140931093737
0.72 2.19139710035205
0.73 2.15258658905302
0.74 2.11493728075219
0.75 2.07841051442349
0.76 2.04296929113359
0.77 2.00857785434794
0.78 1.97520213883691
0.79 1.94280922036212
0.8 1.91136777069001
0.81 1.88084729676177
0.82 1.85121876953207
0.83 1.82245399277565
0.84 1.79451322607522
0.85 1.76740137267707
0.86 1.74107303083704
0.87 1.71550470803944
0.88 1.69067345837513
0.89 1.6665569208391
0.9 1.64313352152857
0.91 1.62038219538386
0.92 1.59828255506567
0.93 1.57680395782983
0.94 1.55595641693911
0.95 1.53569821775729
0.96 1.51600859795662
0.97 1.49688796879673
0.98 1.47830430161135
0.99 1.46024716746454
1 1.44270069246741
};
\addlegendentry{$\delta=1$}
\addplot [semithick, black, dotted]
table {%
0.01 257.124958354918
0.02 130.136339436355
0.03 87.1311782919059
0.04 65.3570636487194
0.05 52.3544001880685
0.06 43.6201832884984
0.07 37.3971949092856
0.08 32.7161713660194
0.09 29.0863842315662
0.1 26.1783939701581
0.11 23.7959866745588
0.12 21.8148919837536
0.13 20.1371912558135
0.14 18.6988725756911
0.15 17.4511843439221
0.16 16.3611456524407
0.17 15.3988407159723
0.18 14.5432665472043
0.19 13.7776431144595
0.2 13.087833685597
0.21 12.464595536297
0.22 11.8976675520849
0.23 11.379802774814
0.24 10.9048880627761
0.25 10.4677677012402
0.26 10.0640628065022
0.27 9.68975377754972
0.28 9.34237825079534
0.29 9.01865120837089
0.3 8.71623093951977
0.31 8.43305888368438
0.32 8.16733001609233
0.33 7.91745295649252
0.34 7.6820265645208
0.35 7.45980886377789
0.36 7.24953645570998
0.37 7.05061848507863
0.38 6.86191461213328
0.39 6.68265127236548
0.4 6.51212990316219
0.41 6.34971619944051
0.42 6.19483802334726
0.43 6.04697556440904
0.44 5.90565579188187
0.45 5.77044865550457
0.46 5.64096119566823
0.47 5.51683516728817
0.48 5.3977414144774
0.49 5.28332472252474
0.5 5.17343605969512
0.51 5.06773854917582
0.52 4.96599974699956
0.53 4.86800394783653
0.54 4.77355027570808
0.55 4.68245363505433
0.56 4.59454078593368
0.57 4.50965107765149
0.58 4.42763492192228
0.59 4.34835284437389
0.6 4.27167398624085
0.61 4.1974767063807
0.62 4.12564669998593
0.63 4.05607722192932
0.64 3.98866800126688
0.65 3.92329779890173
0.66 3.85994011693468
0.67 3.79847366001954
0.68 3.7388205309015
0.69 3.68090741924683
0.7 3.62466439767927
0.71 3.57002558002129
0.72 3.51692855685546
0.73 3.46531436558284
0.74 3.41512668631487
0.75 3.36631223772835
0.76 3.31882061242578
0.77 3.27260370867577
0.78 3.227615721443
0.79 3.18381324051552
0.8 3.14115469315529
0.81 3.09960059055199
0.82 3.05911319988833
0.83 3.01965664170831
0.84 2.98119650366852
0.85 2.94368547005748
0.86 2.90712451825801
0.87 2.87146469951844
0.88 2.83667741759499
0.89 2.80273512006065
0.9 2.76961162995714
0.91 2.73728148785584
0.92 2.70572046155625
0.93 2.67490512507157
0.94 2.64481320678961
0.95 2.61542301953239
0.96 2.58671371528257
0.97 2.55866531136991
0.98 2.53125864474165
0.99 2.5044751266373
1 2.47829679056337
};
\addlegendentry{$\delta=5$}
\draw[-latex,very thick,draw=black] (axis cs:0.25,2) -- (axis cs:0.625,6);
\draw (axis cs:0.525,6) node[
  anchor=base west,
  text=black,
  rotate=0.0
]{$\delta$};
\end{axis}

\end{tikzpicture}}
	} }
	\caption{(a) Contour plot of i) rate constraint and ii) load-dependent power consumption ($\gamma=1$) for a frame of $N=2$ time slots and two load-dependent power exponents $\alpha$: $1$ and $1/2$. (b) Constant $R_{\mathrm{a}}$ as a function of the load-dependent power exponent $\alpha$ and the ratio $\delta=(P_0-P_{\mathrm{sleep}})/(\gamma\sigma^{2\alpha})$.}
\end{figure*}

%

\begin{remark}[Frame of $N=2$ symbols] \label{remark:N_2}
	To illustrate Lemma~\ref{lemma:PA_var_problem}, Fig.~\ref{fig:Fig_1_contour} considers the particular case $N=2$, $\gamma=1$. It shows contour plots of the constraint $R$ and the cost function $P_{\mathrm{ld}}$, as a function of $p_0$ and $p_1$. In the case $\alpha=1$, implying linearity of consumed and transmit power, the objective function curve $(p_0+p_1)/2=3$ is a straight line. As shown in the introduction, utilization of the two time slots is always optimal ($N_{\mathrm{a}}=2$), with uniform power allocation. However, for $\alpha=1/2$, this is not anymore the case. For low values of $R$, it is better to only use $N_{\mathrm{a}}=1$ slot while for large $R$, $N_{\mathrm{a}}=2$ is optimal, again with uniform power. 
\end{remark}

\begin{remark}[Scaling of $N_{\mathrm{a}}$ for general $N$ and $\alpha$]
	For an arbitrary value of $N$ and $\alpha \in ]0,1]$, the number of activated time slots $N_{\mathrm{a}}$ 
	scales approximately linearly with the rate $R$ up to the point where the maximal number $N$ is allocated, \textit{i.e.}, when $R > R_{\mathrm{a}}$. When $R \leq R_{\mathrm{a}}$, the rate per activated time slot is approximately equal to $R_{\mathrm{a}}$. The ``approximate" nature comes from the rounding operation. This error disappears for large $N$, as will be formalized properly in the following. The constant $R_{\mathrm{a}}$ is independent of $R$ and is a function of $\alpha$. As shown in Fig.~\ref{fig:Fig_2_R_a} for the case $\delta=0$ ($P_0=P_{\mathrm{sleep}}=0$), $R_{\mathrm{a}}$ monotonically decreases as a function of $\alpha$, implying that more time slots are activated for a fixed value of $R$. In the asymptotic cases of $\alpha$ approaching 0 or 1, a single ($N_{\mathrm{a}}=1$) or all time slots ($N_{\mathrm{a}}=N$) are allocated, respectively.
\end{remark}

%
%

\subsection{Optimal Allocation with no Maximal Power Constraint}

The following lemma gives the power allocation that minimizes the averaged consumed power under a rate constraint and without a maximal per-time slot power constraint. Removing this constraint provides the solution when it is not binding, \textit{i.e.}, when the user experiences a good channel (low normalized noise variance $\sigma^2=L\sigma_n^2$) and its target rate is not too high. The general case will be addressed in next subsection.

\begin{lemma} \label{lemma:no_max_problem}
	Under $\mathbf{(As1)}$-$\mathbf{(As2)}$ and for $P_{\mathrm{max}} \to +\infty$, the minimum of the problem (\ref{eq:basic_problem}) 	is achieved by uniformly allocating power among $N_{\mathrm{a}}$ time slots
	\begin{align}
		p_n&=\begin{cases}
			\left(2^{R\frac{N}{N_{\mathrm{a}}}}-1\right)\sigma^2 & \text{if }n=0,...,N_{\mathrm{a}}-1\\
			0 & \text{otherwise}
		\end{cases}\label{eq:opt_p_n}
	\end{align}
	where $N_{\mathrm{a}}$ is the argument that minimizes
	\begin{align}
		\min_{0 \leq N_{\mathrm{a}} \leq N} \frac{N_{\mathrm{a}}}{N}\left(P_0 + \gamma \sigma^{2\alpha} \left(2^{R\frac{N}{N_{\mathrm{a}}}}-1\right)^{\alpha}\right)+\frac{E_{\mathrm{sleep}}((N-N_\mathrm{a})T)}{NT}. \label{eq:theor_no_max_problem}
	\end{align}
	Under $\mathbf{(As3)}-\mathbf{(As4)}$, 
	the solution is
	\begin{align*}
		N_{\mathrm{a}}=\ceilfloor{\min(N R/R_{\mathrm{a}},N)}.
	\end{align*}
\end{lemma}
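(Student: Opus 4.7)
The plan is to decouple the joint optimization in (\ref{eq:basic_problem}) over the integer $N_{\mathrm{a}}$ and the continuous powers $p_0,\ldots,p_{N_{\mathrm{a}}-1}$. First, I fix $N_{\mathrm{a}}$: the load-independent active contribution $\tfrac{N_{\mathrm{a}}}{N}P_0$ and the sleep term $E_{\mathrm{sleep}}((N-N_{\mathrm{a}})T)/(NT)$ are both constant in the inner problem, so minimizing $P_{\mathrm{cons}}$ reduces to minimizing $\tfrac{\gamma}{N}\sum_{n=0}^{N_{\mathrm{a}}-1}p_n^{\alpha}$ subject to $\sum_{n=0}^{N_{\mathrm{a}}-1}\log_2(1+p_n/\sigma^2)=NR$. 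This is exactly the load-dependent sub-problem solved by Lemma~\ref{lemma:PA_var_problem}, applied over the $N_{\mathrm{a}}$ active slots with per-slot rate target $NR/N_{\mathrm{a}}$. Its uniform-allocation conclusion yields $p_n=(2^{NR/N_{\mathrm{a}}}-1)\sigma^2$ for $n=0,\ldots,N_{\mathrm{a}}-1$. Substituting back and evaluating $\sum_{n=0}^{N_{\mathrm{a}}-1}p_n^\alpha=N_{\mathrm{a}}\sigma^{2\alpha}(2^{NR/N_{\mathrm{a}}}-1)^\alpha$ yields exactly the reduced outer objective~(\ref{eq:theor_no_max_problem}).

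For the specialization to $(\mathbf{As3})$-$(\mathbf{As4})$, I would substitute $E_{\mathrm{sleep}}((N-N_{\mathrm{a}})T)=(N-N_{\mathrm{a}})T\,P_{\mathrm{sleep}}$ in~(\ref{eq:theor_no_max_problem}) to rewrite the objective, up to the additive constant $P_{\mathrm{sleep}}$, as $\tfrac{N_{\mathrm{a}}}{N}\bigl(P_0-P_{\mathrm{sleep}}+\gamma\sigma^{2\alpha}(2^{NR/N_{\mathrm{a}}}-1)^\alpha\bigr)$. The change of variable $x=NR/N_{\mathrm{a}}$ maps $N_{\mathrm{a}}\in(0,N]$ to $x\in[R,+\infty)$ and turns this quantity into $R$ times the ratio
\begin{equation*}
\frac{P_0-P_{\mathrm{sleep}}+\gamma\sigma^{2\alpha}(2^x-1)^\alpha}{x},
\end{equation*}
which is precisely the convex objective whose unconstrained minimizer defines $R_{\mathrm{a}}$. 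Under the constraint $x\geq R$, the minimizer is therefore $x^\star=\max(R,R_{\mathrm{a}})$, equivalent to the continuous relaxation $N_{\mathrm{a}}^\star=\min(NR/R_{\mathrm{a}},N)$: when $R\leq R_{\mathrm{a}}$ the unconstrained optimum is feasible and $N_{\mathrm{a}}^\star=NR/R_{\mathrm{a}}\leq N$; when $R>R_{\mathrm{a}}$ the constraint $x\geq R$ binds and $N_{\mathrm{a}}^\star=N$ (no rounding needed in this branch).

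The final step is to pass from the real-valued $N_{\mathrm{a}}^\star$ to the integer solution. By convexity of the cost in $x$, and hence unimodality in $N_{\mathrm{a}}$ around $N_{\mathrm{a}}^\star$, no integer outside $\{\lfloor N_{\mathrm{a}}^\star\rfloor,\lceil N_{\mathrm{a}}^\star\rceil\}$ can beat these two, so the optimal integer is whichever of the two has smaller cost, which is exactly what the operator $\ceilfloor{\,\cdot\,}$ selects. I expect the main obstacle to be the invocation of Lemma~\ref{lemma:PA_var_problem} on the restricted $N_{\mathrm{a}}$-slot sub-problem: one needs to confirm that the Jensen-type argument from its appendix proof extends verbatim when $N$ is replaced by $N_{\mathrm{a}}$ and the rate target is rescaled to $NR/N_{\mathrm{a}}$, independently of the outer sleep and load-independent constants that are frozen during the inner optimization.
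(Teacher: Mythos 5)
Your proposal is correct and follows essentially the same route as the paper: fix $N_{\mathrm{a}}$, reduce the inner problem to a uniform allocation, substitute $x=NR/N_{\mathrm{a}}$, and use convexity of $(P_0-P_{\mathrm{sleep}}+\gamma\sigma^{2\alpha}(2^x-1)^\alpha)/x$ to justify the $\ceilfloor{\cdot}$ rounding of the continuous minimizer $\min(NR/R_{\mathrm{a}},N)$. The one caveat is that the uniform-allocation step should be credited to the auxiliary Lemma~\ref{lemma:uniform_should_be} (an exchange/critical-point argument --- not a Jensen argument for $\alpha<1$ --- stated for an arbitrary number of slots, so it applies verbatim with $N_{\mathrm{a}}$ in place of $N$) rather than to Lemma~\ref{lemma:PA_var_problem} itself, whose conclusion also re-optimizes the number of positive-power slots and so does not directly force all $N_{\mathrm{a}}$ designated slots to carry equal power.
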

\begin{proof}
	See Appendix~\ref{Appendix:proof_lemma_no_max_problem}.
\end{proof}

\begin{remark}[Relation with Lemma~\ref{lemma:PA_var_problem}]
	Lemma~\ref{lemma:no_max_problem} extends the result of Lemma~\ref{lemma:PA_var_problem} by considering a non-zero sleep power $P_\mathrm{sleep}$ and load-independent active power consumption $P_0$. Under a constant sleep power model $\mathbf{(As4)}$, the problem has a similar solution.
\end{remark}

\begin{remark}[Scaling of $R_{\mathrm{a}}$]
	As shown in Fig.~\ref{fig:Fig_2_R_a}, $R_{\mathrm{a}}$ increases with $\delta=(P_0-P_{\mathrm{sleep}})/(\gamma\sigma^{2\alpha})$. This implies that less time slots should be activated if normalized noise power $\sigma^2$, $P_{\mathrm{sleep}}$ and $\gamma$ are small and if the active load-independent power consumption $P_0$ is high. A major difference with Lemma~\ref{lemma:PA_var_problem} is that $R_{\mathrm{a}}$ does not go to zero as $\alpha$ approaches 1 when $\delta>0$. This implies that, given nonzero static power consumption, not all time slots should always be activated. The particular load-independent case $\gamma=0$ implies that $R_{\mathrm{a}} = +\infty$ and $N_{\mathrm{a}}=1$. This makes sense as $P_{\mathrm{cons}}$ then only depends on $N_{\mathrm{a}}$. As shown in next subsection, considering a finite maximal transmit power per time slot will change this result.
\end{remark}

\subsection{Optimal Allocation with Maximal Power Constraint}

We now consider the additional constraint of a finite maximal per-time slot power $P_{\mathrm{max}}$. This constraint may render the problem unfeasible. Therefore, we introduce the following assumption.

\begin{algorithm}[t!]\linespread{1}
	\caption{Iterative resource allocation}\label{alg:max_constraint}
	\small 
	\begin{algorithmic}
		\Require $\sigma^2,R,N,P_{\mathrm{max}},R_{\mathrm{max}}, R_{\mathrm{a}}$
		\State $N_{\mathrm{a}} \gets (\ref{eq:theor_no_max_problem})$
		\Comment{Init. by sol. of Theor.~\ref{lemma:no_max_problem}}
		\State $p_0,...,p_{N-1} \gets (\ref{eq:opt_p_n})$ 
		\State $N_{\mathrm{max}} \gets 0$
		\State $\hat{N} \gets N$
		
		\While{$p_{N_{\mathrm{a}}-1}>P_{\mathrm{max}}$} \Comment{Check max power constraint}
			\State $N_{\mathrm{max}} \gets N_{\mathrm{max}}+1$ \Comment{Set one more time slot to max power}
			\State $R \gets (N R - R_{\mathrm{max}})/({N}-1) $ \Comment{Adapt rate constraint}
			\State $N \gets N-1$
			\State $N_{\mathrm{a}} \gets (\ref{eq:theor_no_max_problem})$
			\Comment{Update with sol. of Theor.~\ref{lemma:no_max_problem}}
			\State $p_0,...,p_{N-1} \gets (\ref{eq:opt_p_n})$ 		
		\EndWhile
		\State $p_{N_{\mathrm{a}}},...,p_{N_{\mathrm{a}}+N_{\mathrm{max}}-1} \gets P_{\mathrm{max}}$ 
		\State $p_{N_{\mathrm{a}}+N_{\mathrm{max}}},...,p_{\hat{N}-1} \gets 0$ 
	\end{algorithmic}
\end{algorithm}

$\textbf{(As5)}$: Problem~(\ref{eq:basic_problem}) is feasible, \textit{i.e.},
\begin{align*}
	R\leq  R_{\mathrm{max}}=\log_2\left(1+\frac{P_{\max}}{\sigma^2}\right).
\end{align*}

The exact solution of problem~(\ref{eq:basic_problem}) generally requires an iterative solution.

\begin{proposition}\label{prop:optimal}
	Under $\mathbf{(As1)}-\mathbf{(As2)},\mathbf{(As5)}$, the solution of the problem~(\ref{eq:basic_problem}) can be obtained by using Algorithm~\ref{alg:max_constraint}.
\end{proposition}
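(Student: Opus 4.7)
The plan is to prove correctness by induction on the number of saturated time slots, using the structural properties of the optimum that follow from Lemma~\ref{lemma:no_max_problem}.

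First I would verify the base case. If the unconstrained optimum provided by Lemma~\ref{lemma:no_max_problem} already satisfies $p_n \le P_{\mathrm{max}}$ for all $n$ (equivalently, $p_{N_{\mathrm{a}}-1}\le P_{\mathrm{max}}$ since the allocation is uniform over active slots), then this allocation is feasible for~(\ref{eq:basic_problem}) and optimal since relaxing the $P_{\mathrm{max}}$ constraint cannot increase the optimal cost. In that case Algorithm~\ref{alg:max_constraint} never enters the while-loop and returns exactly this allocation.

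Next I would establish the key structural lemma: at the optimum of the $P_{\mathrm{max}}$-constrained problem, the nonzero powers take at most two distinct values, namely $P_{\mathrm{max}}$ (the saturated slots) and a common value $p^\star<P_{\mathrm{max}}$ (the unsaturated active slots). This follows from a Jensen-type exchange argument applied within the subset of unsaturated active slots: because $p\mapsto p^{\alpha}$ is strictly concave under $\mathbf{(As1)}$ and the rate function $p\mapsto \log_2(1+p/\sigma^2)$ is concave, replacing any two non-saturated positive powers by their average strictly decreases the load-dependent consumption while preserving the rate contribution (by Jensen), so all non-saturated positive powers must coincide. Moreover, since the rate per slot is a concave increasing function of $p$ while the PA cost is concave in $p$, it is cheaper in terms of $P_{\mathrm{cons}}$ to concentrate ``excess'' rate on one saturated slot than to spread it over several unsaturated slots near $P_{\mathrm{max}}$; this forces every violation to be resolved by setting that slot to exactly $P_{\mathrm{max}}$ rather than to some interior value.

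Given this two-level structure, I would then show that fixing $N_{\mathrm{max}}$ slots at $P_{\mathrm{max}}$ reduces the original problem to a sub-problem identical in form to~(\ref{eq:basic_problem}) but with $N\leftarrow N-N_{\mathrm{max}}$ and adjusted rate constraint $R\leftarrow (NR-N_{\mathrm{max}} R_{\mathrm{max}})/(N-N_{\mathrm{max}})$, precisely the update performed by the algorithm at each iteration. By Lemma~\ref{lemma:no_max_problem}, the sub-problem admits a closed-form unconstrained solution. Two cases arise: either this solution satisfies $p_{N_{\mathrm{a}}-1}\le P_{\mathrm{max}}$, in which case the algorithm terminates and optimality follows from the base case applied to the sub-problem; or it still violates the max constraint, and by the structural lemma at least one more slot must be set to $P_{\mathrm{max}}$, justifying the next iteration. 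The process terminates in finitely many steps since $N_{\mathrm{max}}$ is upper-bounded by $N$ and feasibility is guaranteed by $\mathbf{(As5)}$ (as $R\le R_{\mathrm{max}}$ the fully-saturated allocation is always feasible).

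The main obstacle is the structural lemma: concavity of both the PA cost and the rate makes the exchange arguments more delicate than in the purely convex waterfilling setting, and care is needed to show that no ``interior saturation'' (a slot with $p^\star <p_n< P_{\mathrm{max}}$) can coexist with the uniform group at the optimum. This would be handled by perturbation analysis together with the KKT conditions of~(\ref{eq:basic_problem}), which under $\mathbf{(As1)}$ force every active slot whose Lagrange multiplier for the $P_{\mathrm{max}}$ bound is inactive to satisfy the same stationarity equation, hence take the common value $p^\star$.
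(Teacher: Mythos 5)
Your overall strategy (induction on the number of saturated slots, a two-level structure at the optimum, and reduction to a sub-problem of the same form) mirrors the paper's argument, but the justification of your key structural lemma is wrong, and it fails at exactly the point where this problem is genuinely hard. You claim that replacing two unsaturated positive powers $p_1,p_2$ by their average decreases $\sum_n p_n^\alpha$ while preserving the rate, ``by Jensen''. Both halves fail: since $p\mapsto p^\alpha$ is concave, $p_1^\alpha+p_2^\alpha\le 2\left(\frac{p_1+p_2}{2}\right)^\alpha$, so averaging (weakly) \emph{increases} the load-dependent cost; and since $p\mapsto\log_2(1+p/\sigma^2)$ is concave, averaging (weakly) increases the rate rather than preserving it. For $\alpha=1$ this exchange is the classical argument from the introduction, but for $\alpha<1$ it moves in the wrong direction --- which is precisely why the paper does not use it. Instead, the paper's Lemma~\ref{lemma:uniform_should_be} restricts to two slots, eliminates one variable via the rate constraint, and performs a full critical-point analysis of $\tilde f(\rho_0)=\rho_0^\alpha+\bigl(\frac{e^Z}{1+\rho_0}-1\bigr)^\alpha$, showing it has at most three critical points, so the minimum is attained either at the symmetric point or on the boundary. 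Your fallback KKT argument has the same flaw in disguise: the stationarity condition $\gamma\alpha p^{\alpha-1}\propto 1/(\sigma^2+p)$, i.e. $p^{\alpha-1}(\sigma^2+p)=\mathrm{const}$, is not injective in $p$ (its left-hand side is non-monotone for $\alpha<1$), so two active unsaturated slots satisfying the same stationarity equation need not carry the same power.

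To repair the proof you should invoke Lemma~\ref{lemma:uniform_should_be} (or redo its one-dimensional analysis on the box $[0,P_{\mathrm{max}}]$, where the candidate minimizers become the symmetric point and the endpoints $p=0$ or $p=P_{\mathrm{max}}$) to obtain the two-level structure. With that in hand, your induction goes through essentially as the paper's own proof does: if the uniform unconstrained optimum of the current sub-problem given by Lemma~\ref{lemma:no_max_problem} exceeds $P_{\mathrm{max}}$, then some maximal-power constraint must be binding at the true optimum, one slot is pinned to exactly $P_{\mathrm{max}}$, and the residual problem has the same form with $N-1$ slots and the adjusted rate $(NR-R_{\mathrm{max}})/(N-1)$ --- which is exactly one iteration of Algorithm~\ref{alg:max_constraint}; termination and feasibility follow from $\mathbf{(As5)}$ as you note.
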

\begin{proof}
	See Appendix \ref{Appendix:proof_prop_optimal}.
\end{proof}

To avoid the need of an iterative solution and the ceil-floor operator, we use the fact that the problem greatly simplifies by considering the asymptotic case of a large $N$. Then, the ratio $N_{\mathrm{a}}/N$ can be considered asymptotically continuous instead of only taking discrete values.

\begin{remark}[Large $N$ assumption]
	The large $N$ assumption is realistic in practice as frames are typically made of many symbols. Moreover, the assumption of having a sufficiently large $N$ is central and implicit in this work. This article investigates the gain of activating only a share of transmission time slots. To be able to do this, some flexibility in the number of activated resources should be available, implying a sufficiently large $N$.
\end{remark}
To provide the closed-form asymptotic solution, we define
\begin{align*}
	\tilde{R}=\min(R_{\mathrm{a}},R_{\mathrm{max}}),P_{\mathrm{a}}=(2^{R_{\mathrm{a}}}-1)\sigma^2, \tilde{P}=\min(P_{\mathrm{a}},P_{\mathrm{max}}).
\end{align*}

\begin{theorem} \label{theor:main_SU_result}
	Under $\mathbf{(As1)}-\mathbf{(As5)}$, the solution of problem~(\ref{eq:basic_problem}) and two scaling regimes of $P_{\mathrm{cons}}$ as a function of $R$ can be found:

	$\square$ (\textbf{Linear}) If $R\leq \tilde{R}$, as $N\rightarrow+\infty$, the allocation
		\begin{align*}
			N_{\mathrm{a}}&=\round{NR/\tilde{R}},\ p_n=\begin{cases}
				\tilde{P} & \text{if }n=0,...,N_{\mathrm{a}}-1\\
				0 & \text{otherwise}
			\end{cases}
   		\end{align*}
        is asymptotically optimal and achieves a consumed power
        \begin{align*}
			P_{\mathrm{cons}}&=P_{\mathrm{sleep}}+R\frac{P_0-P_{\mathrm{sleep}} + \gamma \tilde{P}^{\alpha}}{\tilde{R}}+\epsilon
		\end{align*}
	    where $\epsilon$ is the gap from the optimum which asymptotically vanishes: $|\epsilon|=O(1/N)$.
     
	$\square$ (\textbf{Exponential}) If $R> \tilde{R}$:
	\begin{align*}
		N_{\mathrm{a}}&=N,\ p_n=\left(2^{R}-1\right)\sigma^2 \quad \text{for } n=0,...,N-1,\\
		P_{\mathrm{cons}}&=P_0 + \gamma \sigma^{2\alpha} \left(2^{R}-1\right)^{\alpha}.
	\end{align*}
				
\end{theorem}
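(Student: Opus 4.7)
The plan is to take Proposition~\ref{prop:optimal} (exact allocation via Algorithm~\ref{alg:max_constraint}) and pass to the limit $N\to\infty$, where the ceil--floor rounding inside the algorithm becomes an $O(1/N)$ perturbation of $P_{\mathrm{cons}}$. The argument is a case split on whether the per-slot cap $p_n\leq P_{\mathrm{max}}$ is active at the optimum, organized through the constants $\tilde{R}=\min(R_{\mathrm{a}},R_{\mathrm{max}})$ and $\tilde{P}=\min(P_{\mathrm{a}},P_{\mathrm{max}})$.

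First I would handle the linear regime $R\leq\tilde{R}$ via two sub-cases. If $R_{\mathrm{a}}\leq R_{\mathrm{max}}$ (equivalently $P_{\mathrm{a}}\leq P_{\mathrm{max}}$), Lemma~\ref{lemma:no_max_problem} directly gives $N_{\mathrm{a}}\simeq NR/R_{\mathrm{a}}$ with uniform power $P_{\mathrm{a}}=\tilde{P}$, and the cap is inactive. If instead $R_{\mathrm{a}}>R_{\mathrm{max}}$, the unconstrained Lemma~\ref{lemma:no_max_problem} solution violates the cap, so Algorithm~\ref{alg:max_constraint} saturates slots at $P_{\mathrm{max}}$; I will argue that in the large-$N$ limit the loop only terminates once \emph{every} active slot is saturated, since any partial saturation would leave a residual sub-problem to which Lemma~\ref{lemma:no_max_problem} reassigns per-slot power $P_{\mathrm{a}}>P_{\mathrm{max}}$, re-triggering the loop. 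Each active slot then carries rate $R_{\mathrm{max}}$, giving $N_{\mathrm{a}}\simeq NR/R_{\mathrm{max}}$ with $p_n=P_{\mathrm{max}}=\tilde{P}$. The two sub-cases unify as $N_{\mathrm{a}}=\round{NR/\tilde{R}}$, $p_n=\tilde{P}$. Substituting into~(\ref{eq:P_cons_As4}) gives
\begin{align*}
    P_{\mathrm{cons}}=P_{\mathrm{sleep}}+\frac{N_{\mathrm{a}}}{N}\bigl(P_0-P_{\mathrm{sleep}}+\gamma\tilde{P}^{\alpha}\bigr),
\end{align*}
and the identity $N_{\mathrm{a}}/N=R/\tilde{R}+O(1/N)$ (since $|\round{NR/\tilde{R}}-NR/\tilde{R}|\leq 1/2$ and the enclosing coefficient is bounded) delivers the advertised formula with $|\epsilon|=O(1/N)$.

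For the exponential regime $R>\tilde{R}$, I would observe that $\mathbf{(As5)}$ forces $R\leq R_{\mathrm{max}}$, hence $R>\tilde{R}$ requires $\tilde{R}=R_{\mathrm{a}}<R_{\mathrm{max}}$ and in particular $R>R_{\mathrm{a}}$. Lemma~\ref{lemma:no_max_problem} then directly prescribes $N_{\mathrm{a}}=N$ with uniform power $(2^R-1)\sigma^2$, which is feasible since $(2^R-1)\sigma^2\leq(2^{R_{\mathrm{max}}}-1)\sigma^2=P_{\mathrm{max}}$; substitution into~(\ref{eq:P_cons_As4}) (the sleep term vanishes because $N_{\mathrm{a}}=N$) yields $P_{\mathrm{cons}}=P_0+\gamma\sigma^{2\alpha}(2^R-1)^{\alpha}$.

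The main obstacle will be a rigorous treatment of the saturation loop in Algorithm~\ref{alg:max_constraint} when $R_{\mathrm{a}}>R_{\mathrm{max}}$: one must verify that the iteration drives \emph{every} active slot to $P_{\mathrm{max}}$ in the asymptotic limit, and that the rounding errors accumulated across the up-to-$N$ iterations still aggregate to only $O(1/N)$ in $P_{\mathrm{cons}}$. The strict convexity exposed in the proof of Lemma~\ref{lemma:no_max_problem} should yield the per-iteration monotonicity needed, but establishing a uniform bound on the cumulative ceil--floor discrepancy is the delicate step. A minor bookkeeping point is to note that $\round{\cdot}$ (used in the theorem) and $\ceilfloor*{\cdot}$ (used in Lemma~\ref{lemma:no_max_problem}) differ by at most one integer and thus contribute identically to the $O(1/N)$ gap term.
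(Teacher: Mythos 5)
Your proposal is correct and follows essentially the same route as the paper's proof: a case split driven by which of $R_{\mathrm{a}}$ and $R_{\mathrm{max}}$ is smaller, invoking Lemma~\ref{lemma:no_max_problem} directly when the per-slot cap is inactive (the paper's cases ii and iii), and asymptotically simplifying the saturation loop of Algorithm~\ref{alg:max_constraint} when $R_{\mathrm{a}}>R_{\mathrm{max}}$ (the paper's case iv), with the rounding discrepancy absorbed into the $O(1/N)$ term. The "delicate step" you flag — showing the loop drives essentially all active slots to $P_{\mathrm{max}}$ with only an $O(1/N)$ residual — is treated at the same (informal) level in the paper, so you are if anything slightly more explicit about where the remaining rigor is needed.
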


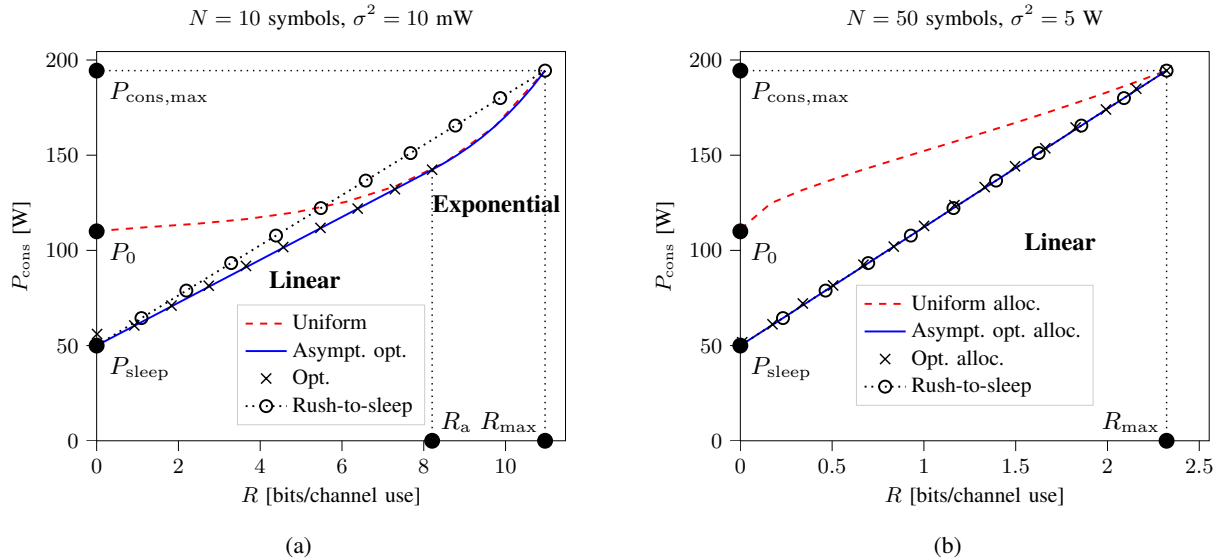
\begin{figure*} 
	\centering
	\subfloat[\label{fig:Fig_4a_scaling_regimes}]{%
		\resizebox{!}{0.415\linewidth}{\footnotesize
			{
\begin{tikzpicture}

\definecolor{darkgray176}{RGB}{176,176,176}
\definecolor{lightgray204}{RGB}{204,204,204}

\begin{axis}[
legend cell align={left},
legend style={
  fill opacity=0.8,
  draw opacity=1,
  text opacity=1,
  at={(0.5,0.04)},
  anchor=south,
  draw=lightgray204
},
tick align=outside,
tick pos=left,
title={\(\displaystyle N=10\) symbols, \(\displaystyle \sigma^2=10\) mW},
x grid style={darkgray176},
xlabel={\(\displaystyle  R \) [bits/channel use]},
xmin=0, xmax=11.4665054519057,
xtick style={color=black},
y grid style={darkgray176},
ylabel={\(\displaystyle P_{\mathrm{cons}}\) [W]},
ymin=0, ymax=204.433439236556,
ytick style={color=black}
]
\addplot [thick, red, dashed]
table {%
0.01 110.157458178149
1.22738949465619 112.186670675671
2.44477898931239 113.980220605102
3.66216848396858 116.4467778923
4.87955797862477 120.067954670843
6.09694747328097 125.505497271862
7.31433696793716 133.743692577241
8.53172646259335 146.271434658449
9.74911595724955 165.352235728105
10.9665054519057 194.433439236556
};
\addlegendentry{Uniform}
\addplot [thick, blue]
table {%
0.01 50.1125863469458
0.920539313920788 60.3640158574298
1.83107862784158 70.6154453679139
2.74161794176236 80.866874878398
3.65215725568315 91.1183043888821
4.56269656960394 101.369733899366
5.47323588352473 111.62116340985
6.38377519744551 121.872592920334
7.2943145113663 132.124022430818
8.20485382528709 142.375451941303
};
\addlegendentry{Asympt. opt.}
\addplot [thick, blue, forget plot]
table {%
8.20485382528709 142.375451941303
8.5117040060225 146.019927260467
8.8185541867579 150.072151189523
9.12540436749331 154.577996443482
9.43225454822871 159.588470575204
9.73910472896412 165.160293453497
10.0459549096995 171.35653938802
10.3528050904349 178.247351174815
10.6596552711703 185.910734149082
10.9665054519057 194.433439236556
};
\addplot [semithick, black, dotted, forget plot]
table {%
0 194.433439236556
10.9665054519057 194.433439236556
};
\addplot [semithick, black, dotted, forget plot]
table {%
10.9665054519057 0
10.9665054519057 194.433439236556
};
\addplot [semithick, black, dotted, forget plot]
table {%
8.20485382528709 0
8.20485382528709 142.375451941303
};
\addplot [semithick, black, mark=x, mark size=3, mark options={solid}, only marks]
table {%
0.01 56.050580303516
0.920539313920788 60.5833293069379
1.83107862784158 71.0089113541446
2.74161794176236 81.4353986361281
3.65215725568315 91.8304200108822
4.56269656960394 101.762406723826
5.47323588352473 111.815119361923
6.38377519744551 121.949414922617
7.2943145113663 132.141341038163
8.20485382528709 142.375451941303
};
\addlegendentry{Opt.}
\addplot [thick, black, dotted, mark=o, mark size=2.5, mark options={solid,fill opacity=0}]
table {%
0 50
1.09665054519057 64.4433439236556
2.19330109038115 78.8866878473112
3.28995163557172 93.3300317709668
4.3866021807623 107.773375694622
5.48325272595287 122.216719618278
6.57990327114344 136.660063541934
7.67655381633402 151.103407465589
8.77320436152459 165.546751389245
9.86985490671517 179.9900953129
10.9665054519057 194.433439236556
};
\addlegendentry{Rush-to-sleep}
\draw (axis cs:9.16650545190574,6) node[
  scale=1.14285714285714,
  anchor=base west,
  text=black,
  rotate=0.0
]{$R_{\mathrm{max}}$};
\draw (axis cs:8.20485382528709,6) node[
  scale=1.14285714285714,
  anchor=base west,
  text=black,
  rotate=0.0
]{$R_{\mathrm{a}}$};
\draw (axis cs:0.1,179.433439236556) node[
  scale=1.14285714285714,
  anchor=base west,
  text=black,
  rotate=0.0
]{$P_{\mathrm{cons,max}}$};
\draw (axis cs:0.1,95) node[
  scale=1.14285714285714,
  anchor=base west,
  text=black,
  rotate=0.0
]{$P_{0}$};
\draw (axis cs:0.1,35) node[
  scale=1.14285714285714,
  anchor=base west,
  text=black,
  rotate=0.0
]{$P_{\mathrm{sleep}}$};
\draw (axis cs:4,80) node[
  scale=1.14285714285714,
  anchor=base west,
  text=black,
  rotate=0.0
]{\bfseries Linear};
\draw (axis cs:8,120) node[
  scale=1.14285714285714,
  anchor=base west,
  text=black,
  rotate=0.0
]{\bfseries Exponential};
\addplot [semithick, black, mark=*, mark size=3, mark options={solid}, only marks, forget plot]
table {%
8.20485382528709 0
};
\addplot [semithick, black, mark=*, mark size=3, mark options={solid}, only marks, forget plot]
table {%
10.9665054519057 0
};
\addplot [semithick, black, mark=*, mark size=3, mark options={solid}, only marks, forget plot]
table {%
0 194.433439236556
};
\addplot [semithick, black, mark=*, mark size=3, mark options={solid}, only marks, forget plot]
table {%
0 50
};
\addplot [semithick, black, mark=*, mark size=3, mark options={solid}, only marks, forget plot]
table {%
0 110
};
\end{axis}

\end{tikzpicture}}
		} 
	}
	\hfill
	\subfloat[\label{fig:Fig_4b_scaling_regimes_only_linear}]{%
		\resizebox{!}{0.415\linewidth}{\footnotesize
			{
\begin{tikzpicture}

\definecolor{darkgray176}{RGB}{176,176,176}
\definecolor{lightgray204}{RGB}{204,204,204}

\begin{axis}[
legend cell align={left},
legend style={
  fill opacity=0.8,
  draw opacity=1,
  text opacity=1,
  at={(0.5,0.09)},
  anchor=south,
  draw=lightgray204
},
tick align=outside,
tick pos=left,
title={\(\displaystyle N=50\) symbols, \(\displaystyle \sigma^2=5\) W},
x grid style={darkgray176},
xlabel={\(\displaystyle  R \) [bits/channel use]},
xmin=0, xmax=2.5541209043761,
xtick style={color=black},
y grid style={darkgray176},
ylabel={\(\displaystyle P_{\mathrm{cons}}\) [W]},
ymin=0, ymax=204.433439236556,
ytick style={color=black}
]
\addplot [thick, red, dashed]
table {%
0.001 111.1116612733
0.16678057820624 124.778907954212
0.33256115641248 131.495188268546
0.498341734618721 137.117102948317
0.664122312824961 142.278649837764
0.829902891031201 147.226570957279
0.995683469237441 152.09040737885
1.16146404744368 156.950668994388
1.32724462564992 161.863542318928
1.49302520385616 166.871827669091
1.6588057820624 172.010461470869
1.82458636026864 177.309583353922
1.99036693847488 182.796370234451
2.15614751668112 188.496201551789
2.32192809488736 194.433439236556
};
\addlegendentry{Uniform alloc.}
\addplot [thick, blue]
table {%
0.01 50.622040964811
0.175137721063383 60.8942836985071
0.340275442126766 71.1665264322031
0.505413163190149 81.4387691658992
0.670550884253532 91.7110118995953
0.835688605316915 101.983254633291
1.0008263263803 112.255497366987
1.16596404744368 122.527740100683
1.33110176850706 132.79998283438
1.49623948957045 143.072225568076
1.66137721063383 153.344468301772
1.82651493169721 163.616711035468
1.9916526527606 173.888953769164
2.15679037382398 184.16119650286
2.32192809488736 194.433439236556
};
\addlegendentry{Asympt. opt. alloc.}
\addplot [thick, blue, forget plot]
table {%
2.32192809488736 194.433439236556
2.32192809488736 194.433439236556
2.32192809488736 194.433439236556
2.32192809488736 194.433439236556
2.32192809488736 194.433439236556
2.32192809488736 194.433439236556
2.32192809488736 194.433439236556
2.32192809488736 194.433439236556
2.32192809488736 194.433439236556
2.32192809488736 194.433439236556
2.32192809488736 194.433439236556
2.32192809488736 194.433439236556
2.32192809488736 194.433439236556
2.32192809488736 194.433439236556
2.32192809488736 194.433439236556
};
\addplot [semithick, black, dotted, forget plot]
table {%
0 194.433439236556
2.32192809488736 194.433439236556
};
\addplot [semithick, black, dotted, forget plot]
table {%
2.32192809488736 0
2.32192809488736 194.433439236556
};
\addplot [semithick, black, dotted, forget plot]
table {%
3.74713422369267 0
3.74713422369267 258.826022721954
};
\addplot [semithick, black, mark=x, mark size=3, mark options={solid}, only marks]
table {%
0.01 51.743408762457
0.175137721063383 61.1905059621076
0.340275442126766 72.1239740205917
0.505413163190149 81.5840464365394
0.670550884253532 92.4977251450539
0.835688605316915 101.98853104251
1.0008263263803 112.87164426702
1.16596404744368 123.783987721169
1.33110176850706 133.249670222416
1.49623948957045 144.187400627888
1.66137721063383 153.634547693781
1.82651493169721 164.567484831409
1.9916526527606 174.028483996107
2.15679037382398 184.94114088725
2.32192809488736 194.433439236556
};
\addlegendentry{Opt. alloc.}
\addplot [thick, black, dotted, mark=o, mark size=2.5, mark options={solid,fill opacity=0}]
table {%
0 50
0.232192809488736 64.4433439236556
0.464385618977472 78.8866878473112
0.696578428466209 93.3300317709668
0.928771237954945 107.773375694622
1.16096404744368 122.216719618278
1.39315685693242 136.660063541934
1.62534966642115 151.103407465589
1.85754247590989 165.546751389245
2.08973528539863 179.9900953129
2.32192809488736 194.433439236556
};
\addlegendentry{Rush-to-sleep}
\draw (axis cs:1.92192809488736,6) node[
  scale=1.14285714285714,
  anchor=base west,
  text=black,
  rotate=0.0
]{$R_{\mathrm{max}}$};
\draw (axis cs:3.14713422369267,6) node[
  scale=1.14285714285714,
  anchor=base west,
  text=black,
  rotate=0.0
]{$R_{\mathrm{a}}$};
\draw (axis cs:0.01,179.433439236556) node[
  scale=1.14285714285714,
  anchor=base west,
  text=black,
  rotate=0.0
]{$P_{\mathrm{cons,max}}$};
\draw (axis cs:0.01,95) node[
  scale=1.14285714285714,
  anchor=base west,
  text=black,
  rotate=0.0
]{$P_{0}$};
\draw (axis cs:0.01,35) node[
  scale=1.14285714285714,
  anchor=base west,
  text=black,
  rotate=0.0
]{$P_{\mathrm{sleep}}$};
\draw (axis cs:1.5,100) node[
  scale=1.14285714285714,
  anchor=base west,
  text=black,
  rotate=0.0
]{\bfseries Linear};
\addplot [semithick, black, mark=*, mark size=3, mark options={solid}, only marks, forget plot]
table {%
3.74713422369267 0
};
\addplot [semithick, black, mark=*, mark size=3, mark options={solid}, only marks, forget plot]
table {%
2.32192809488736 0
};
\addplot [semithick, black, mark=*, mark size=3, mark options={solid}, only marks, forget plot]
table {%
0 194.433439236556
};
\addplot [semithick, black, mark=*, mark size=3, mark options={solid}, only marks, forget plot]
table {%
0 50
};
\addplot [semithick, black, mark=*, mark size=3, mark options={solid}, only marks, forget plot]
table {%
0 110
};
\end{axis}

\end{tikzpicture}}
	} }\vspace{-0.5em}
	\caption{Power consumption versus load using optimal/uniform power allocation with constant sleep mode~$\mathbf{(As4)}$.}
	\label{Fig_4_5} 
    \vspace{-2em}
\end{figure*}
%
%
%

\begin{proof}
	See Appendix~\ref{Appendix:proof_theorem_main_SU_result}.
\end{proof}

\begin{remark}
	For large $N$, the solution has a simple form: the ceil-floor operator is replaced by a rounding operator. The power and number of bits per-activated time slot ($\tilde{P}$ and $\tilde{R}$) are constant in the linear regime. As shown in Fig.~\ref{fig:Fig_4a_scaling_regimes} and~\ref{fig:Fig_4b_scaling_regimes_only_linear}, the approximation error can barely be seen and is already negligible for small values of $N$ such as 10 or 50.
\end{remark}

\begin{remark}[Scaling regimes]
	Lemma~\ref{theor:main_SU_result} puts forward two scaling regimes: linear and exponential. They can easily be identified in Fig.~\ref{fig:Fig_4a_scaling_regimes}. In Fig.~\ref{fig:Fig_4b_scaling_regimes_only_linear}, a higher noise regime is considered so that only the linear regime is present ($\tilde{R}=R_{\mathrm{max}}$). 
\end{remark}

\begin{remark}[Gain with respect to uniform allocation]
	Fig.~\ref{fig:Fig_4a_scaling_regimes} and~\ref{fig:Fig_4b_scaling_regimes_only_linear} also plot the gain with respect to a uniform allocation, \textit{i.e.}, $p_n=\left(2^{R}-1\right)\sigma^2$ for $n=0,...,N-1$. As expected, the gain is larger at low load (rate) where the optimal allocation only activates few resources. 
\end{remark}

\begin{remark}[Rush-to-sleep]
	If $R_{\mathrm{max}}\leq R_{\mathrm{a}}$ (high noise regime, Fig.~\ref{fig:Fig_4b_scaling_regimes_only_linear}), a rush-to-sleep approach is optimal: active time slots transmitting at full power $P_{\mathrm{max}}$ and rate $R_{\mathrm{max}}$. This minimizes $N_{\mathrm{a}}$ and maximize sleep duration. If $R_{\mathrm{a}}<R_{\mathrm{max}}$ (low noise regime, Fig.~\ref{fig:Fig_4a_scaling_regimes}), reduced transmit power should be used instead and not even using sleep for $R>R_{\mathrm{a}}$. It is then better to fully activate the system with a uniform allocation. 
\end{remark}

\begin{remark}[Converse problem]
	We here consider the minimization of $P_{\mathrm{cons}}$ for a fixed $R$. The maximization of $R$ for a fixed $P_{\mathrm{cons}}$ can also be considered. It could occur if power is available and should directly be used, \textit{e.g.}, a solar panel or wind turbine without battery and/or not connected to the grid. The solution can be found using same methodology or directly by ``reverting" the result of Theor.~\ref{theor:main_SU_result}. Indeed, the allocation that minimizes $P_{\mathrm{cons}}$ for a fixed rate $R$ is also the allocation that maximizes $R$ for the minimal value of $P_{\mathrm{cons}}$ of the inital problem. Scaling regimes of $R$ as a function of $P_{\mathrm{cons}}$ will be linear and logarithmic. It is here omitted due to space constraints.
	
\end{remark}

\subsection{Trade-Off: Spectral Efficiency versus Energy Efficiency}

%

Considering that the transmission occupies a bandwidth $B=(1+\alpha_{\mathrm{rol}})/T$, where $\alpha_{\mathrm{rol}} \in [0,1]$ is the roll-off factor, so that the \gls{se} and the \gls{ee} are
\begin{align}
	\mathrm{SE}&=\frac{R}{TB}=\frac{R}{1+\alpha_{\mathrm{rol}}} \quad \text{[bits/s/Hz]},\ \mathrm{EE}=\frac{B \mathrm{SE}}{P_{\mathrm{cons}}}=\frac{R}{TP_{\mathrm{cons}}} \quad \text{[bits/Joule]}. \label{eq:definition_EE}
\end{align}
Given these relationships, the trade-off \gls{se}-\gls{ee} can be easily identified from previous results.
\begin{corollary}\label{corol:EE_SE_trade_off}
	Under $\mathbf{(As1)}-\mathbf{(As5)}$, the optimal \gls{ee} for a given \gls{se} is:
	
	$\square$ If $\mathrm{SE}\leq \tilde{R}/(1+\alpha_{\mathrm{rol}})$ (implying $R\leq \tilde{R}$), as $N\rightarrow+\infty$,
	\begin{align*}
		\mathrm{EE}&=\frac{B\mathrm{SE}}{P_{\mathrm{sleep}}+ \mathrm{SE}\frac{1+\alpha_{\mathrm{rol}}}{\tilde{R}}\left(P_0-P_{\mathrm{sleep}} + \gamma \tilde{P}^{\alpha}\right)}+O(1/N).
	\end{align*}

	$\square$ If $\mathrm{SE}>\tilde{R}/(1+\alpha_{\mathrm{rol}})$ (implying $R>\tilde{R}$):
	\begin{align*}
		\mathrm{EE}&=\frac{B \mathrm{SE}}{P_0 + \gamma \sigma^{2\alpha} \left(2^{\mathrm{SE}(1+\alpha_{\mathrm{rol}})}-1\right)^{\alpha}}.
	\end{align*}	
	
\end{corollary}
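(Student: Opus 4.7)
The plan is to derive Corollary 1 as a direct transcription of Theorem 1 through the definitions in equation~(\ref{eq:definition_EE}), since for fixed bandwidth $B$ and symbol duration $T$, the quantities $\mathrm{SE}$ and $R$ are in one-to-one affine correspondence, and the rate-constrained minimization of $P_{\mathrm{cons}}$ is precisely the problem that yields maximal $\mathrm{EE}$ for the associated $\mathrm{SE}$.

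First I would observe the equivalence between the two optimization problems. Fixing $\mathrm{SE}$ is equivalent to fixing $R = \mathrm{SE}(1+\alpha_{\mathrm{rol}})$ by~(\ref{eq:definition_EE}). Since $\mathrm{EE} = B\,\mathrm{SE}/P_{\mathrm{cons}}$ and the numerator is then a constant, the allocation $(N_{\mathrm{a}}, p_0,\dots,p_{N_{\mathrm{a}}-1})$ that maximizes $\mathrm{EE}$ under the $\mathrm{SE}$ constraint is exactly the one that minimizes $P_{\mathrm{cons}}$ under the rate constraint~(\ref{eq:basic_problem}). This legitimizes the direct use of Theorem~\ref{theor:main_SU_result} without reopening any optimization.

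Next I would translate the threshold $R\leq \tilde{R}$ into the SE-domain threshold $\mathrm{SE}\leq \tilde{R}/(1+\alpha_{\mathrm{rol}})$ and handle the two regimes separately. In the linear regime, I plug the closed-form
\[
P_{\mathrm{cons}} \;=\; P_{\mathrm{sleep}} + R\,\frac{P_0 - P_{\mathrm{sleep}} + \gamma \tilde{P}^{\alpha}}{\tilde{R}} + \epsilon
\]
from Theorem~\ref{theor:main_SU_result} into $\mathrm{EE} = B\,\mathrm{SE}/P_{\mathrm{cons}}$, substitute $R = \mathrm{SE}(1+\alpha_{\mathrm{rol}})$, and then propagate the $O(1/N)$ residual through $1/P_{\mathrm{cons}}$. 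Since $P_{\mathrm{cons}}$ is bounded below by $P_{\mathrm{sleep}} > 0$ uniformly in $N$, a first-order expansion of $1/(x+\epsilon)$ around a positive $x$ shows that the perturbation of $\mathrm{EE}$ is still $O(1/N)$, which justifies the asymptotic equality stated in the corollary. In the exponential regime I simply substitute $P_{\mathrm{cons}} = P_0 + \gamma\sigma^{2\alpha}(2^{R}-1)^{\alpha}$ with $R = \mathrm{SE}(1+\alpha_{\mathrm{rol}})$ into $B\,\mathrm{SE}/P_{\mathrm{cons}}$; no asymptotics are needed here because $N_{\mathrm{a}} = N$ is an exact statement of the theorem.

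There is essentially no technical obstacle: the only mildly delicate point is the preservation of the $O(1/N)$ error bound after inversion, which is handled by the lower bound $P_{\mathrm{cons}} \geq P_{\mathrm{sleep}} > 0$ guaranteed by assumption~$\mathbf{(As4)}$. The rest is bookkeeping of the definitions in~(\ref{eq:definition_EE}) and matching the regime boundary in $R$ with that in $\mathrm{SE}$.
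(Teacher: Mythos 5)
Your proposal is correct and follows essentially the same route as the paper: the paper's proof likewise observes that maximizing $\mathrm{EE}$ at fixed $\mathrm{SE}$ is equivalent to minimizing $P_{\mathrm{cons}}$ at fixed $R=\mathrm{SE}(1+\alpha_{\mathrm{rol}})$ and then substitutes the two regimes of Theorem~\ref{theor:main_SU_result} directly. Your extra care in propagating the $O(1/N)$ term through the inversion of $P_{\mathrm{cons}}$ is a detail the paper leaves implicit, but it does not change the argument.
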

\begin{proof}
	From the definition of the \gls{ee} in (\ref{eq:definition_EE}), it is clear that its maximization is equivalent to the minimization of $P_{\mathrm{cons}}$. As a result, the optimal allocations and the results of Theor.~\ref{theor:main_SU_result} can directly be used, which lead to the above results. The two cases correspond to the linear and exponential scaling regimes of Theor.~\ref{theor:main_SU_result}, respectively.
\end{proof}

\begin{corollary}\label{corol:max_EE}
	Under $\mathbf{(As1)}-\mathbf{(As5)}$, the \gls{se} that maximizes the optimal \gls{ee} given in Corollary~\ref{corol:EE_SE_trade_off} is: 
	
	$\square$ If $R_{\mathrm{a}} \geq R_{\mathrm{max}}$:
	\begin{align*}
		\bar{\mathrm{SE}}=\mathrm{SE}_{\mathrm{max}}&=\frac{R_{\mathrm{max}}}{1+\alpha_{\mathrm{rol}}},\ 		\mathrm{EE}_{\mathrm{max}}=\frac{B\mathrm{SE}_{\mathrm{max}}}{P_0 + \gamma {P}_{\mathrm{max}}^{\alpha}}.
	\end{align*}
	
	$\square$ If $R_{\mathrm{a}} < R_{\mathrm{max}}$: 
	\begin{align*}
		\bar{\mathrm{SE}}&=\frac{\bar{R}}{1+\alpha_{\mathrm{rol}}},\ 		\mathrm{EE}_{\mathrm{max}}=\frac{B\bar{\mathrm{SE}}}{P_0 + \gamma \sigma^{2\alpha} \left(2^{\bar{R}}-1\right)^{\alpha}}
	\end{align*}
	where $\bar{R}$ is the constant that minimizes the convex problem
	\begin{align*}
		\min_{\bar{R}\in[R_{\mathrm{a}}, R_{\mathrm{max}}]} \frac{P_0 + \gamma \sigma^{2\alpha} \left(2^{\bar{R}}-1\right)^{\alpha}}{\bar{R}}.
	\end{align*}
\end{corollary}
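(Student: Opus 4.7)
The plan is to reduce the EE maximization to a one-dimensional search by plugging the optimal $P_{\mathrm{cons}}$ from Corollary~\ref{corol:EE_SE_trade_off} into the identity $\mathrm{EE}=B\,\mathrm{SE}/P_{\mathrm{cons}}$ from~(\ref{eq:definition_EE}) and sweeping $\mathrm{SE}\in[0,\mathrm{SE}_{\mathrm{max}}]$. Because Corollary~\ref{corol:EE_SE_trade_off} already partitions that interval into a linear piece (for $R\leq\tilde{R}$) and an exponential piece (for $R>\tilde{R}$), with $\tilde{R}=\min(R_{\mathrm{a}},R_{\mathrm{max}})$, the two cases of the corollary to be proved correspond exactly to whether the exponential piece is empty ($R_{\mathrm{a}}\geq R_{\mathrm{max}}$) or not.

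For $R_{\mathrm{a}}\geq R_{\mathrm{max}}$, feasibility $\mathbf{(As5)}$ confines SE entirely to the linear regime, where EE takes the Möbius form $B\,\mathrm{SE}/(a+b\,\mathrm{SE})$ with $a=P_{\mathrm{sleep}}\geq 0$ and $b=\tfrac{1+\alpha_{\mathrm{rol}}}{R_{\mathrm{max}}}(P_0-P_{\mathrm{sleep}}+\gamma P_{\mathrm{max}}^{\alpha})>0$. Its derivative $Ba/(a+b\,\mathrm{SE})^2\geq 0$ shows monotonicity, so the maximum is at $\mathrm{SE}_{\mathrm{max}}$; direct evaluation there cancels the $P_{\mathrm{sleep}}$ terms and leaves the claimed denominator $P_0+\gamma P_{\mathrm{max}}^{\alpha}$.

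For $R_{\mathrm{a}}<R_{\mathrm{max}}$, the same Möbius argument applied to the linear branch puts its best value at the boundary $R=R_{\mathrm{a}}$, which equals the exponential-branch value at that same $R$ by direct substitution (the two formulas glue continuously through $\tilde{P}=(2^{R_{\mathrm{a}}}-1)\sigma^2$). The global maximum of EE is therefore the maximum of $R/[P_0+\gamma\sigma^{2\alpha}(2^{R}-1)^{\alpha}]$ over the compact set $[R_{\mathrm{a}},R_{\mathrm{max}}]$, or equivalently the minimum of its reciprocal $f(R)$. Continuity on a compact interval ensures a minimizer $\bar{R}$ exists, and substitution gives the stated $\mathrm{EE}_{\mathrm{max}}$.

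The main obstacle is backing up the ``convex problem'' label attached to the exponential minimization, since $f$ is not globally convex on $\mathbb{R}_{>0}$. The cleanest route I would follow is to write $f(R)=g(R)+P_{\mathrm{sleep}}/R$, where $g$ is precisely the function whose minimizer defines $R_{\mathrm{a}}$, so that $f'(R_{\mathrm{a}})=g'(R_{\mathrm{a}})-P_{\mathrm{sleep}}/R_{\mathrm{a}}^{2}=-P_{\mathrm{sleep}}/R_{\mathrm{a}}^{2}\leq 0$. This sign, together with the unimodality of $g$ already established in~(\ref{eq:R_a_delta_0}) and the convexity of $R\mapsto\gamma\sigma^{2\alpha}(2^{R}-1)^{\alpha}$ for $\alpha\in(0,1]$, implies $f$ has a single stationary point on $[R_{\mathrm{a}},R_{\mathrm{max}}]$, so $\bar{R}$ is either that unique interior critical point or the right endpoint $R_{\mathrm{max}}$ and can be located by a standard one-dimensional search.
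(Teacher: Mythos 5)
Your main line of argument is correct and is essentially the paper's own proof: in the linear regime the EE is an increasing M\"obius function of SE, so its maximum sits at the boundary $\tilde{R}/(1+\alpha_{\mathrm{rol}})$; when $R_{\mathrm{a}}\geq R_{\mathrm{max}}$ that boundary is $\mathrm{SE}_{\mathrm{max}}$ and the $P_{\mathrm{sleep}}$ terms cancel exactly as you compute, while when $R_{\mathrm{a}}<R_{\mathrm{max}}$ the exponential branch over $[R_{\mathrm{a}},R_{\mathrm{max}}]$ contains the linear-branch optimum $R_{\mathrm{a}}$ (the two branches glue continuously there), so minimizing $f(R)=\bigl(P_0+\gamma\sigma^{2\alpha}(2^{R}-1)^{\alpha}\bigr)/R$ over that interval can only improve the optimum. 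The paper argues exactly this, even more tersely.

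The last paragraph, however, is both unnecessary and partly wrong. The ``obstacle'' you identify does not exist: $f(R)=P_0/R+\gamma\sigma^{2\alpha}(2^{R}-1)^{\alpha}/R$ is a sum of two convex functions on $R>0$ --- the first because $P_0\geq 0$, the second by exactly the computation in Lemma~\ref{lemma:convexity}, whose proof reduces to showing $(e^{y}-1)^{\alpha}/y$ is convex independently of the nonnegative constant in the numerator --- so the problem in the corollary is genuinely convex and the label needs no rescue. Moreover, your rescue itself leans on the claim that $R\mapsto\gamma\sigma^{2\alpha}(2^{R}-1)^{\alpha}$ is convex for $\alpha\in(0,1]$, which is false: writing $y=R\log 2$, the second derivative of $(e^{y}-1)^{\alpha}$ in $y$ is $\alpha(e^{y}-1)^{\alpha-2}e^{y}(\alpha e^{y}-1)$, negative whenever $e^{y}<1/\alpha$, so the map is concave near the origin for every $\alpha<1$. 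Your observation that $f'(R_{\mathrm{a}})=-P_{\mathrm{sleep}}/R_{\mathrm{a}}^{2}\leq 0$ is correct and is a nice side remark (it shows $\bar{R}>R_{\mathrm{a}}$ strictly whenever $P_{\mathrm{sleep}}>0$), but the unimodality chain built on top of it is not sound as stated; simply cite Lemma~\ref{lemma:convexity} instead.
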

\begin{proof}
	See Appendix~\ref{proof:corol_max_EE}.
\end{proof}

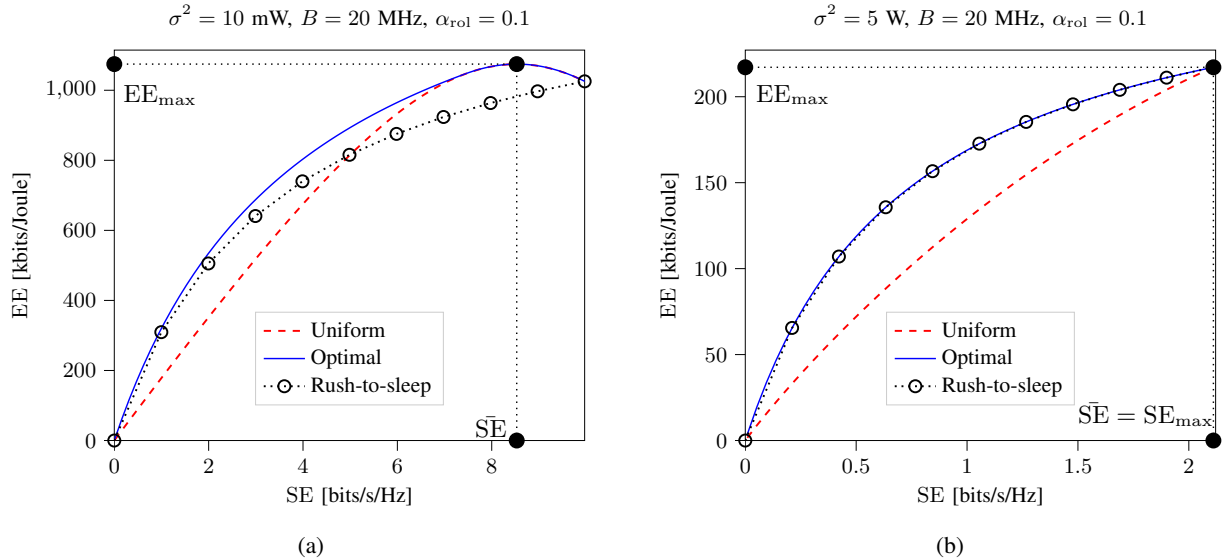
\begin{figure*} 
	\centering
	\subfloat[\label{fig:Fig_5a_SE_EE}]{%
		\resizebox{!}{0.415\linewidth}{\footnotesize
			{
\begin{tikzpicture}

\definecolor{darkgray176}{RGB}{176,176,176}
\definecolor{lightgray204}{RGB}{204,204,204}

\begin{axis}[
legend cell align={left},
legend style={
  fill opacity=0.8,
  draw opacity=1,
  text opacity=1,
  at={(0.5,0.09)},
  anchor=south,
  draw=lightgray204
},
tick align=outside,
tick pos=left,
title={\(\displaystyle \sigma^2=10\) mW, \(\displaystyle B=20\) MHz, \(\displaystyle \alpha_{\mathrm{rol}}=0.1\)},
x grid style={darkgray176},
xlabel={\(\displaystyle  \mathrm{SE} \) [bits/s/Hz]},
xmin=0, xmax=9.9695504108234,
xtick style={color=black},
y grid style={darkgray176},
ylabel={\(\displaystyle  \mathrm{EE} \) [kbits/Joule]},
ymin=0, ymax=1114.6596418829,
ytick style={color=black}
]
\addplot [thick, red, dashed]
table {%
0.00909090909090909 1.65052993074824
0.212365592799735 38.3360695969182
0.415640276508562 74.7871371431952
0.618914960217388 111.049874760477
0.822189643926215 147.131186599305
1.02546432763504 183.027577989607
1.22873901134387 218.730198330421
1.43201369505269 254.226551996757
1.63528837876152 289.501211407002
1.83856306247035 324.536133114945
2.04183774617917 359.310787099112
2.245112429888 393.802187764898
2.44838711359683 427.984869257435
2.65166179730565 461.830827969246
2.85493648101448 495.309445878728
3.0582111647233 528.387403766091
3.26148584843213 561.028591025137
3.46476053214096 593.194017640444
3.66803521584978 624.841733404174
3.87130989955861 655.926759330991
4.07458458326744 686.401036340264
4.27785926697626 716.213396518316
4.48113395068509 745.309562586502
4.68440863439391 773.632181534073
4.88768331810274 801.12089868506
5.09095800181157 827.712478714025
5.29423268552039 853.340980263864
5.49750736922922 877.937990805008
5.70078205293805 901.432928162476
5.90405673664687 923.753414677246
6.1073314203557 944.82572921359
6.31060610406453 964.575341130095
6.51388078777335 982.927528861199
6.71715547148218 999.808083881196
6.920430155191 1015.14409853259
7.12370483889983 1028.86483350483
7.32697952260866 1040.90265768348
7.53025420631749 1051.19404971919
7.73352889002631 1059.68064709119
7.93680357373514 1066.31032479507
8.14007825744396 1071.03828223906
8.34335294115279 1073.8281136851
8.54662762486162 1074.65283483871
8.74990230857044 1073.4958362043
8.95317699227927 1070.35173279981
9.15645167598809 1065.22707996193
9.35972635969692 1058.14092642078
9.56300104340575 1049.12517866547
9.76627572711458 1038.22475486394
9.9695504108234 1025.49751215315
};
\addlegendentry{Uniform}
\addplot [semithick, blue]
table {%
0.00909090909090909 3.62819393434207
0.161129013456161 61.9780586193923
0.313167117821413 116.249560323552
0.465205222186665 166.855846272782
0.617243326551918 214.156082844475
0.76928143091717 258.46399296245
0.921319535282422 300.054822883886
1.07335763964767 339.171065448477
1.22539574401293 376.027191268239
1.37743384837818 410.813582847204
1.52947195274343 443.699824016322
1.68151005710868 474.837464658875
1.83354816147393 504.362355845602
1.98558626583919 532.396631289677
2.13762437020444 559.05039607934
2.28966247456969 584.423171927347
2.44170057893494 608.605138932723
2.5937386833002 631.67820651413
2.74577678766545 653.716940317827
2.8978148920307 674.789367202456
3.04985299639595 694.957676609617
3.2018911007612 714.278833553068
3.35392920512646 732.805115952567
3.50596730949171 750.584586986483
3.65800541385696 767.661511450086
3.81004351822221 784.076723713559
3.96208162258746 799.867953719054
4.11411972695272 815.070116495325
4.26615783131797 829.715569865919
4.41819593568322 843.834344354308
4.57023404004847 857.454348723645
4.72227214441372 870.601554111446
4.87431024877898 883.300159315523
5.02634835314423 895.572739444394
5.17838645750948 907.440379853319
5.33042456187473 918.922797037655
5.48246266623999 930.038447941648
5.63450077060524 940.804628957399
5.78653887497049 951.237565730927
5.93857697933574 961.352494756054
6.09061508370099 971.163737619081
6.24265318806625 980.684768655089
6.3946912924315 989.928276688029
6.54672939679675 998.90622144948
6.698767501162 1007.62988520359
6.85080560552725 1016.10992004675
7.00284370989251 1024.35639129906
7.15488181425776 1032.37881735898
7.30691991862301 1040.18620635331
7.45895802298826 1047.78708987886
};
\addlegendentry{Optimal}
\addplot [semithick, blue, forget plot]
table {%
7.45895802298826 1047.78708987886
7.51019460233184 1050.25782905527
7.56143118167541 1052.61422473429
7.61266776101899 1054.85539078085
7.66390434036256 1056.98046690638
7.71514091970614 1058.98861977959
7.76637749904971 1060.87904413709
7.81761407839328 1062.65096389213
7.86885065773686 1064.30363324002
7.92008723708043 1065.83633775834
7.971323816424 1067.2483955002
8.02256039576758 1068.53915807891
8.07379697511115 1069.70801174206
8.12503355445473 1070.75437843322
8.1762701337983 1071.67771683943
8.22750671314188 1072.4775234224
8.27874329248545 1073.15333343163
8.32997987182902 1073.70472189743
8.3812164511726 1074.13130460181
8.43245303051617 1074.43273902532
8.48368960985975 1074.60872526788
8.53492618920332 1074.65900694144
8.5861627685469 1074.58337203274
8.63739934789047 1074.3816537339
8.68863592723404 1074.05373123914
8.73987250657762 1073.59953050556
8.79110908592119 1073.01902497604
8.84234566526477 1072.3122362625
8.89358224460834 1071.47923478754
8.94481882395192 1070.52014038284
8.99605540329549 1069.43512284235
9.04729198263906 1068.22440242886
9.09852856198264 1066.88825033203
9.14976514132621 1065.42698907662
9.20100172066979 1063.84099287917
9.25223830001336 1062.13068795195
9.30347487935694 1060.29655275272
9.35471145870051 1058.33911817917
9.40594803804408 1056.25896770681
9.45718461738766 1054.05673746944
9.50842119673123 1051.73311628108
9.55965777607481 1049.28884559871
9.61089435541838 1046.72471942509
9.66213093476195 1044.04158415098
9.71336751410553 1041.24033833642
9.7646040934491 1038.3219324307
9.81584067279268 1035.28736843079
9.86707725213625 1032.13769947813
9.91831383147982 1028.87402939395
9.9695504108234 1025.49751215315
};
\addplot [thick, black, dotted, mark=o, mark size=2.5, mark options={solid,fill opacity=0}]
table {%
0 0
0.99695504108234 309.405124061659
1.99391008216468 505.512434752993
2.99086512324702 640.922341178807
3.98782016432936 740.038091713656
4.9847752054117 815.727213261942
5.98173024649404 875.417454296525
6.97868528757638 923.696613415636
7.97564032865872 963.551415141437
8.97259536974106 997.009902588565
9.9695504108234 1025.49751215315
};
\addlegendentry{Rush-to-sleep}
\addplot [semithick, black, dotted, forget plot]
table {%
0 1074.6596418829
8.52979045339817 1074.6596418829
};
\addplot [semithick, black, dotted, forget plot]
table {%
8.52979045339817 0
8.52979045339817 1074.6596418829
};
\draw (axis cs:0.01,967.193677694614) node[
  scale=1.14285714285714,
  anchor=base west,
  text=black,
  rotate=0.0
]{$\mathrm{EE}_{\mathrm{max}}$};
\draw (axis cs:7.4635666467234,10) node[
  scale=1.14285714285714,
  anchor=base west,
  text=black,
  rotate=0.0
]{$\bar{\mathrm{SE}}$};
\addplot [semithick, black, mark=*, mark size=3, mark options={solid}, only marks, forget plot]
table {%
8.52979045339817 1074.6596418829
};
\addplot [semithick, black, mark=*, mark size=3, mark options={solid}, only marks, forget plot]
table {%
0 1074.6596418829
};
\addplot [semithick, black, mark=*, mark size=3, mark options={solid}, only marks, forget plot]
table {%
8.52979045339817 0
};
\end{axis}

\end{tikzpicture}}
		} }
	\hfill
	\subfloat[\label{fig:Fig_5b_SE_EE_single_regime}]{%
		\resizebox{!}{0.415\linewidth}{\footnotesize
			{
\begin{tikzpicture}

\definecolor{darkgray176}{RGB}{176,176,176}
\definecolor{lightgray204}{RGB}{204,204,204}

\begin{axis}[
legend cell align={left},
legend style={
  fill opacity=0.8,
  draw opacity=1,
  text opacity=1,
  at={(0.5,0.09)},
  anchor=south,
  draw=lightgray204
},
tick align=outside,
tick pos=left,
title={\(\displaystyle \sigma^2=5\) W, \(\displaystyle B=20\) MHz, \(\displaystyle \alpha_{\mathrm{rol}}=0.1\)},
x grid style={darkgray176},
xlabel={\(\displaystyle  \mathrm{SE} \) [bits/s/Hz]},
xmin=0, xmax=2.12084372262487,
xtick style={color=black},
y grid style={darkgray176},
ylabel={\(\displaystyle  \mathrm{EE} \) [kbits/Joule]},
ymin=0, ymax=227.12764336352,
ytick style={color=black}
]
\addplot [thick, red, dashed]
table {%
0.00909090909090909 1.60162777800967
0.0519838236528268 8.77447193122314
0.0948767382147444 15.60946104441
0.137769652776662 22.2119619023095
0.18066256733858 28.625731616191
0.223555481900497 34.8762292952778
0.266448396462415 40.9804238493972
0.309341311024333 46.9505360178294
0.35223422558625 52.7958241550225
0.395127140148168 58.5235584631483
0.438020054710086 64.1396029806855
0.480912969272003 69.6487870387446
0.523805883833921 75.0551546569259
0.566698798395839 80.3621388393462
0.609591712957756 85.5726874244567
0.652484627519674 90.6893564505022
0.695377542081592 95.7143810260267
0.738270456643509 100.649730191821
0.781163371205427 105.497150120145
0.824056285767345 110.258198642526
0.866949200329262 114.934273213974
0.90984211489118 119.526633829966
0.952735029453098 124.036422007061
0.995627944015015 128.464676654436
1.03852085857693 132.812347461469
1.08141377313885 137.080306280038
1.12430668770077 141.269356872501
1.16719960226269 145.380243315984
1.2100925168246 149.413657292958
1.25298543138652 153.370244451808
1.29587834594844 157.250609985308
1.33877126051036 161.055323547117
1.38166417507227 164.784923604472
1.42455708963419 168.439921307918
1.46745000419611 172.020803945024
1.51034291875803 175.528038033878
1.55323583331994 178.962072103091
1.59612874788186 182.323339197687
1.63902166244378 185.612259144152
1.6819145770057 188.829240602994
1.72480749156762 191.974682932944
1.76770040612953 195.048977887542
1.81059332069145 198.052511161929
1.85348623525337 200.985663805236
1.89637914981529 203.848813511931
1.9392720643772 206.642335803722
1.98216497893912 209.36660511215
2.02505789350104 212.021995770735
2.06795080806296 214.608882924473
2.11084372262487 217.12764336352
};
\addlegendentry{Uniform}
\addplot [semithick, blue]
table {%
0.00909090909090909 3.59168019212361
0.0519838236528268 19.4125343976966
0.0948767382147444 33.5895051737831
0.137769652776662 46.366153104435
0.18066256733858 57.9401880753927
0.223555481900497 68.4737736603736
0.266448396462415 78.1011720988442
0.309341311024333 86.9344975882804
0.35223422558625 95.0681023570912
0.395127140148168 102.58196013578
0.438020054710086 109.544304592942
0.480912969272003 116.013707358697
0.523805883833921 122.040729759892
0.566698798395839 127.669246918312
0.609591712957756 132.937517606966
0.652484627519674 137.879055051892
0.695377542081592 142.523340588031
0.738270456643509 146.896412288333
0.781163371205427 151.021353395399
0.824056285767345 154.918699905134
0.866949200329262 158.606782496069
0.90984211489118 162.102014820122
0.952735029453098 165.419137721338
0.995627944015015 168.571427047685
1.03852085857693 171.570871234314
1.08141377313885 174.428323666797
1.12430668770077 177.153633906348
1.16719960226269 179.755761121065
1.2100925168246 182.242872475934
1.25298543138652 184.622428758224
1.29587834594844 186.901259129381
1.33877126051036 189.085626580999
1.38166417507227 191.181285416155
1.42455708963419 193.193531867051
1.46745000419611 195.12724878647
1.51034291875803 196.986945207019
1.55323583331994 198.776791442839
1.59612874788186 200.500650308962
1.63902166244378 202.162104950219
1.6819145770057 203.764483701632
1.72480749156762 205.310882343268
1.76770040612953 206.804184062673
1.81059332069145 208.247077395727
1.85348623525337 209.642072380811
1.89637914981529 210.991515130471
1.9392720643772 212.297600998542
1.98216497893912 213.562386498182
2.02505789350104 214.787800106911
2.06795080806296 215.975652078043
2.11084372262487 217.12764336352
};
\addlegendentry{Optimal}
\addplot [thick, black, dotted, mark=o, mark size=2.5, mark options={solid,fill opacity=0}]
table {%
0 0
0.211084372262487 65.5100618343312
0.422168744524975 107.031682035403
0.633253116787462 135.701896757407
0.84433748904995 156.687583293743
1.05542186131244 172.71317125985
1.26650623357492 185.351330996023
1.47759060583741 195.573432872307
1.6886749780999 204.011853319836
1.89975935036239 211.095988038652
2.11084372262487 217.12764336352
};
\addlegendentry{Rush-to-sleep}
\addplot [semithick, black, dotted, forget plot]
table {%
0 217.12764336352
2.11084372262487 217.12764336352
};
\addplot [semithick, black, dotted, forget plot]
table {%
2.11084372262487 0
2.11084372262487 217.12764336352
};
\addplot [semithick, blue, forget plot]
table {%
2.11084372262487 217.12764336352
2.11084372262487 217.12764336352
2.11084372262487 217.12764336352
2.11084372262487 217.12764336352
2.11084372262487 217.12764336352
2.11084372262487 217.12764336352
2.11084372262487 217.12764336352
2.11084372262487 217.12764336352
2.11084372262487 217.12764336352
2.11084372262487 217.12764336352
2.11084372262487 217.12764336352
2.11084372262487 217.12764336352
2.11084372262487 217.12764336352
2.11084372262487 217.12764336352
2.11084372262487 217.12764336352
2.11084372262487 217.12764336352
2.11084372262487 217.12764336352
2.11084372262487 217.12764336352
2.11084372262487 217.12764336352
2.11084372262487 217.12764336352
2.11084372262487 217.12764336352
2.11084372262487 217.12764336352
2.11084372262487 217.12764336352
2.11084372262487 217.12764336352
2.11084372262487 217.12764336352
2.11084372262487 217.12764336352
2.11084372262487 217.12764336352
2.11084372262487 217.12764336352
2.11084372262487 217.12764336352
2.11084372262487 217.12764336352
2.11084372262487 217.12764336352
2.11084372262487 217.12764336352
2.11084372262487 217.12764336352
2.11084372262487 217.12764336352
2.11084372262487 217.12764336352
2.11084372262487 217.12764336352
2.11084372262487 217.12764336352
2.11084372262487 217.12764336352
2.11084372262487 217.12764336352
2.11084372262487 217.12764336352
2.11084372262487 217.12764336352
2.11084372262487 217.12764336352
2.11084372262487 217.12764336352
2.11084372262487 217.12764336352
2.11084372262487 217.12764336352
2.11084372262487 217.12764336352
2.11084372262487 217.12764336352
2.11084372262487 217.12764336352
2.11084372262487 217.12764336352
2.11084372262487 217.12764336352
};
\draw (axis cs:0.01,197.12764336352) node[
  scale=1.14285714285714,
  anchor=base west,
  text=black,
  rotate=0.0
]{$\mathrm{EE}_{\mathrm{max}}$};
\draw (axis cs:1.46084372262487,10) node[
  scale=1.14285714285714,
  anchor=base west,
  text=black,
  rotate=0.0
]{$\bar{\mathrm{SE}}=\mathrm{SE}_{\mathrm{max}}$};
\addplot [semithick, black, mark=*, mark size=3, mark options={solid}, only marks, forget plot]
table {%
2.11084372262487 217.12764336352
};
\addplot [semithick, black, mark=*, mark size=3, mark options={solid}, only marks, forget plot]
table {%
0 217.12764336352
};
\addplot [semithick, black, mark=*, mark size=3, mark options={solid}, only marks, forget plot]
table {%
2.11084372262487 0
};
\end{axis}

\end{tikzpicture}}
		} }
	\caption{Energy efficiency versus spectral efficiency for optimal/uniform allocation with constant sleep mode~$\mathbf{(As4)}$.}
	\label{fig5} 
    \vspace{-1em}
\end{figure*}

\begin{remark}[Scaling of EE-SE]
	Fig.~\ref{fig:Fig_5a_SE_EE} and \ref{fig:Fig_5b_SE_EE_single_regime} plot the EE as a function of the SE for the optimal and uniform allocation. Most gain in terms of EE (not absolute energy) is obtained at medium SE. The plots are obtained by varying the rate $R$, all other parameters being fixed.
\end{remark}

\begin{remark}[Optimal EE]
	As shown in Fig.~\ref{fig:Fig_5a_SE_EE}, for a low normalized noise
	, the maximal EE is not obtained at maximal SE while it is for a higher noise
	, as shown in Fig.~\ref{fig:Fig_5b_SE_EE_single_regime}.
\end{remark}

\section{Optimal Allocation for Piecewise Constant Successive Sleep Modes}
\label{section:successive_sleep}

Fig.~\ref{fig:Fig_4a_scaling_regimes} and \ref{fig:Fig_4b_scaling_regimes_only_linear} have shown promising gains to reduce consumed power at low load. However, they are still limited by the relatively high sleep power consumption $P_{\mathrm{sleep}}$. Similarly, Fig.~\ref{fig:Fig_5a_SE_EE} and~\ref{fig:Fig_5b_SE_EE_single_regime} can seem limited as one would hope to obtain an EE that is approximately flat as a function of the SE. The reason is the same: a too high $P_{\mathrm{sleep}}$. 

To drastically reduce energy consumption at low-to-medium load, it is of paramount importance to implement successive sleep modes and use a frame duration long enough so that the system can enter these sleep modes. 
To find the optimal allocation, the iterative algorithm of Prop.~\ref{prop:optimal} can be used, which requires to solve problem~(\ref{eq:theor_no_max_problem}) at each iteration. This is an integer programming problem which can have a significant complexity. If the problem is relaxed by considering $N_{\mathrm{a}}$ continuous, it remains challenging to solve as it implies the minimization of the concave function $E_{\mathrm{sleep}}(t)$ (Prop.~\ref{prop:concavity_E_sleep}).

If the sleep power consumption is assumed piecewise constant, according to $\mathbf{(As3)}$
, a simple allocation can still be found. This sleep power model was detailed in Section~\ref{subsection:sleep_energy_consumption} and Fig.~\ref{fig:Fig_1_P_sleep}. For a given $N_{\mathrm{a}}$, only sleep modes such that $T_s\leq (N-N_{\mathrm{a}})T$ 
can be entered. We define $N_{\mathrm{a},s}^+=N-\floor{T_{s}/T}$ so that sleep mode $s$ can be used if $N_{\mathrm{a}}\leq N_{\mathrm{a},s}^+$. Depending on the target rate $R$, it might be unfeasible to use a given sleep mode because of the maximal power constraint per time slot. A minimum of active time slots $ \frac{NR}{R_\mathrm{max}}$ is required to satisfy it. Otherwise, the power per time slot would have to be higher than $P_{\mathrm{max}}$ to satisfy the rate constraint. Mode $s$ is thus feasible only if $\frac{NR}{R_\mathrm{max}}\leq N_{\mathrm{a},s}^+$. As a result, deepest sleep modes are only possible for low values of $R$, which intuitively makes sense. For a target rate $R$, we define the set of feasible sleep modes as $\mathcal{S}_R=\left\lbrace s \ | \ s \in \{0,\cdots,S-1\}  \text{ and } \frac{NR}{R_\mathrm{max}}\leq N_{\mathrm{a},s}^+ \right\rbrace $. Moreover, we generalize the definition of $R_{\mathrm{a}}$ per sleep mode as the constant $R_{\mathrm{a}}(s)$ that minimizes the convex problem
\begin{align*}
	R_{\mathrm{a}}(s)= \arg \min_{x \geq 0} \frac{P_0- P_{s} +\gamma \sigma^{2\alpha}\left(2^{x}-1\right)^{\alpha}}{x}.
\end{align*}
We also define $\tilde{R}_s=\min(R_{\mathrm{a}}(s),R_{\mathrm{max}})$. 

\begin{theorem} \label{theor:successive_sleep_mode}
	Under $\mathbf{(As1)}-\mathbf{(As3)}, \mathbf{(As5)}$, as $N\rightarrow +\infty$, the solution of problem~(\ref{eq:basic_problem}) is found by computing for all feasible sleep modes $s \in \mathcal{S}_R$
	\begin{align*}
		N_{\mathrm{a},s}&=\round{\min\left(NR/\tilde{R}_s,N_{\mathrm{a},s}^+\right)},\ p_{n,s}=\begin{cases}
			(2^{\frac{RN}{N_{\mathrm{a},s}}}-1)\sigma^2 & \text{if }n=0,...,N_{\mathrm{a},s}-1\\
			0 & \text{otherwise}
		\end{cases}\\
		P_{\mathrm{cons},s}&=\frac{N_{\mathrm{a},s}}{N}\left(P_0 + \gamma \sigma^{2\alpha} \left(2^{\frac{RN}{N_{\mathrm{a},s}}}-1\right)^{\alpha}\right) + \frac{E_{\mathrm{sleep}}((N-N_\mathrm{a,s})T)}{NT}+ O({1}/{N})
	\end{align*}
	and choosing among these modes the one that has the minimal $P_{\mathrm{cons},s}$.
	
\end{theorem}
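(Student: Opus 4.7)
The plan is to reduce the joint optimization over $(N_{\mathrm{a}},p_0,\ldots,p_{N_{\mathrm{a}}-1})$ to a one-dimensional optimization over $N_{\mathrm{a}}$ only, then exploit the piecewise structure of $E_{\mathrm{sleep}}$ under $\mathbf{(As3)}$ to decompose it into $|\mathcal{S}_R|$ sub-problems of the type already solved in Theorem~\ref{theor:main_SU_result}. First, for any fixed value of $N_{\mathrm{a}}$, the sleep energy term $E_{\mathrm{sleep}}((N-N_{\mathrm{a}})T)$ in $P_{\mathrm{cons}}$ does not depend on the $p_n$'s, so the argument of Lemma~\ref{lemma:PA_var_problem} applies verbatim: by Jensen's inequality on the concave function $p\mapsto p^{\alpha}$ combined with the rate constraint, uniform allocation $p_n=(2^{RN/N_{\mathrm{a}}}-1)\sigma^2$ minimizes the load-dependent part. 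Combined with $\mathbf{(As5)}$, this is feasible provided $N_{\mathrm{a}}\geq NR/R_{\mathrm{max}}$, which explains the definition of $\mathcal{S}_R$.

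The next step is to note that $\mathbf{(As3)}$ makes $t\mapsto E_{\mathrm{sleep}}(t)$ continuous and piecewise affine with $S$ pieces. Letting $N_{\mathrm{a}}$ range continuously in $[0,N]$ (asymptotically valid as $N\to\infty$, exactly as in Theorem~\ref{theor:main_SU_result}), the domain splits into intervals corresponding to each sleep mode: mode $s$ is active when $N_{\mathrm{a}}\in(N_{\mathrm{a},s+1}^{+},N_{\mathrm{a},s}^{+}]$. On such an interval, $E_{\mathrm{sleep}}((N-N_{\mathrm{a}})T)=C_s+(N-N_{\mathrm{a}})TP_s$ for a constant $C_s=\sum_{s'<s}P_{s'}(T_{s'+1}-T_{s'})-T_s P_s$, which, after dividing by $NT$, contributes a constant $P_s$ plus a term $C_s/(NT)=O(1/N)$. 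Consequently, on mode $s$'s interval,
\begin{align*}
P_{\mathrm{cons}}(N_{\mathrm{a}})=\tfrac{N_{\mathrm{a}}}{N}\bigl(P_0-P_s+\gamma\sigma^{2\alpha}(2^{RN/N_{\mathrm{a}}}-1)^{\alpha}\bigr)+P_s+O(1/N),
\end{align*}
which is exactly the cost function of Theorem~\ref{theor:main_SU_result} with $P_{\mathrm{sleep}}$ replaced by $P_s$, and is therefore convex in $N_{\mathrm{a}}$ with unconstrained minimizer $NR/\tilde{R}_s$ where $\tilde{R}_s=\min(R_{\mathrm{a}}(s),R_{\mathrm{max}})$.

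From here I would conclude by a standard piecewise convex optimization argument. Since $P_{\mathrm{cons}}$ is continuous at each mode boundary and convex on each piece, its global minimum over the feasible set $\bigcup_{s\in\mathcal{S}_R}(N_{\mathrm{a},s+1}^{+},N_{\mathrm{a},s}^{+}]$ is attained either at an interior local minimum $NR/\tilde{R}_s$ (when it falls inside mode $s$'s interval) or at an upper boundary $N_{\mathrm{a},s}^{+}$ (when $NR/\tilde{R}_s$ exceeds it, since the piecewise-convex cost is then decreasing up to that boundary). Both cases are captured by the candidate $N_{\mathrm{a},s}=\mathrm{round}\bigl(\min(NR/\tilde{R}_s,N_{\mathrm{a},s}^{+})\bigr)$; scanning all $s\in\mathcal{S}_R$ and retaining the one with smallest $P_{\mathrm{cons},s}$ therefore yields the true minimum up to an $O(1/N)$ rounding penalty (the rounding error contributes $O(1/N)$ by the same argument as in Theorem~\ref{theor:main_SU_result}).

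The main obstacle will be rigorously handling the boundary transitions: one has to argue that the true optimizer either coincides with some $NR/\tilde{R}_s$ lying inside mode $s$'s interval, or with a mode-transition point $N_{\mathrm{a},s}^{+}$, and that the latter case is indeed reached by saturating $\min(NR/\tilde{R}_s,N_{\mathrm{a},s}^{+})$ for at least one $s$. A clean way to do this is to observe that if $NR/\tilde{R}_s>N_{\mathrm{a},s}^{+}$ for mode $s$, then the cost restricted to mode $s$ is strictly decreasing on its entire interval, and continuity of $P_{\mathrm{cons}}$ at $N_{\mathrm{a},s}^{+}$ passes the minimum onto the adjacent shallower mode without loss; iterating shows that the enumeration over $\mathcal{S}_R$ exhausts every candidate that could realize the minimum.
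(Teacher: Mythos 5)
Your proposal follows essentially the same route as the paper: fix a feasible sleep mode $s$, observe that under $\mathbf{(As3)}$ the sleep term becomes affine in the sleep duration and hence, after division by $NT$, a constant-$P_s$ contribution plus an $O(1/N)$ offset, reduce to the $\mathbf{(As4)}$-type problem of Theorem~\ref{theor:main_SU_result} with $P_{\mathrm{sleep}}$ replaced by $P_s$ and the cap $N$ replaced by $N_{\mathrm{a},s}^+$, and enumerate over $s\in\mathcal{S}_R$; your extra care with the piecewise-convex boundary transitions only makes explicit what the paper's proof and its footnote handle implicitly. One caveat: your parenthetical claim that uniformity of the per-slot powers follows ``by Jensen's inequality on the concave function $p\mapsto p^{\alpha}$'' is not correct --- for a concave function Jensen runs the wrong way (uniform allocation \emph{maximizes} $\sum_n p_n^\alpha$ at fixed total power), and the actual justification is the critical-point analysis of Lemma~\ref{lemma:uniform_should_be} underlying Lemma~\ref{lemma:PA_var_problem}, which you correctly cite, so the conclusion stands.
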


\begin{proof}
	See Appendix~\ref{Appendix:proof_theorem_successive_sleep_mode}.
\end{proof}

\begin{figure*} 
	\centering
	\subfloat[\label{fig:Fig_6a_scaling_regimes_P_sleep_decreasing}]{%
		\resizebox{!}{0.415\linewidth}{\footnotesize
			{
\begin{tikzpicture}

\definecolor{darkgray176}{RGB}{176,176,176}
\definecolor{lightgray204}{RGB}{204,204,204}

\begin{axis}[
legend cell align={left},
legend style={
  fill opacity=0.8,
  draw opacity=1,
  text opacity=1,
  at={(0.6,0.04)},
  anchor=south,
  draw=lightgray204
},
tick align=outside,
tick pos=left,
title={\(\displaystyle \sigma^2=10\) mW, \(\displaystyle NT=200\) ms},
x grid style={darkgray176},
xlabel={\(\displaystyle  R \) [bits/channel use]},
xmin=0, xmax=11.514830724501,
xtick style={color=black},
y grid style={darkgray176},
ylabel={\(\displaystyle P_{\mathrm{cons}}\) [W]},
ymin=0, ymax=204.433439236556,
ytick style={color=black}
]
\addplot [semithick, red, dashed]
table {%
0.01 110.157458178149
0.58665818167925 111.337362872287
1.1633163633585 112.102135644248
1.73997454503775 112.888248621515
2.316632726717 113.767321891808
2.89329090839625 114.787321203048
3.4699490900755 115.994323677541
4.04660727175475 117.439091787158
4.623265453434 119.180699857867
5.19992363511325 121.289543619031
5.7765818167925 123.850449775184
6.35323999847174 126.966214174822
6.92989818015099 130.761791110897
7.50655636183024 135.389336977828
8.08321454350949 141.034326999945
8.65987272518874 147.922998339548
9.23653090686799 156.331422040495
9.81318908854724 166.596569545132
10.3898472702265 179.12981843192
10.9665054519057 194.433439236556
};
\addlegendentry{Uniform}
\addplot [semithick, blue]
table {%
0.01 7.91841260101241
0.58665818167925 17.6133015062002
1.1633163633585 27.3079580016899
1.73997454503775 37.0027654759946
2.316632726717 46.6975056241276
2.89329090839625 56.3922962232647
3.4699490900755 66.0870532530065
4.04660727175475 75.7818366033257
4.623265453434 85.4766008835062
5.19992363511325 95.1713802024471
5.7765818167925 104.866148514656
6.35323999847174 114.560925266597
6.92989818015099 124.030057271859
7.50655636183024 132.208239696246
8.08321454350949 140.386422366376
8.65987272518874 147.922998339548
9.23653090686799 156.331422040495
9.81318908854724 166.596569545132
10.3898472702265 179.12981843192
10.9665054519057 194.433439236556
};
\addlegendentry{Asympt. opt.}
\addplot [semithick, black, mark=x, mark size=3, mark options={solid}, only marks]
table {%
0.01 7.91841260101241
0.58665818167925 17.6133015062002
1.1633163633585 27.3079580016899
1.73997454503775 37.0027654759946
2.316632726717 46.6975056241276
2.89329090839625 56.3922962232647
3.4699490900755 66.0870532530065
4.04660727175475 75.7818366033257
4.623265453434 85.4766008835062
5.19992363511325 95.1713802024471
5.7765818167925 104.866148514656
6.35323999847174 114.560925266597
6.92989818015099 124.030057271859
7.50655636183024 132.208239696246
8.08321454350949 140.386422366376
8.65987272518874 147.922998339548
9.23653090686799 156.331422040495
9.81318908854724 166.596569545132
10.3898472702265 179.12981843192
10.9665054519057 194.433439236556
};
\addlegendentry{Opt.}
\addplot [thick, black, dotted, mark=o, mark size=2.5, mark options={solid,fill opacity=0}]
table {%
0 7.75029026248129
0.783042160078953 21.5620317288544
1.56608432015791 35.3737731952276
2.34912648023686 49.1855146616007
3.13216864031581 62.9972561279739
3.91521080039477 76.808997594347
4.69825296047372 90.6207390607201
5.48129512055267 104.432480527093
6.26433728063163 118.244221993466
7.04737944071058 132.05596345984
7.83042160078953 145.867704926213
8.61346376086849 158.828777813837
9.39650592094744 170.926845592242
10.1795480810264 183.024913370646
10.9625902411053 194.381874281455
};
\addlegendentry{Rush-to-sleep}
\addplot [semithick, black, dotted, forget plot]
table {%
0 194.433439236556
10.9665054519057 194.433439236556
};
\addplot [semithick, black, dotted, forget plot]
table {%
10.9665054519057 0
10.9665054519057 194.433439236556
};
\draw (axis cs:9.16650545190574,6) node[
  scale=1.14285714285714,
  anchor=base west,
  text=black,
  rotate=0.0
]{$R_{\mathrm{max}}$};
\draw (axis cs:0.2,179.433439236556) node[
  scale=1.14285714285714,
  anchor=base west,
  text=black,
  rotate=0.0
]{$P_{\mathrm{cons,max}}$};
\addplot [semithick, black, mark=*, mark size=3, mark options={solid}, only marks, forget plot]
table {%
10.9665054519057 0
};
\addplot [semithick, black, mark=*, mark size=3, mark options={solid}, only marks, forget plot]
table {%
0 110
};
\end{axis}

\end{tikzpicture}}
		} 
	}
	\hfill
	\subfloat[\label{fig:Fig_6b_scaling_regimes_P_sleep_decreasing_single_regime}]{%
		\resizebox{!}{0.415\linewidth}{\footnotesize
			{
\begin{tikzpicture}

\definecolor{darkgray176}{RGB}{176,176,176}
\definecolor{lightgray204}{RGB}{204,204,204}

\begin{axis}[
legend cell align={left},
legend style={
  fill opacity=0.8,
  draw opacity=1,
  text opacity=1,
  at={(0.55,0.04)},
  anchor=south,
  draw=lightgray204
},
tick align=outside,
tick pos=left,
title={\(\displaystyle \sigma^2=5\) W, \(\displaystyle NT=200\) ms},
x grid style={darkgray176},
xlabel={\(\displaystyle  R \) [bits/channel use]},
xmin=0, xmax=2.5541209043761,
xtick style={color=black},
y grid style={darkgray176},
ylabel={\(\displaystyle P_{\mathrm{cons}}\) [W]},
ymin=0, ymax=204.433439236556,
ytick style={color=black}
]
\addplot [semithick, red, dashed]
table {%
0.0001 110.351483336586
0.122301478678282 122.55689288766
0.244502957356564 128.142610019831
0.366704436034847 132.711028226617
0.488905914713129 136.812775599801
0.611107393391411 140.659663949167
0.733308872069693 144.36056142258
0.855510350747975 147.980942623553
0.977711829426258 151.56448800236
1.09991330810454 155.142646881536
1.22211478678282 158.739460525711
1.3443162654611 162.374231966754
1.46651774413939 166.063111874932
1.58871922281767 169.820094573939
1.71092070149595 173.657672796216
1.83312218017423 177.587284985735
1.95532365885252 181.619631239646
2.0775251375308 185.764903184624
2.19972661620908 190.032955816812
2.32192809488736 194.433439236556
};
\addlegendentry{Uniform}
\addplot [semithick, blue]
table {%
0.0001 7.77493855589247
0.122301478678282 17.9375211258036
0.244502957356564 28.1210941115847
0.366704436034847 38.3046673982527
0.488905914713129 48.4882407601227
0.611107393391411 58.6524450282831
0.733308872069693 68.836017668657
0.855510350747975 79.0195905368236
0.977711829426258 89.2031635473055
1.09991330810454 99.3867366526368
1.22211478678282 109.570309824348
1.3443162654611 119.734514231498
1.46651774413939 129.918087078382
1.58871922281767 140.101660016133
1.71092070149595 150.285233025264
1.83312218017423 159.519601275683
1.95532365885252 168.439340043415
2.0775251375308 177.343994312294
2.19972661620908 186.263732872075
2.32192809488736 194.433439236556
};
\addlegendentry{Asympt. opt.}
\addplot [semithick, black, mark=x, mark size=3, mark options={solid}, only marks]
table {%
0.0001 7.77493855589247
0.122301478678282 17.9569028109231
0.244502957356564 28.1210941115847
0.366704436034847 38.3046673982527
0.488905914713129 48.4882407601227
0.611107393391411 58.6718141520704
0.733308872069693 68.8553875590561
0.855510350747975 79.0389609746346
0.977711829426258 89.2031635473055
1.09991330810454 99.3867366526368
1.22211478678282 109.570309824348
1.3443162654611 119.753883044327
1.46651774413939 129.937456300502
1.58871922281767 140.121029584516
1.71092070149595 150.285233025264
1.83312218017423 159.519601275683
1.95532365885252 168.439340043415
2.0775251375308 177.359078849341
2.19972661620908 186.278817687091
2.32192809488736 194.433439236556
};
\addlegendentry{Opt.}
\addplot [thick, black, dotted, mark=o, mark size=2.5, mark options={solid,fill opacity=0}]
table {%
0 7.75029026248129
0.0828963975325727 14.6561609956679
0.165792795065145 21.5620317288544
0.248689192597718 28.467902462041
0.331585590130291 35.3737731952276
0.414481987662864 42.2796439284141
0.497378385195436 49.1855146616007
0.580274782728009 56.0913853947873
0.663171180260582 62.9972561279739
0.746067577793155 69.9031268611604
0.828963975325727 76.808997594347
0.9118603728583 83.7148683275336
0.994756770390873 90.6207390607201
1.07765316792345 97.5266097939067
1.16054956545602 104.432480527093
1.24344596298859 111.33835126028
1.32634236052116 118.244221993466
1.40923875805374 125.150092726653
1.49213515558631 132.05596345984
1.57503155311888 138.961834193026
1.65792795065145 145.867704926213
1.74082434818403 152.773575659399
1.8237207457166 158.828777813837
1.90661714324917 164.877811703039
1.98951354078175 170.926845592242
2.07240993831432 176.975879481444
2.15530633584689 183.024913370646
2.23820273337946 189.073947259848
2.32109913091204 194.381874281455
};
\addlegendentry{Rush-to-sleep}
\addplot [semithick, black, dotted, forget plot]
table {%
0 194.433439236556
2.32192809488736 194.433439236556
};
\addplot [semithick, black, dotted, forget plot]
table {%
2.32192809488736 0
2.32192809488736 194.433439236556
};
\draw (axis cs:1.92192809488736,6) node[
  scale=1.14285714285714,
  anchor=base west,
  text=black,
  rotate=0.0
]{$R_{\mathrm{max}}$};
\draw (axis cs:0.2,179.433439236556) node[
  scale=1.14285714285714,
  anchor=base west,
  text=black,
  rotate=0.0
]{$P_{\mathrm{cons,max}}$};
\addplot [semithick, black, mark=*, mark size=3, mark options={solid}, only marks, forget plot]
table {%
2.32192809488736 0
};
\addplot [semithick, black, mark=*, mark size=3, mark options={solid}, only marks, forget plot]
table {%
0 110
};
\end{axis}

\end{tikzpicture}}
	} }
	\caption{Power consumption versus load with successive sleep power model~$\mathbf{(As3)}$.}
	\label{Fig_6} 
    \vspace{-2em}
\end{figure*}
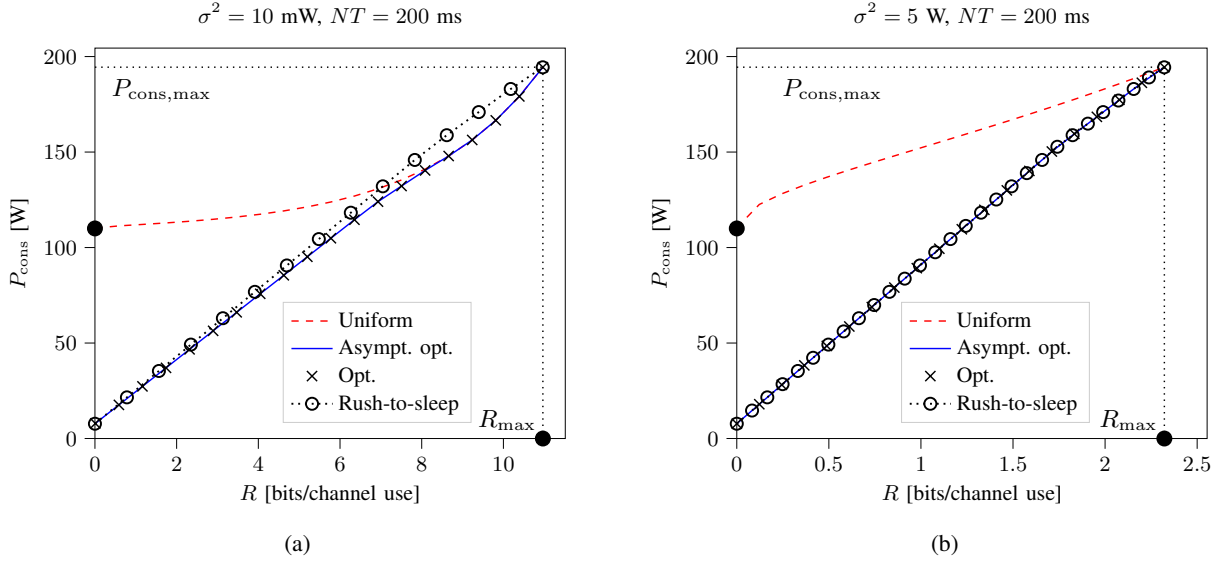

\begin{remark}
    Fig.~\ref{fig:Fig_6a_scaling_regimes_P_sleep_decreasing} and Fig.~\ref{fig:Fig_6b_scaling_regimes_P_sleep_decreasing_single_regime} are plotted based on the same simulation parameters as Fig.~\ref{fig:Fig_4a_scaling_regimes} and~\ref{fig:Fig_4b_scaling_regimes_only_linear}, but with a different sleep model. Using successive power modes has a drastic impact on the consumed energy at low load. A key parameter to allow drastic savings is to have a large enough frame duration so that deepest sleep modes can be used. For these figures, it was fixed to 200 ms. Hence, deep sleep power mode can be entered but not hibernating sleep power mode. 
\end{remark}

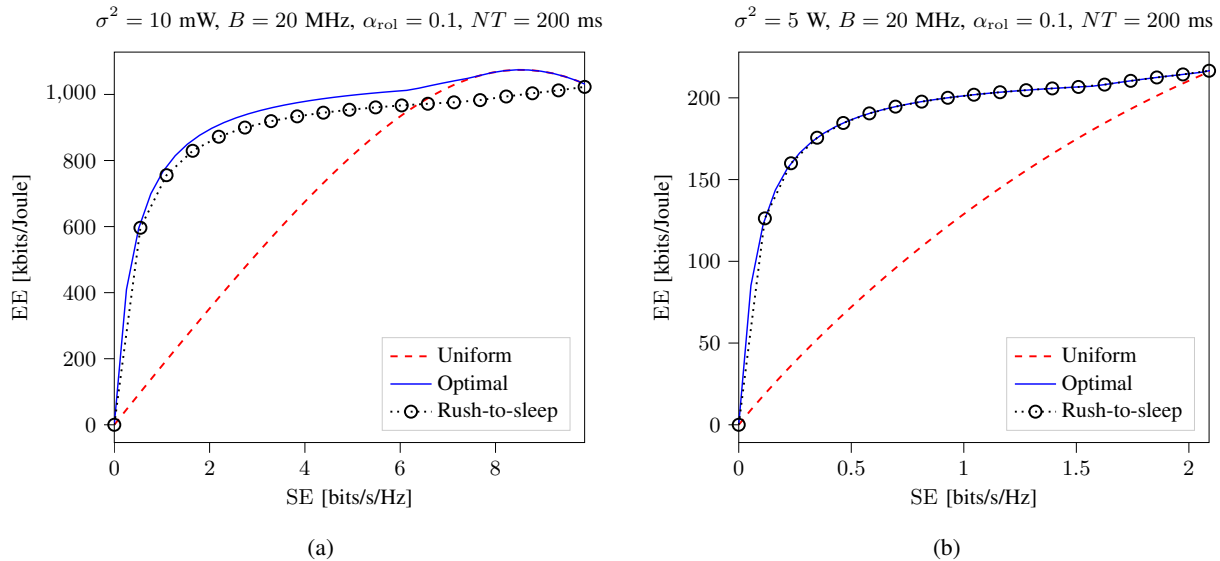
\begin{figure*} 
	\centering
	\subfloat[\label{fig:Fig_7a_SE_EE_P_sleep_decreasing}]{%
		\resizebox{!}{0.415\linewidth}{\footnotesize
			{
\begin{tikzpicture}

\definecolor{darkgray176}{RGB}{176,176,176}
\definecolor{lightgray204}{RGB}{204,204,204}

\begin{axis}[
legend cell align={left},
legend style={
  fill opacity=0.8,
  draw opacity=1,
  text opacity=1,
  at={(0.97,0.03)},
  anchor=south east,
  draw=lightgray204
},
tick align=outside,
tick pos=left,
title={\(\displaystyle \sigma^2=10\) mW, \(\displaystyle B=20\) MHz, \(\displaystyle \alpha_{\mathrm{rol}}=0.1\), \(\displaystyle NT=200\) ms},
x grid style={darkgray176},
xlabel={\(\displaystyle  \mathrm{SE} \) [bits/s/Hz]},
xmin=0, xmax=9.86985391972977,
xtick style={color=black},
y grid style={darkgray176},
ylabel={\(\displaystyle  \mathrm{EE} \) [kbits/Joule]},
ymin=-53.7224005597847, ymax=1128.17041175548,
ytick style={color=black}
]
\addplot [thick, red, dashed]
table {%
9.09090909090909e-05 0.0165265640019834
0.255718075801999 46.126859601197
0.511345242513088 91.8829018482202
0.766972409224178 137.348140436056
1.02259957593527 182.523010522002
1.27822674264636 227.391413060948
1.53385390935745 271.927867525644
1.78948107606854 316.099740571444
2.04510824277963 359.868027902279
2.30073540949071 403.187565954632
2.5563625762018 446.006978758995
2.81198974291289 488.268490986224
3.06761690962398 529.907674127716
3.32324407633507 570.853166625501
3.57887124304616 611.026397799838
3.83449840975725 650.341341261802
4.09012557646834 688.704322676659
4.34575274317943 726.013907640035
4.60137990989052 762.1608971722
4.85700707660161 797.028460364455
5.1126342433127 830.492435579882
5.36826141002379 862.421832900626
5.62388857673488 892.679570794907
5.87951574344597 921.123478794393
6.13514291015706 947.607594855259
6.39077007686815 971.983780562024
6.64639724357924 994.103669011116
6.90202441029033 1013.82094877964
7.15765157700142 1030.99397272291
7.4132787437125 1045.48866259063
7.66890591042359 1057.18166008043
7.92453307713468 1065.96365283217
8.18016024384577 1071.74278130625
8.43578741055686 1074.44801119569
8.69141457726795 1074.03233802962
8.94704174397904 1070.47567814192
9.20266891069013 1063.78729535165
9.45829607740122 1054.00761734188
9.71392324411231 1041.20931100022
9.9695504108234 1025.49751215315
};
\addlegendentry{Uniform}
\addplot [semithick, blue]
table {%
9.09090909090909e-05 0.234553243072798
0.255718075801999 409.836034294481
0.511345242513088 594.366139020389
0.766972409224178 699.353075603991
1.02259957593527 767.108934730254
1.27822674264636 814.455626216912
1.53385390935745 849.407486079236
1.78948107606854 876.268401114231
2.04510824277963 897.556366183816
2.30073540949071 914.842739068478
2.5563625762018 929.158880339719
2.81198974291289 941.20976411939
3.06761690962398 951.493623421911
3.32324407633507 960.372559674311
3.57887124304616 968.116063747182
3.83449840975725 974.928844068548
4.09012557646834 980.969189257148
4.34575274317943 986.361419956696
4.60137990989052 991.204536598576
4.85700707660161 995.578353249269
5.1126342433127 999.547930740865
5.36826141002379 1003.16683493066
5.62388857673488 1006.47956798204
5.87951574344597 1009.52340760732
6.13514291015706 1012.32981590575
6.39077007686815 1018.87563824954
6.64639724357924 1026.9833694934
6.90202441029033 1034.60644192575
7.15765157700142 1041.78705179641
7.4132787437125 1048.56263514675
7.66890591042359 1057.18166008043
7.92453307713468 1065.96365283217
8.18016024384577 1071.74278130625
8.43578741055686 1074.44801119569
8.69141457726795 1074.03233802962
8.94704174397904 1070.47567814192
9.20266891069013 1063.78729535165
9.45829607740122 1054.00761734188
9.71392324411231 1041.20931100022
9.9695504108234 1025.49751215315
};
\addlegendentry{Optimal}
\addplot [thick, black, dotted, mark=o, mark size=2.5, mark options={solid,fill opacity=0}]
table {%
0 0
0.548325217762765 596.367464962667
1.09665043552553 755.589564879087
1.6449756532883 829.402689417218
2.19330087105106 871.995045006454
2.74162608881383 899.716969905551
3.28995130657659 919.198699346215
3.83827652433936 933.638875355012
4.38660174210212 944.770295606815
4.93492695986489 953.61329263797
5.48325217762765 960.807778213787
6.03157739539042 966.775421771207
6.57990261315318 971.80538045915
7.12822783091595 976.102561019162
7.67655304867871 982.826899852043
8.22487826644148 993.747405919376
8.77320348420424 1003.50387872566
9.32152870196701 1012.27301517492
9.86985391972977 1022.84060636205
};
\addlegendentry{Rush-to-sleep}
\end{axis}

\end{tikzpicture}}
	} }
	\hfill
	\subfloat[\label{fig:Fig_7b_SE_EE_P_sleep_decreasing_single_regime}]{%
		\resizebox{!}{0.415\linewidth}{\footnotesize
			{
\begin{tikzpicture}

\definecolor{darkgray176}{RGB}{176,176,176}
\definecolor{lightgray204}{RGB}{204,204,204}

\begin{axis}[
legend cell align={left},
legend style={
  fill opacity=0.8,
  draw opacity=1,
  text opacity=1,
  at={(0.97,0.03)},
  anchor=south east,
  draw=lightgray204
},
tick align=outside,
tick pos=left,
title={\(\displaystyle \sigma^2=5\) W, \(\displaystyle B=20\) MHz, \(\displaystyle \alpha_{\mathrm{rol}}=0.1\), \(\displaystyle NT=200\) ms},
x grid style={darkgray176},
xlabel={\(\displaystyle  \mathrm{SE} \) [bits/s/Hz]},
xmin=0, xmax=2.08973507642512,
xtick style={color=black},
y grid style={darkgray176},
ylabel={\(\displaystyle  \mathrm{EE} \) [kbits/Joule]},
ymin=-10.856382168176, ymax=227.984025531696,
ytick style={color=black}
]
\addplot [thick, red, dashed]
table {%
9.09090909090909e-05 0.016476278915401
0.0542127761046005 9.13653031933736
0.108334643118292 17.7034496949171
0.162456510131983 25.9245463324292
0.216578377145675 33.8698435420598
0.270700244159366 41.5780397950765
0.324822111173058 49.0741821904622
0.378943978186749 56.3759398027155
0.43306584520044 63.4964820545422
0.487187712214132 70.4460091036543
0.541309579227823 77.2326493222015
0.595431446241515 83.8630241320881
0.649553313255206 90.3426230692626
0.703675180268898 96.6760635838633
0.757797047282589 102.867277269151
0.81191891429628 108.919647197223
0.866040781309972 114.836111653944
0.920162648323663 120.619244118897
0.974284515337354 126.271316038096
1.02840638235105 131.7943468668
1.08252824936474 137.190144518827
1.13665011637843 142.460338466497
1.19077198339212 147.606407127227
1.24489385040581 152.629700749507
1.2990157174195 157.531460710776
1.35313758443319 162.312835923049
1.40725945144689 166.974896883504
1.46138131846058 171.518647789308
1.51550318547427 175.945037047286
1.56962505248796 180.254966441533
1.62374691950165 184.44929917012
1.67786878651534 188.528866921748
1.73199065352903 192.494476131521
1.78611252054273 196.346913530083
1.84023438755642 200.086951080393
1.89435625457011 203.715350380489
1.9484781215838 207.232866597665
2.00259998859749 210.640251989014
2.05672185561118 213.938259054729
2.11084372262487 217.12764336352
};
\addlegendentry{Uniform}
\addplot [semithick, blue]
table {%
9.09090909090909e-05 0.2343518612272
0.0542127761046005 85.253905097945
0.108334643118292 122.567327633741
0.162456510131983 143.53064271836
0.216578377145675 156.959028468418
0.270700244159366 166.29607788703
0.324822111173058 173.164413565676
0.378943978186749 178.428804206389
0.43306584520044 182.592353390881
0.487187712214132 185.967672784829
0.541309579227823 188.759244177093
0.595431446241515 191.106358562665
0.649553313255206 193.107484579788
0.703675180268898 194.833811976528
0.757797047282589 196.338307823793
0.81191891429628 197.661148542242
0.866040781309972 198.833360993499
0.920162648323663 199.879286767671
0.974284515337354 200.81829072095
1.02840638235105 201.665971017216
1.08252824936474 202.435032490663
1.13665011637843 203.135927775491
1.19077198339212 203.777283163298
1.24489385040581 204.366469839463
1.2990157174195 204.909562053516
1.35313758443319 205.411765412527
1.40725945144689 205.877530111061
1.46138131846058 206.310683161291
1.51550318547427 206.714533787348
1.56962505248796 207.09195813
1.62374691950165 208.059191551056
1.67786878651534 209.172295380653
1.73199065352903 210.226705220788
1.78611252054273 211.226944221594
1.84023438755642 212.177053990029
1.89435625457011 213.080764218483
1.9484781215838 213.941367819577
2.00259998859749 214.761877202055
2.05672185561118 215.668214678996
2.11084372262487 217.12764336352
};
\addlegendentry{Optimal}
\addplot [thick, black, dotted, mark=o, mark size=2.5, mark options={solid,fill opacity=0}]
table {%
0 0
0.116096393134729 126.268333868647
0.232192786269458 159.980282378065
0.348289179404186 175.60866722574
0.464385572538915 184.626707430415
0.580481965673644 190.496245046556
0.696578358808373 194.621093670728
0.812674751943102 197.678498832016
0.92877114507783 200.035344185534
1.04486753821256 201.907663799079
1.16096393134729 203.430945601109
1.27706032448202 204.694469272985
1.39315671761675 205.759457791422
1.50925311075147 206.669295872008
1.6253495038862 208.093033937346
1.74144589702093 210.405222624921
1.85754229015566 212.470951622664
1.97363868329039 214.327632803258
2.08973507642512 216.565099148421
};
\addlegendentry{Rush-to-sleep}
\end{axis}

\end{tikzpicture}}
	} }
	\caption{Energy efficiency versus spectral efficiency with successive sleep power model~$\mathbf{(As3)}$.}
	\label{fig7} 
    \vspace{-2em}
\end{figure*}

\begin{remark}
	Similarly, Fig.~\ref{fig:Fig_7a_SE_EE_P_sleep_decreasing} and \ref{fig:Fig_7b_SE_EE_P_sleep_decreasing_single_regime} can be compared to Fig.~\ref{fig:Fig_5a_SE_EE} and \ref{fig:Fig_5b_SE_EE_single_regime}, where only the sleep model differs. As ideally expected, the EE quickly reaches a plateau. To still improve this behaviour, a longer frame duration can be used.
\end{remark}


\section{Optimal Allocation for TDMA System}
\label{section:opt_allocation_MU}

%
We now extend previous results by considering a downlink transmission from the \gls{bs} to $K$ users. 
We consider the constant power sleep model so that $(\textbf{As1})-(\textbf{As4})$ hold. The users are multiplexed using \gls{tdma}. In the frame of $N$ symbols, each symbol is allocated to the transmission towards at most one user so that no inter-user interference is present. Out of the $N$ symbols, $N_k$ symbols are allocated to user $k$, for $k=0,...,K-1$. 
The power associated to the $n$-th symbol transmitted to user $k$ is denoted by $p_{k,n}$, where $n=0,...,N_k-1$ and $k=0,...,K-1$. The normalized noise variance $\sigma_k^2=\sigma_n^2L_k$ is considered different at each user, as each can have a specific path loss $L_k$. 
The consumed power is then
\begin{align*}
			P_{\mathrm{cons}}^{\mathrm{TDMA}}&=\frac{N_{\mathrm{a}}}{N}P_0 + \frac{\gamma}{N}\sum_{k=0}^{K-1} \sum_{n=0}^{N_{k-1}}  p_{k,n}^{\alpha}+\frac{N-N_{\mathrm{a}}}{N}P_{\mathrm{sleep}}
\end{align*}
where $N_{\mathrm{a}}=\sum_{k=0}^{K-1} N_k$ so that $0\leq N_{\mathrm{a}}\leq N$. We consider the generalization of problem~(\ref{eq:basic_problem}) of minimizing the power consumption, under $(\textbf{As1})-(\textbf{As4})$ and per-user rate constraints $R_k$. The problem can be formulated as
\begin{align}
	\min_{\substack{N_{k},p_{k,n}\\ n=0,...,N_k-1\\ k=0,...,K-1}} P_{\mathrm{cons}}^{\mathrm{TDMA}} \ \text{s.t. } &\frac{1}{N}\sum_{n=0}^{N_{k}-1} \log_2\left(1+\frac{p_{k,n}}{\sigma_k^2}\right)=R_k\ \forall k,\ \sum_{k=0}^{K-1} N_k\leq N
	.\label{eq:tdma_problem}
\end{align}

We assume in the following that the problem has a feasible solution, which generalizes $\textbf{(As5)}$.

$\textbf{(As6)}$: Problem~(\ref{eq:tdma_problem}) is feasible. Defining the maximal per-user rate $R_{k,\mathrm{max}}=\log_2\left(1+\frac{P_{\max}}{\sigma_k^2}\right)$, it implies that
\begin{align*}
	\sum_{k=0}^{K-1}\ceil{\frac{N R_k}{R_{k,\mathrm{max}}}}\leq  N.
\end{align*}
Moreover, we define 
the constant $R_{k,\mathrm{a}}$ that minimizes the convex problem
\begin{align*}
	R_{k,\mathrm{a}} &= \arg \min_{x \geq 0} \frac{P_0- P_{\mathrm{sleep}} +\gamma \sigma^{2\alpha}_k \left(2^{x}-1\right)^{\alpha}}{x}.
\end{align*}
We also define $\hat{R}_k=\min(R_{k,\mathrm{a}},R_{k,\mathrm{max}})$ and $\hat{P}_k=\min(P_{k,\mathrm{a}},P_{\mathrm{max}})$. Two regimes can be considered for Problem~(\ref{eq:tdma_problem}), depending if the constraint $\sum_{k=0}^{K-1} N_k\leq N$ is binding or not.


In terms of power savings, the most promising case is the low-to-medium load regime where the constraint is not binding. In that case, the problem fully decouples per-user and the linear regime solution of Theor.~\ref{theor:main_SU_result} can directly be used on a per-user basis.

\begin{theorem}[TDMA - linear regime]\label{theor:TDMA_not_binding}
	Under $\mathbf{(As1)}-\mathbf{(As4)},\mathbf{(As6)}$, as $N\rightarrow+\infty$, if $
		\sum_{k=0}^{K-1} \frac{R_k}{\hat{R}_k} \leq 1 +O(1/N)
	$, the allocation
	\begin{align*}
		N_{k}&=\round{NR_k/\hat{R}_k},\ p_{k,n}= \hat{P}_k +O(1/N) \quad \text{for }n=0,...,N_{k}-1
	\end{align*}
	for user $k=0,...,K-1$ is an asymptotic solution of Problem~(\ref{eq:tdma_problem}) and 
	\begin{align*}
		P_{\mathrm{cons}}&=P_{\mathrm{sleep}}+\sum_{k=0}^{K-1}R_k\frac{P_0-P_{\mathrm{sleep}} + \gamma \hat{P}_k^{\alpha}}{\hat{R}_k}+O(1/N).
	\end{align*}
\end{theorem}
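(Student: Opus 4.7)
The plan is to exploit the fact that, under the stated load condition, the total-slot constraint $\sum_{k} N_k \leq N$ is slack, so that Problem~(\ref{eq:tdma_problem}) decouples into $K$ single-user problems already solved by Theorem~\ref{theor:main_SU_result}. The first step is to rewrite the cost as
\begin{align*}
P_{\mathrm{cons}}^{\mathrm{TDMA}} = P_{\mathrm{sleep}} + \sum_{k=0}^{K-1}\left[\frac{N_k}{N}(P_0 - P_{\mathrm{sleep}}) + \frac{\gamma}{N}\sum_{n=0}^{N_k-1} p_{k,n}^{\alpha}\right],
\end{align*}
which isolates a baseline sleep contribution and groups the remainder into per-user terms that depend only on the allocation $(N_k,p_{k,0},\ldots,p_{k,N_k-1})$ associated to user $k$ and on its rate constraint $R_k$.

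Next I consider the relaxation of~(\ref{eq:tdma_problem}) obtained by dropping the joint constraint $\sum_k N_k \leq N$. The relaxed problem separates into $K$ independent subproblems, and the $k$-th subproblem is precisely the single-user problem~(\ref{eq:basic_problem}) with noise $\sigma_k^2$, rate $R_k$, active load-independent power $P_0$ and constant sleep power $P_{\mathrm{sleep}}$. Because the hypothesis reads $\sum_k R_k/\hat{R}_k \leq 1 + O(1/N)$ with all summands non-negative, each individual ratio satisfies $R_k/\hat{R}_k \leq 1 + O(1/N)$, so every user lies (asymptotically) in the linear regime of Theorem~\ref{theor:main_SU_result}. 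That theorem yields $N_k=\round{NR_k/\hat{R}_k}$, $p_{k,n}=\hat{P}_k$ for $n<N_k$, and a per-user contribution $R_k(P_0-P_{\mathrm{sleep}}+\gamma\hat{P}_k^{\alpha})/\hat{R}_k + O(1/N)$.

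To complete the argument I verify that the dropped constraint is asymptotically satisfied by the separated solution: the rounding bound $\sum_k N_k \leq N\sum_k R_k/\hat{R}_k + K/2 \leq N + O(1)$ shows that the proposal violates $\sum_k N_k \leq N$ by at most an $O(1)$ number of slots, whose correction (e.g., slightly shrinking the active window of one user) costs only $O(1/N)$ in $P_{\mathrm{cons}}^{\mathrm{TDMA}}$ and can therefore be absorbed in the announced $O(1/N)$ error. Summing the per-user costs and adding the base $P_{\mathrm{sleep}}$ term yields the claimed expression, and, since the relaxed minimum is a lower bound on the constrained minimum while the proposal is feasible up to an $O(1/N)$ correction, the allocation is asymptotically optimal.

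The main obstacle is the careful bookkeeping of the $O(1/N)$ terms: one must ensure that the per-user rounding errors inherited from Theorem~\ref{theor:main_SU_result}, the possible $O(1)$ overshoot in $\sum_k N_k$, and the corresponding repair cost all aggregate into a single $O(1/N)$ gap (for fixed $K$). Arguing matching upper and lower bounds via the relaxation is the cleanest way to handle this without detailing the repair step explicitly.
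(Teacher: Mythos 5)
Your proposal is correct and follows essentially the same route as the paper: observe that when $\sum_k N_k \leq N$ is slack the problem decouples into $K$ independent single-user problems, each of which falls in the linear regime of Theorem~\ref{theor:main_SU_result}, and sum the per-user costs. Your extra care with the relaxation lower bound and the $O(1)$ rounding overshoot in $\sum_k N_k$ is a welcome tightening of a step the paper's proof passes over silently, but it is a refinement of the same argument rather than a different one.
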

\begin{proof}
	See Appendix~\ref{Appendix:proof:theor_TDMA_not_binding}.
\end{proof}

\begin{figure*}
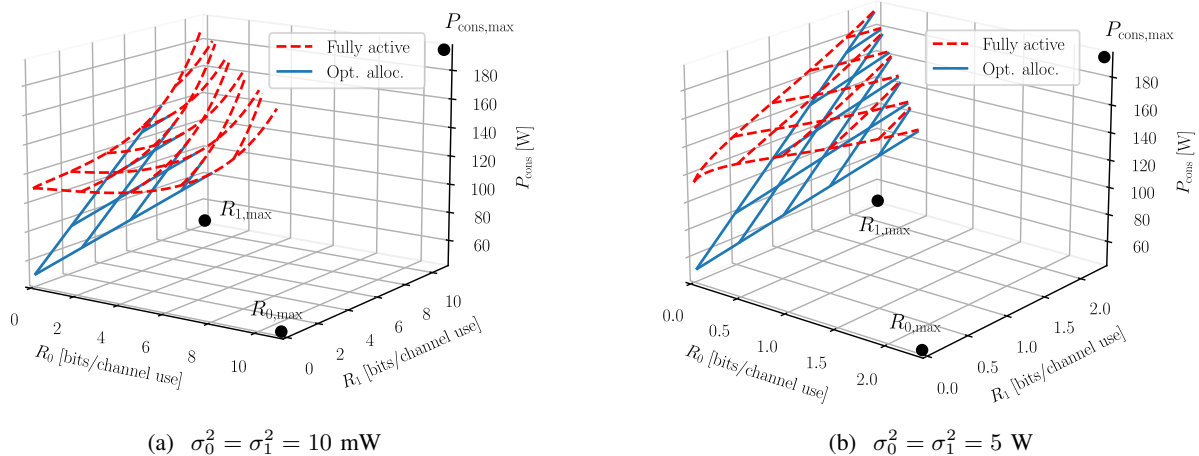
 
	\centering
	\subfloat[\label{fig:Fig_8a_TDMA_linear} $\sigma_0^2=\sigma_1^2=10$ mW]{
		\resizebox{0.45\linewidth}{!}{
			\includesvg[width=0.48\textwidth]{Fig/Python/Fig_8a_TDMA_linear}	
		} 
	}
	\hfill
	\subfloat[\label{fig:Fig_8b_TDMA_linear} $\sigma_0^2=\sigma_1^2=5$ W]{%
		\resizebox{0.45\linewidth}{!}{
			\includesvg[width=0.48\textwidth]{Fig/Python/Fig_8b_TDMA_linear}	
	} }
	\caption{Power consumption versus load in a $K=2$ user TDMA system using optimal/uniform power allocation with constant sleep mode~$\mathbf{(As4)}$. The optimal allocation is valid in the asymptotic linear regime, \textit{i.e.}, when $\frac{R_0}{\hat{R}_0}+\frac{R_1}{\hat{R}_1} \leq 1$.}
	\label{Fig_8} 
    \vspace{-2em}
\end{figure*}

\begin{remark}[$K=2$ - TDMA]
	For $K=2$, Fig.~\ref{fig:Fig_8a_TDMA_linear} and \ref{fig:Fig_8b_TDMA_linear} plot $P_{\mathrm{cons}}$ as a function of $R_1$ and $R_2$. The optimal solution of Theor.~\ref{theor:TDMA_not_binding} is plotted where it is valid, \textit{i.e.}, the asymptotic linear regime $\frac{R_0}{\hat{R}_0}+\frac{R_1}{\hat{R}_1}\leq 1$. As already observed in the single-user case (Fig.~\ref{fig:Fig_4b_scaling_regimes_only_linear}), this regime can cover the whole feasible rate region, as shown in Fig.~\ref{fig:Fig_8b_TDMA_linear}, characterized by a relatively higher noise power implying that $\hat{R}_k=R_{k,\mathrm{max}}$, $\forall k$. As a benchmark, a fully active uniform allocation was plotted where a share $N_k/N=R_k/(R_0+R_1)$ of time slots was allocated to each user.
 \end{remark}
 \begin{remark}[rush-to-sleep - TDMA]
     If $\sum_{k=0}^{K-1} \frac{R_k}{\hat{R}_k}\leq 1$, a rush-to-sleep allocation is asymptotically optimal for each user such that $R_{k,\mathrm{max}} \leq R_{k,\mathrm{a}}$. On the other hand, if $\sum_{k=0}^{K-1} \frac{R_k}{\hat{R}_k}>1$, it is optimal to use a fully active system and no sleep. 
 \end{remark}


If the constraint $\sum_{k=0}^{K-1} N_k = N$ is binding, Problem~(\ref{eq:tdma_problem}) is coupled between users and challenging to solve. As the sleep duration is zero, the power consumption becomes
\begin{align*}
	P_{\mathrm{cons}}^{\mathrm{TDMA}}&=P_0 + \frac{\gamma}{N}\sum_{k=0}^{K-1} \sum_{n=0}^{N_{k-1}}  p_{k,n}^{\alpha}.
\end{align*}
As this regime does not allow switching-off components, relatively small energy reduction potentials are expected. Given space constraints, we do not provide a more detailed solution. One possibility is to reduce the target rates of the users to make the constraint non-binding and then use Theor.~\ref{theor:TDMA_not_binding}. Another possibility is to use conventional scheduling policies, that are not aware of sleep capabilities, which makes sense as the system is fully active. 

\section{Conclusion}
\label{section:conclusion}

In this work, we have proposed a fundamental study of time-domain energy-saving techniques in radio access. The results provide key novel insights from an information-theoretic perspective. Considering equal gain parallel communication channels, conventional information-theoretic results state that all channels (time slots in this study) should be equally used to minimize transmit (not consumed) power under a rate constraint. On the contrary, popular energy-saving techniques steer towards an extreme opposite ``rush-to-sleep" approach: compact transmission in as few time slots as possible, at maximal transmit power, to maximize sleep duration. 

Using a realistic power consumption model, our information-theoretic study bridges the gap between these two extremes. Simple allocations are provided that allow drastic energy savings reaching factors of 10 at low load. At low-to-medium load, the optimal number of active time slots is linearly proportional to the rate, resulting in a power consumption which linearly scales with the rate. At a higher load, all time slots become allocated. The rush-to-sleep approach is shown to be optimal in a high-noise regime but not otherwise. In a low-noise regime, it might be better to use a fully active system. Moreover, the fundamental trade-off between \gls{ee} and \gls{se} is revisited leveraging the time-domain hardware sleep capabilities. Transmitting at maximal \gls{se} maximizes the \gls{se} in a high-noise regime while, in a low-noise regime, a reduced \gls{se} maximizes the \gls{ee}. Considering a sleep model with increasing depth complicates the study but also greatly increases the energy-saving gains. For a piecewise constant model, simple allocations can still be found. Finally, for a multi-user \gls{tdma} system, single-user results are applicable on a per-user basis in the low-to-medium load regime, where the system should not be fully active and where sleep-aware energy-saving gains can be achieved.

\section{Appendix}
\label{section:appendix}

We start by introducing two lemmas that will be useful in the following.

\begin{lemma}\label{lemma:uniform_should_be}
	An optimal solution of the following problem
	\begin{align}
		\min_{p_0,...,p_{N-1}} \frac{\gamma}{N}\sum_{n=0}^{N-1} p_n^\alpha \quad \text{s.t. } \frac{1}{N}\sum_{n=0}^{N-1} \log_2\left(1+\frac{p_n}{\sigma^2}\right) = R \label{eq:prob_lemma_PA_var}
	\end{align}
	must have a uniform allocation among active time slots: $\forall n,n'$, if $p_n>0$, $p_{n'}>0$ then $p_n=p_{n'}$.
\end{lemma}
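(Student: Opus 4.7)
Plan: The plan is to apply Lagrangian first-order conditions to force all active powers onto a level set of a simple function $h$, and then use a second-order analysis along the rate-preserving curve to rule out two distinct active levels.

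First I would introduce a multiplier $\lambda$ for the rate equality constraint and $\mu_n \geq 0$ for each $p_n \geq 0$. Stationarity at any active slot $p_n > 0$ (where $\mu_n = 0$ by complementary slackness) yields $\alpha p_n^{\alpha-1}(\sigma^2+p_n) = \lambda/(N\ln 2)$; setting $h(p):=p^{\alpha-1}(\sigma^2+p)$, this reads $h(p_n)=c$ for every active $n$ with a common $c>0$. Since $h'(p)=p^{\alpha-2}[\alpha p-(1-\alpha)\sigma^2]$, the case $\alpha=1$ is immediate: $h$ is strictly increasing and $h(p)=c$ has a unique positive root, so all active $p_n$ coincide. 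For $\alpha\in(0,1)$, $h$ has a unique minimum at $p^{\star}=\sigma^2(1-\alpha)/\alpha$, so $h(p)=c$ can admit two roots $p_-<p^{\star}<p_+$.

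Suppose by contradiction that an optimum contains two active slots with distinct values $p_-,p_+$. Fixing all remaining coordinates and parameterizing the pair $(\tilde p_A,\tilde p_B)$ along the rate-preserving curve $(1+\tilde p_A/\sigma^2)(1+\tilde p_B/\sigma^2)=\mathrm{const}$ reduces to a smooth one-dimensional problem whose only stationary points are the symmetric uniform point $\tilde p_A=\tilde p_B$ and the asymmetric KKT pair $(p_-,p_+)$ (together with its swap). Writing $J(p_A):=p_A^\alpha+p_B(p_A)^\alpha$ with $dp_B/dp_A=-(\sigma^2+p_B)/(\sigma^2+p_A)$, a direct differentiation gives $dJ/dp_A=\alpha[h(p_A)-h(p_B)]/(\sigma^2+p_A)$, which vanishes precisely at the critical points above, and then $d^2J/dp_A^2$ at any KKT point reduces to a positive multiple of $h'(p_A)(\sigma^2+p_A)+h'(p_B)(\sigma^2+p_B)$. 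At the symmetric point this equals $2\alpha h'(\tilde p)/(\sigma^2+\tilde p)$, whose sign is that of $\tilde p-p^{\star}$, and is therefore strictly positive in the regime $C>1/\alpha^2$ where the asymmetric pair exists at all. I would then show that at $(p_-,p_+)$ the quantity $h'(p_-)(\sigma^2+p_-)+h'(p_+)(\sigma^2+p_+)$ is strictly negative whenever $p_-\neq p_+$, using the level-set identity $h(p_-)=h(p_+)$ combined with the log-convexity of $h$ as a function of $\log p$ (which follows from the easily-checked identity $\partial^2(\log h)/\partial(\log p)^2 = p\sigma^2/(\sigma^2+p)^2>0$). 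Consequently the asymmetric pair is a strict local maximum and the symmetric point a strict local minimum of the restricted cost, so deforming $(p_-,p_+)$ continuously toward the symmetric point along the constraint curve strictly decreases the cost, contradicting optimality.

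The main obstacle is the strict-negativity step for general $\alpha$: it is short and explicit for $\alpha=1/2$, where $h(p_-)=h(p_+)$ simplifies via $(p_+-p_-)(\sigma^4-p_-p_+)=0$ to $p_-p_+=(p^{\star})^2$, after which a direct factorization gives $h'(p_-)(\sigma^2+p_-)+h'(p_+)(\sigma^2+p_+)=-(\sigma^2-p_-)^2(\sigma^2+p_-)/(2p_-^{3/2}\sigma^2)\leq 0$ with equality only at the symmetric point; for general $\alpha\in(0,1)$ it requires combining log-convexity of $h$ with the equal-level-set constraint to control the relative magnitudes of $|h'(p_-)|(\sigma^2+p_-)$ and $h'(p_+)(\sigma^2+p_+)$.
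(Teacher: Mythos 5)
Your overall strategy is sound and runs parallel to the paper's: both arguments reduce to a two-slot subproblem along the rate-preserving curve and classify its critical points (the paper does this by bounding the number of critical points of $\tilde f(\rho_0)=\rho_0^\alpha+\rho_1^\alpha$ via a quadratic and examining boundary behaviour, whereas you use KKT stationarity $h(p_n)=c$ with $h(p)=p^{\alpha-1}(\sigma^2+p)$ plus a second-order test). Your first-order computation $dJ/dp_A=\alpha[h(p_A)-h(p_B)]/(\sigma^2+p_A)$ and the reduction of $d^2J/dp_A^2$ at a critical point to the sign of $S:=h'(p_A)(\sigma^2+p_A)+h'(p_B)(\sigma^2+p_B)$ are both correct, as is your explicit $\alpha=1/2$ factorization.

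The gap is the step you yourself flag: proving $S<0$ at an asymmetric pair for general $\alpha\in(0,1)$. This is not a cosmetic omission — it is the entire content of the lemma for $\alpha\neq 1/2$, and the tool you propose (strict convexity of $\log h$ in $\log p$) is \emph{not sufficient on its own}. On the level set $h(p_-)=h(p_+)=c$ one has $S=c\bigl[2\alpha-(1-\alpha)\sigma^2(1/p_-+1/p_+)\bigr]$, so $S<0$ is equivalent to $1/p_-+1/p_+>2/p^{\star}$ with $p^{\star}=\sigma^2(1-\alpha)/\alpha$; in the variable $u=\log p$ this amounts to a sign condition on a weighted sum of $\psi'(u_-)$ and $\psi'(u_+)$ for $\psi=\log h\circ\exp$, and a generic strictly convex $\psi$ with $\psi(u_-)=\psi(u_+)$ can violate such a condition (take $\psi$ much steeper on one side of its minimizer). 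The inequality is nevertheless true, and can be closed as follows: substitute $q=\sigma^2/p$, so the level sets of $h$ become level sets of $k(q)=(1+q)q^{-\alpha}$ with minimizer $q^{\star}=\alpha/(1-\alpha)$, and the claim $1/p_-+1/p_+>2/p^{\star}$ becomes $q_-+q_+>2q^{\star}$, which follows from showing $k(q^{\star}+t)<k(q^{\star}-t)$ for $0<t<q^{\star}$; writing $a=1/(1-\alpha)$, $b=q^{\star}=\alpha a$, this reduces to $F(t)=\alpha\log\frac{b+t}{b-t}-\log\frac{a+t}{a-t}>0$, and $F(0)=0$ with $F'(t)=\frac{2\alpha b}{b^2-t^2}-\frac{2a}{a^2-t^2}$, which after clearing denominators and using $b=\alpha a$ equals a positive multiple of $(1+\alpha)t^2>0$. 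With that lemma supplied, your argument is complete and arguably more transparent than the paper's critical-point count; without it, the proof only covers $\alpha\in\{1/2,1\}$ (and trivially $\alpha=0$).
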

\begin{proof} The case $\alpha=1$ was already treated in the introduction.
For $\alpha=0$, the cost function only depends on the number of active time slots so that a single time slot must be allocated power. In the following, we will use a proof by contradiction for the case $0<\alpha<1$. 
Any non-uniform power allocation has at least two time slots, say $n=0$ and $n=1$ (potentially using a re-indexing), that are such that 
$p_0>0$, $p_1>0$ and $p_0\neq p_1$. We consider the power allocated to the other time slots as fixed and optimize the cost function with respect to $p_0$ and $p_1$ only. 
For the sake of clarity, we define $\rho_n=p_n/\sigma^2$. The reduced problem is
\begin{align*}
	\min_{\rho_0,\rho_{1}} \tilde{f}(\rho_0,\rho_1)=\sum_{n=0}^{1} \rho_n^{\alpha} \ \text{s.t. }  \sum_{n=0}^{1} \log\left(1+\rho_n\right) = Z
\end{align*}
where $Z=NR \log 2 - \sum_{n=2}^{N-1} \log\left(1+\rho_n\right)$. The constraint implies that
\begin{align}
	\rho_1&=\frac{e^Z}{1+{\rho_0}}-1 \label{eq:relationship_rho_1_rho_0}
\end{align}
so that $\rho_0$ and $\rho_1$ take values only in the domain $[0,e^Z-1]$ and have a one-to-one relationship. As $\rho_0 \rightarrow 0$, $\rho_1 \rightarrow e^Z-1$ and vice versa. Moreover, the problem is symmetrical so that $\tilde{f}(\rho_0,\rho_1)=\tilde{f}(\rho_1,\rho_0)$. Using~(\ref{eq:relationship_rho_1_rho_0}), the problem can be rewritten as a monovariable unconstrained problem
\begin{align*}
	\min_{\rho_{0}} \tilde{f}(\rho_0)= \rho_0^\alpha+\rho_1^\alpha= \rho_0^\alpha+\left(\frac{e^Z}{1+{\rho_0}}-1\right)^\alpha.
\end{align*}
The derivative of $\tilde{f}(\rho_0)$ and its limit at the bounds of its domain are given by
\begin{align}
	\tilde{f}'(\rho_0)
	&=\frac{\alpha}{\rho_0^{1-\alpha}}-\frac{e^Z}{(1+\rho_0)^2}\frac{\alpha}{\rho_1^{1-\alpha}},\
	\lim_{\rho_0\rightarrow 0}\tilde{f}'(\rho_0)=+\infty,\ 	\lim_{\rho_0\rightarrow e^{Z}-1} \tilde{f}'(\rho_0)=-\infty. \label{eq:FOC_bounds_dom_definition}
\end{align}
To find the critical points, we set $f'(\rho_0)=0$ and combining with~(\ref{eq:relationship_rho_1_rho_0}), we find the condition
\begin{align}\label{eq:FOC}
	0&=\rho_1^{1-\alpha}(1+\rho_0)-\rho_0^{1-\alpha}(1+\rho_1)
\end{align}
which shows that there is always one critical point in $\rho_0=\rho_1=e^{Z/2}-1$. Moreover, the fact that the problem is symmetric implies an odd number of critical points: if there is a critical point in $\tilde{\rho}_0$, there is one in $\frac{e^Z}{1+{\tilde{\rho}_0}}-1$. Using again~(\ref{eq:relationship_rho_1_rho_0}), we can rewrite the condition~(\ref{eq:FOC}) as
\begin{align*}
	0&=(e^{Z}-1-\rho_0)^{1-\alpha}(1+\rho_0)^{1+\alpha}-\rho_0^{1-\alpha}e^Z.
\end{align*}
The point $\rho_0=0$ is not a critical point. Hence, we can restrict to $\rho_0>0$ and divide by $\rho_0^{1-\alpha}$
\begin{align*}
	0&=\underbrace{(e^{Z}-1-\rho_0)^{1-\alpha}(1+\rho_0)^{1+\alpha}\rho_0^{\alpha-1}-e^Z}_{\tilde{g}(\rho_0)}.
\end{align*}
The function $\tilde{g}(\rho_0)$ is infinite in $\rho_0=0$ and $-e^Z$ for $\rho_0=e^Z-1$. The number of roots of $\tilde{g}(\rho_0)$ and thus critical points of $\tilde{f}(\rho_0)$ is at most equal to one plus the number of critical points/alternations of $\tilde{g}(\rho_0)$. Setting $\tilde{g}'(\rho_0)=0$ gives the condition
\begin{align*}
	0=-(1-\alpha)(1+\rho_0)\rho_0+(e^{Z}-1-\rho_0)(1+\alpha)\rho_0+(e^{Z}-1-\rho_0)(1+\rho_0)(\alpha-1)
\end{align*}
which is a quadratic equation in $\rho_0$. It has thus at most two solutions. As a result, $\tilde{g}(\rho_0)$ has max two alternations and $\tilde{f}(\rho_0)$ has at most 3 critical points. The second order derivative of $\tilde{f}(\rho_0)$ and its limit at the bounds of its domain are given by
\begin{align}
	\tilde{f}''(\rho_0)
	&=-\alpha(1-\alpha) \frac{1}{\rho_0^{2-\alpha}}+\frac{1}{\rho_1^{2-\alpha}} \frac{\alpha e^Z}{(1+\rho_0)^3} \left(-(1-\alpha)  e^{Z}\frac{1}{1+\rho_0} + 2\rho_1 \right) \nonumber \\
	\lim_{\rho_0\rightarrow 0}\tilde{f}''(\rho_0)&=-\infty,\ 
	\lim_{\rho_0\rightarrow e^{Z}-1}\tilde{f}''(\rho_0)=-\infty. \label{eq:SOC_bounds_dom_definition}
\end{align}

\begin{figure*}[t!]	
	\begin{minipage}{.6\textwidth}
		\resizebox{1\linewidth}{!}{%
			{\includesvg{Fig//Inkscape/f_tilde}} 
		}
	\end{minipage}
	\hfill
	\begin{minipage}{.3\textwidth}
		\resizebox{1\linewidth}{!}{%
			\begin{tabular}{ |c|c|  }
				\hline
				Domain & $[0,e^{Z}-1]$  \\
				\hline
				Intercept & $\tilde{f}(0)=(e^Z-1)^\alpha$ \\
				\hline
				Symmetry & $\tilde{f}(\rho_0)=\tilde{f}\left(\frac{e^Z}{1+{\rho_0}}-1\right)$  \\
				\hline
				Derivatives & $\tilde{f}'(0)=+\infty$, $\tilde{f}'(e^{Z}-1)=-\infty$ \\
				\hline
				Concavity &  $\tilde{f}''(0)=\tilde{f}''(e^{Z}-1)=-\infty$\\
				\hline
				Critical points & Always 1 in $\rho_0=\rho_1=e^{Z/2}-1$ \\
				& Potentially 2 others symmetrical\\
				\hline
			\end{tabular}
		}
	\end{minipage}
	\caption{Sketch of function $\tilde{f}(\rho_0)$.}
	\label{fig:f_tilde} 
	\vspace{-2em}
\end{figure*}

%


As shown in Fig.~\ref{fig:f_tilde}, three cases can be distinguished. In case (a), there is a single critical point in $\rho_0=\rho_1=e^{Z/2}-1$ which is a maximum. In cases (b) and (c), there are three critical points: the middle one in $\rho_0=\rho_1=e^{Z/2}-1$ will now be a minimum (local in (b), global in (c)) while the two on its sides are maxima. Hence, global minima can only be obtained for either $\rho_{0}=0,\rho_{1}=e^{Z}-1$ or $\rho_{1}=0,\rho_{0}=e^{Z}-1$ or $\rho_0=\rho_1=e^{Z/2}-1$. Hence, it is impossible to find an optimal allocation such that $p_0>0$, $p_1>0$ and $p_0\neq p_1$. \end{proof}



\begin{lemma}\label{lemma:convexity}
	Under $\mathbf{(As1)}-\mathbf{(As4)}$, the following function is convex for $x>0$
	\begin{align*}
		f(x)&=\frac{P_0-P_{\mathrm{sleep}} + \gamma \sigma^{2\alpha} \left(2^{x}-1\right)^{\alpha}}{x}.
	\end{align*}
\end{lemma}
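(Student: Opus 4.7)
My plan is to decompose $f(x)$ into two pieces, each of which I will show is convex on $(0,\infty)$:
\begin{align*}
f(x) = \underbrace{\frac{P_0 - P_{\mathrm{sleep}}}{x}}_{f_1(x)} + \underbrace{\frac{\gamma \sigma^{2\alpha}(2^x-1)^\alpha}{x}}_{f_2(x)}.
\end{align*}
Since $P_0 - P_{\mathrm{sleep}} \geq 0$ by $\mathbf{(As4)}$ and $\gamma \geq 0$ by $\mathbf{(As1)}$, once each piece (without the non-negative prefactor) is convex, $f$ is convex as a non-negative combination of convex functions. The convexity of $1/x$ on $(0,\infty)$ is standard, so the whole argument reduces to showing that $g(x) = (2^x-1)^\alpha / x$ is convex on $(0,\infty)$ for every $\alpha \in (0,1]$.

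For $g$, I would make the substitution $v = x \ln 2$ to rewrite $g(x) = \ln 2 \cdot G(v)$ with $G(v) = (e^v-1)^\alpha / v$; convexity is preserved by this affine reparameterization. My plan is then to prove the \emph{log}-convexity of $G$, which is stronger than convexity. Setting $\phi(v) = \log G(v) = \alpha \log(e^v - 1) - \log v$, direct differentiation gives
\begin{align*}
\phi''(v) = -\alpha \frac{e^v}{(e^v-1)^2} + \frac{1}{v^2}.
\end{align*}
So the whole problem boils down to the scalar inequality $(e^v-1)^2 \geq \alpha v^2 e^v$ for all $v > 0$ and all $\alpha \in (0,1]$.

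The expected main obstacle is showing this inequality cleanly; the estimate is actually tight near $v = 0$ when $\alpha = 1$, which is why it might look fragile numerically. I would handle it by reducing to the case $\alpha = 1$ (since $\alpha \leq 1$ only makes the right-hand side smaller), and then observing that $(e^v-1)^2 \geq v^2 e^v$ is equivalent to $e^{v/2} - e^{-v/2} \geq v$, i.e., $2\sinh(v/2) \geq v$, which is immediate from the Taylor series of $\sinh$. This gives $\phi'' \geq 0$, so $\phi$ is convex, hence $G$ (and thus $g$) is log-convex, hence convex. Combining with the convexity of $f_1$ concludes the proof.
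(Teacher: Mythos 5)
Your proposal is correct, and it takes a genuinely different route from the paper's proof. Both arguments use the same initial decomposition of $f$ into $(P_0-P_{\mathrm{sleep}})/x$ plus the load-dependent term, and both reduce the problem to the convexity of $g(y)=(e^{y}-1)^{\alpha}/y$ on $y>0$. The paper then computes $g''$ explicitly and is left with showing that the numerator $h(y)=e^{2y}(\alpha^2y^2-2\alpha y+2)-e^{y}(\alpha y^2-2\alpha y+4)+2$ is nonnegative, which it does by expanding $h$ as a power series and verifying coefficient-by-coefficient nonnegativity, with a case split on the index $r=2$, $r=3$, $r\geq 4$. You instead prove the strictly stronger property that $g$ is \emph{log}-convex: the second derivative of $\log g$ reduces the whole problem to the scalar inequality $(e^{v}-1)^2\geq \alpha v^2 e^{v}$, where the dependence on $\alpha\in(0,1]$ is handled trivially by monotonicity, and the remaining case $\alpha=1$ is equivalent to $2\sinh(v/2)\geq v$, immediate from the Taylor series of $\sinh$. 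The step from log-convexity to convexity is legitimate since $g>0$ and $t\mapsto e^{t}$ is convex and increasing. Your route buys a much shorter and more transparent verification (one clean inequality instead of an infinite family of coefficient inequalities) and yields a stronger conclusion; the paper's route is more mechanical but requires no insight about which auxiliary transformation to apply. Your observation that the inequality is tight to fourth order near $v=0$ when $\alpha=1$ correctly identifies why the estimate cannot be loosened, and matches the $\tfrac{1}{12}v^4$ leading term one finds by expansion.
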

\begin{proof}
	Under $\mathbf{\mathbf{(As4)}}$, $P_0-P_{\mathrm{sleep}}\geq 0$ and thus $(P_0-P_{\mathrm{sleep}})/x$ is convex. Given that the sum of two convex functions is convex, it is sufficient to show that
	\begin{align*}
		\frac{\gamma \sigma^{2\alpha} \left(2^{x}-1\right)^{\alpha}}{x} \text{ or equivalently } g(y)=\frac{\left(e^{y}-1\right)^{\alpha}}{y}
	\end{align*}
	is convex for $y>0$ (using $y=x \log 2$). Its second derivative is
	$$g''(y) = \frac{(e^y - 1)^{\alpha - 2}(e^{2y} (\alpha^2y^2 - 2 \alpha y + 2) - e^y (\alpha y^2 - 2\alpha y + 4) + 2)}{y^3}.$$
	Given that $y>0$, we have directly that $y^3>0$, $(e^y - 1)^{\alpha - 2}\geq 0$ and it is sufficient to show that
	$$h(y)=e^{2y} (\alpha^2y^2 - 2 \alpha y + 2) - e^y (\alpha y^2 - 2\alpha y + 4) + 2 \geq 0.$$

	Using the Taylor series expansion $e^y=\sum_{r=0}^{+\infty}\frac{y^r}{r!}$, which converges for all $y$, we find
	\begin{align*}
		h(y)&=\sum_{r=0}^{+\infty}\frac{(2x)^r}{r!} (\alpha^2y^2 - 2 \alpha y + 2) - \sum_{r=0}^{+\infty}\frac{y^r}{r!} (\alpha y^2 - 2\alpha y + 4) + 2\\
		&= \sum_{r=2}^{+\infty}y^{r}\left(\alpha\frac{\alpha 2^{r-2} -1}{(r-2)!} + \alpha \frac{2-2^{r}}{(r-1)!} + \frac{2^{r+1}-4}{r!} \right).
	\end{align*}
	To show that $h(y)\geq 0$ for $y>0$ and $\alpha \in [0,1]$, it is sufficient to show that for all $r\geq 2$
	\begin{align*}
		\alpha\frac{2^{r-2}\alpha  - 1}{(r-2)!}+\alpha\frac{2-2^{r}}{(r-1)!}+\frac{2^{r+1}-4}{r!} \geq 0\\
		\leftrightarrow \alpha^2 2^{r-2}r(r-1)+\alpha r(3-2^{r}-r)+2^{r+1}-4 &\geq 0.
	\end{align*}
	For $r=2$, this is verified as $2\alpha^2 -6\alpha +4 = 2(\alpha-1)(\alpha-2)$
	is always positive for $\alpha \in [0,1]$. For $r=3$, this is also verified as $\alpha^2 12 -\alpha 24 + 12  = 12 (\alpha-1)^2$
	is again positive for $\alpha \in [0,1]$. For $r\geq 4$, we have $r(r-1)\geq r^2/2$, $3-2^{r}-r\geq -2^{r+1}$ and $-2^{-r+4}\geq -1$ so that
	\begin{align*}
		\alpha^2 2^{r-2}r(r-1)+\alpha r(3-2^{r}-r)+2^{r+1}-4 
		&\geq 2^{r-2}( \alpha^2  r^2/2 -\alpha r 8+8-2^{-r+4})\\
		&\geq 2^{r-2}( \alpha^2  r^2/2 -\alpha r 8+7)
	\end{align*}
	which roots are in $8r\pm \sqrt{50} r$. Given that $8r - \sqrt{50} r \geq 0.92 r > 1 $ for $r\geq 2$, both roots are strictly larger than 1 and the term is positive for $\alpha \in [0,1]$, which concludes the proof.
\end{proof}

\vspace{-1em}\subsection{Proof of Lemmas \ref{lemma:PA_var_problem} and \ref{lemma:no_max_problem}}\label{Appendix:proof_lemma_no_max_problem}

One can first note that Lemma~\ref{lemma:PA_var_problem} is a particularization of Lemma~\ref{lemma:no_max_problem} when $P_0=E_{\mathrm{sleep}}(t)=0$. Hence, the result will be found as a specific case in the following. Under $\mathbf{(As1)}$-$\mathbf{(As2)}$, Problem~(\ref{eq:basic_problem}) can be rewritten as
\begin{align}
	\min_{N_{\mathrm{a}}} \frac{N_{\mathrm{a}}}{N}P_0  + \left[\min_{p_0,...,p_{N_{\mathrm{a}}-1}} \frac{\gamma}{N} \sum_{n=0}^{N_{\mathrm{a}-1}}  p_n^{\alpha}\right] +\frac{E_{\mathrm{sleep}}((N-N_\mathrm{a})T)}{NT}
\end{align}
which shows that only the second term depends on the power allocation, \textit{i.e.}, the term defined as $P_{\mathrm{ld}}$. From Lemma~\ref{lemma:uniform_should_be},  an optimal allocation needs to be uniform in the number of activated time slots. For a given $N_{\mathrm{a}}$, the rate constraint fixes the transmit power per active time slot
\begin{align*}
	p_n&=\left(2^{R\frac{N}{N_{\mathrm{a}}}}-1\right)\sigma^2 \text{ if } n=0,...,N_{\mathrm{a}}-1.
\end{align*}
Hence, the problem can be reformulated as finding the optimal number of active slots $N_{\mathrm{a}}$ that minimizes the consumed power, \textit{i.e.}, Problem~(\ref{eq:theor_no_max_problem}) in Lemma~\ref{lemma:no_max_problem}. Moreover, under $\mathbf{(As3)}-\mathbf{(As4)}$, the problem becomes
\begin{align*}
	\min_{N_{\mathrm{a}}} P_{\mathrm{sleep}}+ \frac{N_{\mathrm{a}}}{N}\left(P_0-P_{\mathrm{sleep}} + \gamma \sigma^{2\alpha} \left(2^{\frac{NR}{N_{\mathrm{a}}}}-1\right)^{\alpha}\right).
\end{align*}	
We define $x=\frac{RN}{N_{\mathrm{a}}}$ and relax the problem by considering $x$ as continuous
\begin{align*}
	\min_{x} f(x)&=\frac{P_0-P_{\mathrm{sleep}} + \gamma \sigma^{2\alpha} \left(2^{x}-1\right)^{\alpha}}{x}.
\end{align*}
From Lemma~\ref{lemma:convexity}, we know that $f(x)$ is convex. 
Given the definition of $x$ and the integer nature of $N_{\mathrm{a}}$, $x$ can only take discrete values in practice. Given the fact that $f(x)$ is convex, it is guaranteed that one of the neighboring possible values of $R_{\mathrm{a}} = \arg \min f(x)$ is optimal. As a result, the solution is given by either $N_{\mathrm{a}}=\ceil{\frac{RN}{R_{\mathrm{a}}}}$, $N_{\mathrm{a}}=\floor{\frac{RN}{R_{\mathrm{a}}}}$ or $N$ if $\frac{RN}{R_{\mathrm{a}}}>N$. Using the ceil-floor notation concludes the proof of Lemma~\ref{lemma:no_max_problem}. The above result can also be particularized to the problem of Lemma~\ref{lemma:PA_var_problem} by setting $P_0=E_{\mathrm{sleep}}(t)=0$ and the problem simplifies to
\begin{align*}
	\min_{x} f(x)&=\frac{\left(2^{x}-1\right)^{\alpha}}{x} \leftrightarrow \min_{y} g(y)=\frac{\left(e^{y}-1\right)^{\alpha}}{y}.
\end{align*}
where $y=x \log 2$. Setting its derivative to zero, we find
\begin{align*}
	\alpha e^{y} y&=e^{y}-1 \leftrightarrow y=W\left(-\alpha^{-1}e^{-\alpha^{-1}}\right)+\alpha^{-1} \leftrightarrow R_{\mathrm{a}}=W\left(-\alpha^{-1}e^{-\alpha^{-1}}\right)+\alpha^{-1}/\log 2,
\end{align*}
which concludes the proof of Lemma~\ref{lemma:PA_var_problem}.

%

\vspace{-1em}\subsection{Proof of Proposition~\ref{prop:optimal}}\label{Appendix:proof_prop_optimal}

The algorithm is initialized by the solution of the relaxed problem assuming that no max power constraints are binding. The solution is then given by~Theor.~\ref{lemma:no_max_problem}. If the solution is such that $p_n \leq P_{\mathrm{max}}, \forall n$, the problem is solved. On the other hand, if, for at least one time slot $p_n>P_{\mathrm{max}}$, at least one of the constraints must be binding. Hence, the algorithm allocates the maximal power $P_{\mathrm{max}}$ to one additional time slot. The rate constraint on the remaining time slots is then adapted. The power allocation is re-computed assuming that no max power constraint is binding on the remaining time slots not yet set to $P_{\mathrm{max}}$. Again, the solution is given by~Theor.~\ref{lemma:no_max_problem}. Again, the algorithm checks if the max constraint is verified. If yes, the algorithm has converged. If not, it enters a novel iteration and so on until convergence.

\vspace{-1em}\subsection{Proof of Theorem~\ref{theor:main_SU_result}}\label{Appendix:proof_theorem_main_SU_result}

Let us consider one by one four different cases. On the one hand, the exponential regime mentioned in the theorem where $R>\tilde{R}$ and i) $\tilde{R} = R_{\mathrm{max}}$ or ii) $\tilde{R} = R_{\mathrm{a}}$. On the other hand, the linear regime mentioned in the theorem where $R\leq\tilde{R}$ and iii) $\tilde{R} = R_{\mathrm{a}}$ or iv) $\tilde{R} = R_{\mathrm{max}}$.

\underline{Case i)}: This case is not applicable according to $\mathbf{(As5)}$ as it is unfeasible to have $R > R_{\mathrm{max}}$.

\underline{Case ii)}: The case together with $\mathbf{(As5)}$ implies $R_{\mathrm{max}} \leq  R > R_{\mathrm{a}}$. From Lemma~\ref{lemma:no_max_problem}, we can find that $N_{\mathrm{a}}=N$ and the corresponding power allocation, which is well feasible as it does not violate the $P_{\mathrm{max}}$ constraint. The exponential regime result of Theor.~\ref{theor:main_SU_result} is then found.

\underline{Case iii)}: This case implies $R \leq R_{\mathrm{a}} \leq R_{\mathrm{max}}$. Let us first consider that the maximal power constraint per time slot is not active such that we can use the result of Lemma~\ref{lemma:no_max_problem}. If $R \leq R_{\mathrm{a}}$, the optimal number and ratio of active time slots are
\begin{align*}
	N_{\mathrm{a}}&=\ceilfloor{{RN}/{R_{\mathrm{a}}}}=\round{\frac{RN}{R_{\mathrm{a}}}}+\epsilon_1=\frac{RN}{R_{\mathrm{a}}}+\epsilon_2\\
	\frac{N_{\mathrm{a}}}{N}&=\round{\frac{RN}{R_{\mathrm{a}}}}/N+\frac{\epsilon_1}{N}=\frac{R}{R_{\mathrm{a}}}+\frac{\epsilon_2}{N}.
\end{align*}
where $|\epsilon_1|<1$ and $|\epsilon_2|<1$. The optimal ratio ${N_{\mathrm{a}}}/{N}$ asymptotically converges to $R/R_{\mathrm{a}}$ and the same occurs if the optimal number of time slots $N_{\mathrm{a}}$ is approximated using a rounding operator instead of the ceil-floor operator. Using this result, as $N \rightarrow +\infty$, the optimal power allocation per active time slot of Lemma~\ref{lemma:no_max_problem} can be rewritten as 
\begin{align}
	p_n&=\left(2^{R\frac{N}{N_{\mathrm{a}}}}-1\right)\sigma^2=P_{\mathrm{a}} + O(1/N) \label{eq:asymptotic_opt}
\end{align}
where $P_{\mathrm{a}}=\left(2^{R_{\mathrm{a}}}-1\right)\sigma^2$ and $N_{\mathrm{a}}$ can be the ideal value $\ceilfloor{{RN}/{R_{\mathrm{a}}}}$ or its approximation using the rounding operator. Given that $R_{\mathrm{a}} \leq R_{\mathrm{max}}$, this allocation does not violate the $P_{\mathrm{max}}$ constraint and the result is feasible. The power consumption becomes
\begin{align*}
	P_{\mathrm{cons}}&=P_{\mathrm{sleep}}+ \frac{N_{\mathrm{a}}}{N}\left(P_0-P_{\mathrm{sleep}} + \gamma P_{\mathrm{a}}^{\alpha} + O(1/N)\right)=P_{\mathrm{sleep}}+R\frac{P_0-P_{\mathrm{sleep}} + \gamma P_{\mathrm{a}}^{\alpha}}{R_{\mathrm{a}}}+O(1/N).
\end{align*}

\underline{Case iv)}: This case implies $R \leq R_{\mathrm{max}} \leq R_{\mathrm{a}}$ and thus $P_{\mathrm{a}}>P_{\mathrm{max}}$ such that allocation~(\ref{eq:asymptotic_opt}) is not feasible. As an alternative, the iterative Algorithm~\ref{alg:max_constraint} can be used and simplified in the asymptotic regime. Indeed, as $N \rightarrow +\infty$, at each iteration, the algorithm allocates a constant power $P_{\mathrm{a}}$ (independent of $R$) to active time slots, not yet set to $P_{\mathrm{max}}$. At the convergence of the algorithm, the allocation will have approximately ${RN}/{R_{\mathrm{max}}}$ active time slots with maximal power $P_{\mathrm{max}}$ and rate $R_{\mathrm{max}}$. As a result, as $N \rightarrow +\infty$, at the optimum, $N_{\mathrm{a}}/N=\round{{R}/{R_{\mathrm{max}}}}+O(1/N)={{R}/{R_{\mathrm{max}}}}+O(1/N)$ and the allocation
\begin{align*}
	N_{\mathrm{a}}&=\round{\frac{RN}{R_{\mathrm{max}}}},\ p_n=P_{\mathrm{max}} \text{ for $n=0,...,N_{\mathrm{max}}$ }
\end{align*}
is asymptotically optimal and achieves a consumed power
\begin{align*}
    P_{\mathrm{cons}}&=P_{\mathrm{sleep}}+R\frac{P_0-P_{\mathrm{sleep}} + \gamma P_{\mathrm{a}}^{\alpha}}{R_{\mathrm{max}}}+O(1/N).
\end{align*}
Cases iii) and iv) can be written more compactly using the definitions of $\tilde{R}$ and $\tilde{P}$, giving the linear regime result of Theor.~\ref{theor:main_SU_result}.

\vspace{-1em}\subsection{Proof of Corollary~\ref{corol:max_EE}}\label{proof:corol_max_EE}
The results of Corol.~\ref{corol:max_EE} can be found by minimizing the \gls{ee} expression in the two regimes of Corol.~\ref{corol:EE_SE_trade_off}. In the regime where $\mathrm{SE} \leq \tilde{R}/(1+\alpha_{\mathrm{rol}})$, it is clear that the \gls{ee} is maximized for the largest \gls{se}, \textit{i.e.}, when $\mathrm{SE} = \tilde{R}/(1+\alpha_{\mathrm{rol}})$. Moreover, if $R_{\mathrm{a}} \geq R_{\mathrm{max}}$, we have $\tilde{R}=R_{\mathrm{max}}$ and the second regime of Corol.~\ref{corol:EE_SE_trade_off} is not feasible. The optimal SE corresponds to the maximal SE, $\mathrm{SE}_{\mathrm{max}}=R_{\mathrm{max}}/(1+\alpha_{\mathrm{rol}})$. On the other hand, if $R_{\mathrm{a}} < R_{\mathrm{max}}$, $\tilde{R}=R_{\mathrm{a}}$ and the regime $\mathrm{SE}>\tilde{R}/(1+\alpha_{\mathrm{rol}})$ can be entered. The optimization over $\bar{R}$ then provides the optimum and can only improve the optimum as $\bar{R}$ is allowed to take value $R_{\mathrm{a}}$.

\vspace{-1em}\subsection{Proof of Theorem~\ref{theor:successive_sleep_mode}}\label{Appendix:proof_theorem_successive_sleep_mode}

It is direct to see that one should choose the optimal allocation among feasible sleep modes. Under $\mathbf{(As3)}$, the sleep energy consumption at time $t$ if sleep mode $s$ is used is $E_{\mathrm{sleep},s}(t)=E_{\mathrm{sleep}}(T_s)+(t-T_s)P_s$
where $E_{\mathrm{sleep}}(T_s)=\sum_{s'=0}^{s-1}P_{s'}(T_{s'+1}-T_{s'})$.\footnote{No deeper sleep mode than $s$ is considered even if $T_{s+1}<t$, which could decrease sleep energy consumption. Still, this does not affect the optimization result a deeper sleep mode will perform better and will be chosen instead
	.} The consumed power using sleep mode $s$ can then be written as
\begin{align*}
	P_{\mathrm{cons},s}&=\frac{\tilde{E}_s}{NT}+\frac{N_{\mathrm{a}}}{N}P_0 + \frac{\gamma}{N} \sum_{n=0}^{N_{\mathrm{a}-1}}  p_n^{\alpha} +\frac{N-N_{\mathrm{a}}}{N}P_{s}
\end{align*}
where $\tilde{E}_s=E_{\mathrm{sleep}}(T_s)-T_sP_s$. This form is similar to the one given in (\ref{eq:P_cons_As4}), under $\mathbf{(As4)}$. The sole differences are the presence of $P_{s}$ instead of $P_{\mathrm{sleep}}$ and the constant $\tilde{E}_s/(NT)$, which affects the cost function but does not impact the optimization. The result of Theorem~\ref{theor:main_SU_result} can then be used: uniform allocation among $N_{\mathrm{a},s}$ active mode is optimal. 
The only difference is the fact that the maximal value of $N_{\mathrm{a},s}$ is $N_{\mathrm{a},s}^+$ instead of $N$, so that the sleep duration is sufficient to enter mode $s$. As a result, we find $N_{\mathrm{a},s}=\round{\min\left(NR/\tilde{R}_s,N_{\mathrm{a},s}^+\right)}$.

\vspace{-1em}\subsection{Proof of Theorem~\ref{theor:TDMA_not_binding}}\label{Appendix:proof:theor_TDMA_not_binding}

If the constraint $\sum_{k=0}^{K-1} N_k\leq N$ is not binding, Problem~(\ref{eq:tdma_problem}) is fully decoupled between users and can be solved by solving for $k=0,...,K-1$ an independent per-user problem
\begin{align*}
	\min_{\substack{N_{k},p_{k,n}\\ n=0,...,N_k-1}} &\frac{N_{k}}{N}P_0 + \frac{\gamma}{N} \sum_{n=0}^{N_{k-1}}  p_{k,n}^{\alpha}+\frac{N-N_{k}}{N}P_{\mathrm{sleep}} \text{   s.t.   } \frac{1}{N}\sum_{n=0}^{N_{k}-1} \log_2\left(1+\frac{p_{k,n}}{\sigma_k^2}\right)=R_k
\end{align*}
so that the asymptotic solution of Theorem~\ref{theor:main_SU_result} can be used giving $N_{k}=\round{NR_k/\hat{R}_k}$. If the constraint $\sum_{k=0}^{K-1} N_k\leq N$ is not violated, this is the asymptotic solution of Problem~(\ref{eq:tdma_problem}). From Section~\ref{Appendix:proof_theorem_main_SU_result}, we know that, as $N\rightarrow +\infty$, $N_k/N=R_k/\hat{R}_k+O(1/N)$ and the constraint can thus be equivalently written as $\sum_{k=0}^{K-1} R_k/\hat{R}_k\leq 1 + O(1/N)$.

\vspace{-0.5em}
\section*{Acknowledgment}

The author would like to thank his colleagues from the DRAMCO-KU Leuven lab and Dr.~Pål Frenger for many fruitful discussions and valuable comments.

\ifCLASSOPTIONcaptionsoff
  \newpage
\fi



\bibliographystyle{IEEEtran}

\vspace{-0.5em}
\bibliography{IEEEabrv,refs}

\begin{thebibliography}{10}
\providecommand{\url}[1]{#1}
\csname url@samestyle\endcsname
\providecommand{\newblock}{\relax}
\providecommand{\bibinfo}[2]{#2}
\providecommand{\BIBentrySTDinterwordspacing}{\spaceskip=0pt\relax}
\providecommand{\BIBentryALTinterwordstretchfactor}{4}
\providecommand{\BIBentryALTinterwordspacing}{\spaceskip=\fontdimen2\font plus
\BIBentryALTinterwordstretchfactor\fontdimen3\font minus
  \fontdimen4\font\relax}
\providecommand{\BIBforeignlanguage}[2]{{%
\expandafter\ifx\csname l@#1\endcsname\relax
\typeout{** WARNING: IEEEtran.bst: No hyphenation pattern has been}%
\typeout{** loaded for the language `#1'. Using the pattern for}%
\typeout{** the default language instead.}%
\else
\language=\csname l@#1\endcsname
\fi
#2}}
\providecommand{\BIBdecl}{\relax}
\BIBdecl

\bibitem{malmodin_power_2020}
J.~Malmodin, ``\BIBforeignlanguage{en}{The power consumption of mobile and
  fixed network data services - {The} case of streaming video and downloading
  large files},'' \emph{\BIBforeignlanguage{en}{Electronics Goes Green 2020+}},
  p.~10, 2020.

\bibitem{ekudden_breaking_2020}
E.~Ekudden, ``\BIBforeignlanguage{en}{Breaking the energy curve},'' Ericsson,
  Tech. Rep., Mar. 2020.

\bibitem{Andersson2022}
C.~Andersson, J.~Bengtsson, G.~Byström, P.~Frenger, Y.~Jading, and
  M.~Nordenström, ``{Improving energy performance in 5G networks and
  beyond},'' \emph{Ericsson Technology Review}, no.~8, pp. 2--11, August 2022.

\bibitem{gruber_earth_2009}
M.~Gruber, O.~Blume \emph{et~al.}, ``{EARTH — Energy Aware Radio and Network
  Technologies},'' in \emph{2009 IEEE 20th International Symposium on Personal,
  Indoor and Mobile Radio Communications}, Sep. 2009, pp. 1--5.

\bibitem{3gpp.38.864}
3GPP, ``{3rd Generation Partnership Project; Technical Specification Group
  Radio Access Network; Study on network energy savings for NR (Release 18)},''
  {3rd Generation Partnership Project (3GPP)}, Tech. Rep. 38.864, Dec. 2022,
  version 18.0.0.

\bibitem{huawei_green_5G_white_paper}
``{Green 5G White Paper},'' Huawei, Tech. Rep., Oct. 2021.

\bibitem{dahlman20205g}
E.~Dahlman, S.~Parkvall, and J.~Sköld, \emph{{5G NR: The next generation
  wireless access technology}}.\hskip 1em plus 0.5em minus 0.4em\relax Academic
  Press, 2020.

\bibitem{Lopez2022}
D.~L{\'o}pez-P{\'e}rez, A.~De~Domenico \emph{et~al.}, ``{A Survey on 5G Radio
  Access Network Energy Efficiency: Massive MIMO, Lean Carrier Design, Sleep
  Modes, and Machine Learning},'' \emph{IEEE Comm. Surv. \& Tut.}, vol.~24,
  no.~1, pp. 653--697, 2022.

\bibitem{zhang_fundamental_2017}
S.~Zhang, Q.~Wu, S.~Xu, and G.~Y. Li, ``\BIBforeignlanguage{en}{Fundamental
  {Green} {Tradeoffs}: {Progresses}, {Challenges}, and {Impacts} on {5G}
  {Networks}},'' \emph{\BIBforeignlanguage{en}{IEEE Communications Surveys \&
  Tutorials}}, vol.~19, no.~1, pp. 33--56, 2017.

\bibitem{tombaz_energy_2015}
S.~Tombaz, P.~Frenger, F.~Athley, E.~Semaan, C.~Tidestav, and A.~Furuskar,
  ``Energy {Performance} of {5G}-{NX} {Wireless} {Access} {Utilizing} {Massive}
  {Beamforming} and an {Ultra}-{Lean} {System} {Design},'' in \emph{2015 {IEEE}
  {Global} {Communications} {Conference} ({GLOBECOM})}, Dec. 2015, pp. 1--7.

\bibitem{lahdekorpi_energy_2017}
P.~Lahdekorpi, M.~Hronec, P.~Jolma, and J.~Moilanen,
  ``\BIBforeignlanguage{en}{Energy efficiency of {5G} mobile networks with base
  station sleep modes},'' in \emph{\BIBforeignlanguage{en}{2017 {IEEE}
  {Conference} on {Standards} for {Communications} and {Networking}}},
  Helsinki, Sep. 2017, pp. 163--168.

\bibitem{matalatala_simulations_2018}
M.~Matalatala, M.~Deruyck \emph{et~al.}, ``Simulations of beamforming
  performance and energy efficiency for {5G} mm-wave cellular networks,'' in
  \emph{2018 {IEEE} {Wireless} {Communications} and {Networking} {Conference}
  ({WCNC})}, Apr. 2018, pp. 1--6.

\bibitem{frenger_energy_2019}
P.~Frenger and K.~W. Helmersson, ``Energy {Efficient} {5G} {NR}
  {Street}-{Macro} {Deployment} in a {Dense} {Urban} {Scenario},'' in
  \emph{2019 {IEEE} {Global} {Communications} {Conference} ({GLOBECOM})}, Dec.
  2019, pp. 1--6.

\bibitem{frenger_more_2019}
P.~Frenger and R.~Tano, ``More {Capacity} and {Less} {Power}: {How} {5G} {NR}
  {Can} {Reduce} {Network} {Energy} {Consumption},'' in \emph{2019 {IEEE} 89th
  {Vehicular} {Technology} {Conference} ({VTC2019}-{Spring})}, Apr. 2019, pp.
  1--5.

\bibitem{golard_evaluation_2022}
L.~Golard, J.~Louveaux, and D.~Bol, ``\BIBforeignlanguage{en}{Evaluation and
  projection of {4G} and {5G} {RAN} energy footprints: the case of {Belgium}
  for 2020–2025},'' \emph{\BIBforeignlanguage{en}{Annals of
  Telecommunications}}, Nov. 2022.

\bibitem{salem_traffic-aware_2019}
F.~E. Salem, T.~Chahed, Z.~Altman, and A.~Gati,
  ``\BIBforeignlanguage{en}{Traffic-aware {Advanced} {Sleep} {Modes} management
  in {5G} networks},'' in \emph{\BIBforeignlanguage{en}{2019 {IEEE} {Wireless}
  {Communications} and {Networking} {Conference} ({WCNC})}}, Marrakesh, Apr.
  2019, pp. 1--6.

\bibitem{piovesan_forecasting_2021}
N.~Piovesan, A.~De~Domenico, M.~Bernabe \emph{et~al.},
  ``\BIBforeignlanguage{en}{Forecasting {Mobile} {Traffic} to {Achieve}
  {Greener} {5G} {Networks}: {When} {Machine} {Learning} is {Key}},'' in
  \emph{\BIBforeignlanguage{en}{2021 {IEEE} 22nd {International} {Workshop} on
  {Signal} {Processing} {Advances} in {Wireless} {Communications} ({SPAWC})}},
  2021, pp. 276--280.

\bibitem{cheng_massive_2015}
H.~V. Cheng, D.~Persson, E.~Bjornson, and E.~G. Larsson,
  ``\BIBforeignlanguage{en}{Massive {MIMO} at night: {On} the operation of
  massive {MIMO} in low traffic scenarios},'' in
  \emph{\BIBforeignlanguage{en}{2015 {IEEE} {International} {Conference} on
  {Communications} ({ICC})}}, London, 2015, pp. 1697--1702.

\bibitem{senel_joint_2019}
K.~Senel, E.~Björnson, and E.~G. Larsson, ``Joint {Transmit} and {Circuit}
  {Power} {Minimization} in {Massive} {MIMO} {With} {Downlink} {SINR}
  {Constraints}: {When} to {Turn} on {Massive} {MIMO}?'' \emph{IEEE Trans.
  Wireless Commun.}, vol.~18, no.~3, pp. 1834--1846, Mar. 2019.

\bibitem{bjornson_optimal_2015}
E.~Bjornson, L.~Sanguinetti, J.~Hoydis, and M.~Debbah,
  ``\BIBforeignlanguage{en}{Optimal {Design} of {Energy}-{Efficient}
  {Multi}-{User} {MIMO} {Systems}: {Is} {Massive} {MIMO} the {Answer}?}''
  \emph{\BIBforeignlanguage{en}{IEEE Trans. Wireless Commun.}}, vol.~14, no.~6,
  pp. 3059--3075, Jun. 2015.

\bibitem{marinello_antenna_2020}
J.~C. Marinello, T.~Abrao, A.~Amiri, E.~de~Carvalho, and P.~Popovski,
  ``\BIBforeignlanguage{en}{Antenna {Selection} for {Improving} {Energy}
  {Efficiency} in {XL}-{MIMO} {Systems}},'' \emph{\BIBforeignlanguage{en}{IEEE
  Transactions on Vehicular Technology}}, vol.~69, no.~11, pp.
  13\,305--13\,318, Nov. 2020.

\bibitem{peschiera_linear_2022}
E.~Peschiera and F.~Rottenberg, ``\BIBforeignlanguage{en}{Linear {Precoder}
  {Design} in {Massive} {MIMO} under {Realistic} {Power} {Amplifier}
  {Consumption} {Constraint}},'' in \emph{\BIBforeignlanguage{en}{2022 {Joint}
  {WIC}/{IEEE} {Symposium} on {Information} {Theory} and {Signal} {Processing}
  in the {Benelux}}}, 2022, p.~5.

\bibitem{budzisz_dynamic_2014}
{\L}.~Budzisz, F.~Ganji, Rizzo \emph{et~al.}, ``Dynamic {Resource}
  {Provisioning} for {Energy} {Efficiency} in {Wireless} {Access} {Networks}:
  {A} {Survey} and an {Outlook},'' \emph{IEEE Communications Surveys \&
  Tutorials}, vol.~16, no.~4, pp. 2259--2285, 2014.

\bibitem{chen_fundamental_2011}
Y.~Chen, S.~Zhang, S.~Xu, and G.~Y. Li, ``\BIBforeignlanguage{en}{Fundamental
  trade-offs on green wireless networks},'' \emph{\BIBforeignlanguage{en}{IEEE
  Communications Magazine}}, vol.~49, no.~6, pp. 30--37, Jun. 2011.

\bibitem{jingon_joung_spectral_2014}
{Jingon Joung}, {Chin Keong Ho}, and {Sumei Sun},
  ``\BIBforeignlanguage{en}{Spectral {Efficiency} and {Energy} {Efficiency} of
  {OFDM} {Systems}: {Impact} of {Power} {Amplifiers} and {Countermeasures}},''
  \emph{\BIBforeignlanguage{en}{{IEEE} J. Sel. Areas Commun.}}, vol.~32, no.~2,
  pp. 208--220, Feb. 2014.

\bibitem{wu_green_2016}
J.~Wu, S.~Rangan, and H.~Zhang, Eds., \emph{\BIBforeignlanguage{en}{Green
  {Communications}: {Theoretical} {Fundamentals}, {Algorithms}, and
  {Applications}}}.\hskip 1em plus 0.5em minus 0.4em\relax CRC Press, Apr.
  2016.

\bibitem{wu_overview_2017}
Q.~Wu, G.~Y. Li, W.~Chen, D.~W.~K. Ng, and R.~Schober,
  ``\BIBforeignlanguage{en}{An {Overview} of {Sustainable} {Green} {5G}
  {Networks}},'' \emph{\BIBforeignlanguage{en}{IEEE Wireless Commun.}},
  vol.~24, no.~4, pp. 72--80, Aug. 2017.

\bibitem{debaillie_flexible_2015}
B.~Debaillie, C.~Desset, and F.~Louagie, ``A {Flexible} and {Future}-{Proof}
  {Power} {Model} for {Cellular} {Base} {Stations},'' in \emph{2015 {IEEE} 81st
  {Vehicular} {Technology} {Conference} ({VTC} {Spring})}, May 2015, pp. 1--7.

\bibitem{auer_how_2011}
G.~Auer, V.~Giannini, C.~Desset, I.~Godor, P.~Skillermark, M.~Olsson
  \emph{et~al.}, ``\BIBforeignlanguage{en}{How much energy is needed to run a
  wireless network?}'' \emph{\BIBforeignlanguage{en}{IEEE Wireless
  Communications}}, vol.~18, no.~5, pp. 40--49, Oct. 2011.

\bibitem{cripps_rf_2006}
S.~C. Cripps, \emph{\BIBforeignlanguage{en}{{RF} power amplifiers for wireless
  communications}}, 2nd~ed., ser. Artech {House} microwave library.\hskip 1em
  plus 0.5em minus 0.4em\relax Boston, Mass.: Artech House, 2006.

\bibitem{persson_amplifier-aware_2013}
D.~Persson, T.~Eriksson, and E.~G. Larsson,
  ``\BIBforeignlanguage{en}{Amplifier-{Aware} {Multiple}-{Input}
  {Multiple}-{Output} {Power} {Allocation}},''
  \emph{\BIBforeignlanguage{en}{IEEE Communications Letters}}, vol.~17, no.~6,
  pp. 1112--1115, Jun. 2013.

\bibitem{grebennikov_rf_2015}
A.~Grebennikov, \emph{\BIBforeignlanguage{en}{{RF} and microwave power
  amplifier design}}.\hskip 1em plus 0.5em minus 0.4em\relax McGraw-Hill
  Education, 2015.

\bibitem{mikami_efficiency_2007}
S.~Mikami, T.~Takeuchi \emph{et~al.}, ``\BIBforeignlanguage{en}{An {Efficiency}
  {Degradation} {Model} of {Power} {Amplifier} and the {Impact} against
  {Transmission} {Power} {Control} for {Wireless} {Sensor} {Networks}},'' in
  \emph{\BIBforeignlanguage{en}{2007 {IEEE} {Radio} and {Wireless}
  {Symposium}}}, Long Beach, CA, USA, 2007, pp. 447--450.

\bibitem{hossain_energy_2018}
M.~M.~A. Hossain, C.~Cavdar, E.~Bjornson, and R.~Jantti,
  ``\BIBforeignlanguage{en}{Energy {Saving} {Game} for {Massive} {MIMO}:
  {Coping} {With} {Daily} {Load} {Variation}},''
  \emph{\BIBforeignlanguage{en}{IEEE Transactions on Vehicular Technology}},
  vol.~67, no.~3, pp. 2301--2313, Mar. 2018.

\bibitem{hossain_impact_2011}
M.~M.~A. Hossain and R.~Jantti, ``\BIBforeignlanguage{en}{Impact of efficient
  power amplifiers in wireless access},'' in \emph{\BIBforeignlanguage{en}{2011
  {IEEE} {Online} {Conference} on {Green} {Communications}}}, Sep. 2011, pp.
  36--40.

\bibitem{Ericsson_3GPP_document_for_discussion}
{Ericsson}, ``{Modeling and evaluation methodology for network energy
  saving},'' Tech. Rep. 3GPP TSG RAN WG1 \#109-e, May 2022, {R1-2204881}.

\bibitem{islam_enabling_2023}
T.~Islam, D.~Lee, and S.~S. Lim, ``\BIBforeignlanguage{en}{Enabling {Network}
  {Power} {Savings} in {5G}-{Advanced} and {Beyond}},''
  \emph{\BIBforeignlanguage{en}{IEEE Journal on Selected Areas in
  Communications}}, vol.~41, no.~6, pp. 1888--1899, Jun. 2023.

\end{thebibliography}

\end{document}